\documentclass[10pt,a4paper]{article}

\usepackage{authblk}

\usepackage[numbers,sort&compress]{natbib}

\usepackage[a4paper,margin=1in]{geometry}
\usepackage{titlesec}

\let\coloneqq\relax

\usepackage[T1]{fontenc}

\usepackage{amsthm}
\usepackage{amssymb}
\usepackage{amsmath}
\DeclareMathOperator{\Tr}{Tr}
\usepackage{bbold}
\usepackage{bbm}
\usepackage{nonfloat}
\usepackage{xcolor}
\definecolor{myrefcolor}{rgb}{0.067,0.5,0.5}
\usepackage[
    breaklinks,
    pdftex,
    colorlinks=true,
    linkcolor=myrefcolor,
    citecolor=myrefcolor,
    urlcolor=myrefcolor
]{hyperref}
\usepackage[nameinlink,capitalise,noabbrev]{cleveref}
\crefname{thm}{Theorem}{Theorems}
\Crefname{thm}{Theorem}{Theorems}

\crefname{prop}{Proposition}{Propositions}
\Crefname{prop}{Proposition}{Propositions}

\crefname{lemma}{Lemma}{Lemmas}
\Crefname{lemma}{Lemma}{Lemmas}

\crefname{cor}{Corollary}{Corollaries}
\Crefname{cor}{Corollary}{Corollaries}

\crefname{Def}{Definition}{Definitions}
\Crefname{Def}{Definition}{Definitions}

\crefname{rem}{Remark}{Remarks}
\Crefname{rem}{Remark}{Remarks}

\crefname{ex}{Example}{Examples}
\Crefname{ex}{Example}{Examples}

\crefname{equation}{Eq.}{Eqs.}
\Crefname{equation}{Equation}{Equations}

\crefname{figure}{Fig.}{Figs.}
\Crefname{figure}{Figure}{Figures}

\crefname{section}{Sec.}{Secs.}
\Crefname{section}{Section}{Sections}
\usepackage{braket}
\usepackage{dsfont}
\usepackage{mathdots}
\usepackage{mathtools}
\usepackage{enumerate}
\usepackage[shortlabels]{enumitem}
\usepackage{csquotes}
\usepackage{stmaryrd}
\usepackage[cal=boondox]{mathalfa}
\usepackage{graphicx}
\usepackage{stackengine}
\usepackage{scalerel}
\usepackage{microtype}
\usepackage{array}
\usepackage{makecell}
\newcolumntype{x}[1]{>{\centering\arraybackslash}p{#1}}
\usepackage{tikz}
\usepackage{pgfplots}
\usetikzlibrary{shapes.geometric, shapes.misc, positioning, arrows, arrows.meta, decorations.pathreplacing, decorations.pathmorphing, patterns, angles, quotes, calc}
\usepackage{booktabs}
\usepackage{xfrac}
\usepackage{siunitx}
\usepackage{centernot}
\usepackage{comment}
\usepackage{chngcntr}
\usepackage{breakurl}
\def\clearthms#1{ \@for\tname:=#1\do{\cleartheorem\tname} }

\usepackage{thmtools}
\newtheorem*{thm*}{Theorem}

\newtheorem*{prop*}{Proposition}
\newtheorem{lemma}{Lemma}
\newtheorem*{lemma*}{Lemma}

\newtheorem*{cor*}{Corollary}

\newtheorem*{cj*}{Conjecture}

\newtheorem*{Def*}{Definition}

\newtheorem{remark}{Remark}

\makeatletter
\def\thmhead@plain#1#2#3{%
  \thmname{#1}\thmnumber{\@ifnotempty{#1}{ }\@upn{#2}}%
  \thmnote{ {\the\thm@notefont#3}}}
\let\thmhead\thmhead@plain
\makeatother

\newcommand{\bb}{\begin{equation}\begin{aligned}\hspace{0pt}}
\newcommand{\be}{\begin{equation}\begin{aligned}\hspace{0pt}}
\newcommand{\bbb}{\begin{equation*}\begin{aligned}}
\newcommand{\ee}{\end{aligned}\end{equation}}
\newcommand{\eee}{\end{aligned}\end{equation*}}
\newcommand*{\coloneqq}{\mathrel{\vcenter{\baselineskip0.5ex \lineskiplimit0pt \hbox{\scriptsize.}\hbox{\scriptsize.}}} =}

\newcommand{\ketbra}[1]{\ket{#1}\!\!\bra{#1}}

\DeclareMathAlphabet{\pazocal}{OMS}{zplm}{m}{n}

\DeclareMathOperator{\supp}{supp}

\newcommand{\lsmatrix}{\left(\begin{smallmatrix}}
\newcommand{\rsmatrix}{\end{smallmatrix}\right)}

\stackMath

\stackMath

\makeatletter
\newcommand*\rel@kern[1]{\kern#1\dimexpr\macc@kerna}
\newcommand*\widebar[1]{%
  \begingroup
  \def\mathaccent##1##2{%
    \rel@kern{0.8}%
    \overline{\rel@kern{-0.8}\macc@nucleus\rel@kern{0.2}}%
    \rel@kern{-0.2}%
  }%
  \macc@depth\@ne
  \let\math@bgroup\@empty \let\math@egroup\macc@set@skewchar
  \mathsurround\z@ \frozen@everymath{\mathgroup\macc@group\relax}%
  \macc@set@skewchar\relax
  \let\mathaccentV\macc@nested@a
  \macc@nested@a\relax111{#1}%
  \endgroup
}

\counterwithin*{equation}{part}
\counterwithin*{thm}{part}
\counterwithin*{figure}{part}

\tikzset{meter/.append style={draw, inner sep=10, rectangle, font=\vphantom{A}, minimum width=30, line width=.8, path picture={\draw[black] ([shift={(.1,.3)}]path picture bounding box.south west) to[bend left=50] ([shift={(-.1,.3)}]path picture bounding box.south east);\draw[black,-latex] ([shift={(0,.1)}]path picture bounding box.south) -- ([shift={(.3,-.1)}]path picture bounding box.north);}}}
\tikzset{roundnode/.append style={circle, draw=black, fill=gray!20, thick, minimum size=10mm}}
\tikzset{squarenode/.style={rectangle, draw=black, fill=none, thick, minimum size=10mm}}

\definecolor{Blues5seq1}{RGB}{239,243,255}
\definecolor{Blues5seq2}{RGB}{189,215,231}
\definecolor{Blues5seq3}{RGB}{107,174,214}
\definecolor{Blues5seq4}{RGB}{49,130,189}
\definecolor{Blues5seq5}{RGB}{8,81,156}

\definecolor{Greens5seq1}{RGB}{237,248,233}
\definecolor{Greens5seq2}{RGB}{186,228,179}
\definecolor{Greens5seq3}{RGB}{116,196,118}
\definecolor{Greens5seq4}{RGB}{49,163,84}
\definecolor{Greens5seq5}{RGB}{0,109,44}

\definecolor{Reds5seq1}{RGB}{254,229,217}
\definecolor{Reds5seq2}{RGB}{252,174,145}
\definecolor{Reds5seq3}{RGB}{251,106,74}
\definecolor{Reds5seq4}{RGB}{222,45,38}
\definecolor{Reds5seq5}{RGB}{165,15,21}

\usepackage{pgfplots}
\newtheorem{theorem}{Theorem}

\newtheorem{definition}{Definition}

\usepackage{amsmath}
\usepackage{algpseudocode}
\usepackage{algorithm}

\definecolor{antonio}{rgb}{.2,.5,.1}

\renewcommand{\ketbra}[2]{\ket{#1}\!\bra{#2}}
\usepackage{libertinus}

\allowdisplaybreaks

\begin{document}
\title{Efficient learning of Clifford disentanglers and typical\\ $t$-doped unitaries with exponentially more $T$ gates}
\author[1]{Gerard Aguilar}
\author[1]{Sofiene Jerbi}
\author[1]{Jens Eisert}
\author[1]{Lennart Bittel}
\affil[1]{\normalsize Dahlem Center for Complex Quantum Systems, Freie Universität Berlin,\protect\\ Arnimallee 14, 14195 Berlin, Germany}
\date{\vspace{-0.5em}\today}
\maketitle
\begin{abstract}
Highly entangled and highly non-stabilizer quantum states need not be hard to learn. We give efficient algorithms for testing and recovering hidden tensor-product structure in unknown pure state vectors of the form $\ket{\psi}=U_C\bigotimes_i\ket{\psi_i}$, where $U_C$ is an arbitrary unknown Clifford unitary. Although the Clifford can thoroughly scramble the visible product structure, we prove that the Bell distribution retains a characteristic family of quadratic symmetries. By simultaneously block-diagonalising these symmetries, our algorithms linearize the problem and manage to recover both a disentangling Clifford and the hidden partitions with polynomial sample and computational complexity. This may be viewed as an extension of the abelian StateHSP paradigm in which classical post-processing exposes genuinely quadratic structure. Applied to Choi states, the method yields efficient proper learning algorithms for typical $t$-doped Clifford unitaries in regimes containing exponentially more $T$ gates than previously accessible: the required condition fails only for an exponentially small fraction of circuits when $t\sim n$, and continues to hold for a constant fraction even when $t=2n$. Our framework also provides tools for compressing structured many-body Hamiltonians and suggests benchmarking protocols for encoded logical product states in the early fault-tolerant regime.

\end{abstract}

\tableofcontents

\section{Introduction}

Learning full descriptions or selected properties of quantum objects, such as quantum states and channels, lies at the intersection of two central research directions in quantum information. From the perspective of certification and benchmarking, learning provides scalable tools for verifying the correct functioning of quantum devices and characterising the states and processes they prepare \cite{kliesch2021theory,eisert2020quantum,gheorghiu2019verification}. From the perspective of computational learning theory, it asks which information contained in a quantum system can be extracted efficiently by a computationally bounded observer. This viewpoint helps identify the sources of quantum complexity and has fundamental consequences for our understanding of pseudorandomness and cryptography in quantum systems \cite{gheorghiu2019verification,ananth2022cryptography,ji2018pseudorandom}.

 In full generality, however, this task becomes intractable: quantum state tomography or process tomography require resources that scale exponentially in system size \cite{cramer2010efficient,anshu2024survey}. As a result, much of quantum learning theory is concerned with identifying structured classes of states and unitaries for which efficient characterisation becomes possible \cite{cramer2010efficient,leone2024learningstates,montanaro2017learning,mele2025efficient,chen2026optimal,bouland2025state,hinsche2026abelian,arunachalam2022optimal}. Among these, some well-known families of learnable quantum states include low-magic \cite{leone2024learningstates} and low-entanglement states \cite{cramer2010efficient}. These are both characterised by having an efficient structured representation: low-magic states admit sparse descriptions in the Pauli basis, while low entanglement states admit low-rank tensor network descriptions. Moreover, in the limit case, namely for stabilizers and product states, both structures induce symmetries in the state which allow to map the learning problem to an \textit{abelian} hidden subgroup problem for efficient recoverability \cite{hinsche2026abelian}.

This paradigm, however, leaves out highly structured families of states that admit succinct descriptions while exhibiting nearly maximal entanglement and extensive non-stabilizerness. This raises a question closely connected to quantum pseudorandomness: when does efficiently generated structure survive scrambling and remain accessible to a computationally bounded observer? Pseudorandom quantum states are efficiently generated states that nevertheless remain computationally indistinguishable from Haar-random states, even given polynomially many copies \cite{ji2018pseudorandom,ananth2022cryptography}. We will show that previously assumed relevant structures such as entanglement, magic, or the absence of an immediately visible tensor-product structure are not by themselves sufficient indications of pseudorandomness. What matters is instead whether the state retains efficiently detectable invariants from which its hidden structure can be reconstructed.

In particular, in this work we will look at the problem of testing and learning state vectors of the form $\ket{\psi}=U_C\bigotimes_i\ket{\psi_i}_{c_i}$, for potentially arbitrary $\ket{\psi_i}$. Such states can be highly resourceful under any common measure, and in particular can showcase both high entanglement and high magic well beyond the classically simulable or accessible regime. We will show that they are, however, highly symmetric and that retrieving these symmetries can also be understood as an instance of an \textit{abelian} hidden subgroup problem. We will then show that this can be efficiently solved by performing Bell sampling and appropriate classical post-processing of their Bell distribution. Bell sampling has already proven to be a useful primitive in quantum learning theory, where it has been used, for instance, to perform fidelity estimation, learn unknown stabilizer states, or to learn hidden product structures \cite{montanaro2017learning,hangleiter2024bell,hinsche2026abelian}. It has also recently proved useful in studying quantum weight and shadow enumerators \cite{miller2026experimental}, and further extensions of it, such as the Bell-difference distribution, have been shown to play a central role in tasks such as stabilizer testing \cite{gross2021schur,grewal2024improved,grewal2025efficient}. Having recovered these symmetries, the genuinely quadratic behaviour of the problem requires a more involved analysis of the distribution's properties than in the standard hidden-cut problem. We will show that, through appropriate classical post-processing of the samples, one is able to efficiently recover a disentangling Clifford $U_C$, a partition $\{c_i\}_i$, and even a full description of the state whenever the $\{\psi_i\}_i$ themselves are amenable to efficient learning.

A family of unitaries that can lie in this highly magic, but also highly structured regime are $T$-doped unitaries with up to $2n$ many $T$ gates. Indeed, we show that in many cases, such unitaries have Choi states of the form we study here, and can therefore be efficiently learnt. Learning $T$-doped unitaries as well as states generated by them has been an object of intense study in recent years \cite{leone2024learning,leone2024learningstates,grewal2025efficient,grewal2024improved,chia2024efficient,chen2025stabilizer,arunachalam2026learning} as they arise as a natural circuit class in the near-term fault-tolerant setting \cite{bravyi2005universal,o2017quantum,campbell2017roads}, provide a controlled ansatz for studying the onset of
chaotic behavior in many-body dynamics \cite{zhou2020single,bejan2024dynamical,gu2024doped}, and they can also be used to generate high order approximate unitary designs at a small cost of magic \cite{Homeopathy}.
Efficient learning algorithms have nevertheless always been restricted to low resource regimes with $O(\log(n))$ many $T$ gates. We show here, however, that in a regime with exponentially more $T$ gates, namely for $\sim n$ many, only an exponentially small fraction of these circuits cannot be efficiently learnt with our algorithm. Even at a higher regime with $\sim 2n$ still a constant fraction of unitaries can be learnt. Besides theoretical interest, we believe that these results bear significant relevance for practical tasks such as benchmarking error corrected settings or 
compression of many-body Hamiltonians and states.

\subsection{Setting}

As stated, our goal can be phrased as finding efficient methods to construct exact Clifford disentanglers for arbitrary states. That is, given some generic unknown state vector $\ket{\psi}$, find a description of the form
\begin{equation}
    \label{eqn:clifford_dressed}\ket\psi=U_C\left(\bigotimes_{i=1}^m\ket{\psi_i}_{C_i}\right),
\end{equation}
in case it exists, where $C_1\sqcup\dots\sqcup C_m=[n]$ is some partition of the physical qubits and $U_C$ 
some arbitrary Clifford unitary. To that end, we will define two independent tasks. First, a testing problem, such that given copies of the unknown state vector $\ket{\psi}$, we can determine whether it can be written as in \Cref{eqn:clifford_dressed}, and then the corresponding learning task, where the goal is to recover this description. We now give a formal definition for both tasks.

\begin{definition}[{[Clifford-scrambled hidden-cut testing problem]}]
    \label{def:testing_task}
    Let $\delta,\epsilon> 0$, $0<\eta<1$. Let $\ket{\psi}$ be an unknown pure state on $n>2$ qubits. Suppose we are promised that $\ket{\psi}$ falls into one of two situations:
    \begin{itemize}
        \item \textbf{Case A:} 
        \begin{align}
            \exists U_C,\exists A\subset[n], B =[n]\backslash A : \ket{\psi}=U_C\bigl(\ket{\psi_A}\otimes \ket{\psi_B}\bigr)
        \end{align}
        moreover, for any $P\in\mathcal P_n$, the state satisfies one of the two conditions
        \begin{equation}
            |\langle P\rangle_{\psi}|=1\quad\text{or}\quad|\langle P\rangle_{\psi}|\leq 1-\eta.
        \end{equation}
        \item \textbf{Case B:} 
        \begin{align}
            \forall U_C, \forall A\subset[n], B =[n]\backslash A: \left\|\ket{\psi}-U_C\bigl(\ket{\psi_A}\otimes \ket{\psi_B}\bigr)\right\|_2\geq \epsilon,
        \end{align}
        where we express the distance between the states using the 2-norm on their state vectors after minimizing over global phases (see \Cref{remark:distance_measure})
        \begin{equation}
           \min_{\theta\in[0,2\pi)}\left\|\ket{\psi}-e^{i\theta}\ket{\phi}\right\|_2
            =\sqrt{2-2|\langle\psi|\phi\rangle|}.
        \end{equation}
    \end{itemize}
    The problem is then to determine whether the state is in Case A or in Case B with probability at least $1-\delta$ given copies of $\ket{\psi}$.
\end{definition}

For the testing task above, we choose to define it for a single cut $A|B$, but it can trivially be reformulated for a more generic case with multiple cuts. In fact, for the learning problem, we will want that the algorithm finds the finest description possible, and hence we will define it for the general multi-cut setting.

\begin{definition}[{[Clifford-scrambled hidden-cut learning problem]}]
    \label{def:learning_task}
    Let $\delta,\epsilon>0$, $0<\eta<1$. Let $\ket\psi$ be a pure state on $n>2$ qubits and $m\in[2,n]$ fixed . Suppose that $\ket{\psi}$ is of the form
    \begin{equation}
        \ket{\psi}=U_C\left(\bigotimes_{i=1}^m\ket{\psi_i}_{c_i}\right)
    \end{equation}
    for some partition $c_1\sqcup\dots \sqcup c_m=[n]$, and moreover it satisfies that for any $P\in\mathcal P_n$ 
    \begin{equation}
            |\langle P\rangle_{\psi}|=1\quad\text{or}\quad|\langle P\rangle_{\psi}|\leq 1-\eta.
    \end{equation}
    Then the learning problem asks, given copies of $\ket{\psi}$ to find a set partition $\tilde c_1\sqcup \dots\sqcup \tilde c_{m'}$, as well as a Clifford $U_{\widetilde C}$ such that 
    \begin{equation}
       \left\|U_{\widetilde C}\left(\bigotimes_{i=1}^{m'}\ket{\tilde\psi_i}_{\tilde c_i}\right)-\ket\psi\right\|_2\leq \epsilon.
    \end{equation}
    for $m'\geq m$ with probability at least $1-\delta$.
\end{definition}

In this work, we show that these two tasks can be solved efficiently both in the number of samples we require, as well as in runtime. 

\begin{remark}[{[Stabilizer-gap promise]}]
    Note that both in \Cref{def:testing_task,def:learning_task}, we demand a promise gap for the presence of stabilizers in the unknown state vector $\ket{\psi}.$ We discuss this requirement later on in \Cref{rem:stab_gap}.
\end{remark}

\subsection{Our results}

The main result of this work is to give efficient runtime algorithms that solve the problems in \Cref{def:testing_task,def:learning_task} given a polynomial number of copies of the state. Our algorithms will rely on Bell sampling. The main difficulty to overcome in our setting is that the action of the unknown Clifford seemingly scrambles the Bell support of the state almost arbitrarily, and hence washes off any kind of structure in the distribution.

The key point of our work is to show that it is not the case, and that state vectors 
of the form $C\bigotimes_i\ket{\psi_i}$ remain structured in a very specific sense.
In order to show this, we introduce a framework to study quadratic symmetries of the Bell distribution of a state. That is, we look at all structures of a state $\psi$ that can be expressed by 
\begin{equation}\label{eq:Bell-symmetry}
    q^TMq=c,\quad\forall q\in\supp(\mathsf b_\rho)
\end{equation}
where $\mathsf b_\rho$ is the Bell distribution of $\psi$, $q$ is the symplectic representation of the Paulis that label the corresponding Bell states, $c\in\mathbb F_2$, and $M\in M_{2n}(\mathbb F_2)$. These structures then give rise to some generalised purity equations, of the 
form
\begin{equation}
    1=\sum_{q}\mathsf b_\rho(q)(-1)^{q^TMq+c}.
\end{equation}

In this framework, standard subsystem purity equations for product state vectors $\ket{\psi_A}\otimes\ket{\psi_B}$ 
can be seen as arising from Bell symmetries 
\begin{equation}
    \label{eqn:product_symm}
    M_A=\Omega_+^A=\bigoplus_{i\in A}\begin{pmatrix}
        0 & 1\\ 0&0
    \end{pmatrix}_i,\quad M_B=\Omega_+^B=\bigoplus_{i\in B}\begin{pmatrix}
        0 & 1\\ 0&0
    \end{pmatrix}_i,
\end{equation}
which coincides with the fact 
that purity shows up in the Bell distribution as the impossibility of sampling an odd number of singlet states \cite{miller2026experimental}. We then show that the action of an arbitrary Clifford on such a state simply transforms these constraints into other quadratic symmetries via some affine 
transformation.

\begin{restatable}[{[Clifford action on quadratic Bell symmetries]}]{lemma}{mainlemma}
    \label{lemma:clifford_effect_preliminaries}
    Given some state $\rho$ such that its Bell distribution satisfies a quadratic symmetry $M$ in the form of \Cref{eq:Bell-symmetry}. Acting on $\rho$ with some Clifford $U_C$ maps $M$ to another quadratic symmetry $M'=C^{-T}(M+\lambda_C)C^{-1}$, where $C$ represents the symplectic action of $U_C$ and $\lambda_C$ consists of an affine shift depending only on $U_C$. In particular, state vectors of the form 
    $\ket{\psi}=U_C\ket{\psi_A}\otimes\ket{\psi_B}$ satisfy symmetries $M'_A=C^{-T}(\Omega_+^A+\lambda_C)C^{-1}$, $M'_B=C^{-T}(\Omega_+^B+\lambda_C)C^{-1}$
\end{restatable}

While the states that we are interested in are then still highly structured, recovering unknown quadratic symmetries is not necessarily an easy task. For standard product states, for example, even if the symmetries in \Cref{eqn:product_symm} can be written as quadratic forms, they essentially behave linearly due to their block structure, which makes the problem significantly easier. The action of $C$ on $\Omega_+^A,\Omega_+^B$ can break this linear behaviour, mapping them to some unknown $M_A,M_B$, containing genuinely quadratic terms coupling the two subsystems. Thus in principle recovering them could require searching the full space of global quadratic forms. In order to solve this, we lift the Bell distribution to a symmetric subspace $V_\rho=\{q^{\otimes 2}|q\in\supp(\mathsf b_\rho)\}\subseteq\mathbb F_2^{2n}\otimes\mathbb F_2^{2n}$, where the quadratic constraint can be linearized
\begin{equation}
     q^TMq=0\iff \langle\langle M|q^{\otimes 2}\rangle=0,
\end{equation}
where $|M\rangle\rangle$ is the vectorization of $M.$ Then, solving a system of linear equations allows us to recover a basis for
\begin{equation}
    \text{Ker}_2(\rho)
    =
    \left\{
        M\in\mathbb F_2^{2n\times 2n}
        :
        q^T M q=c_M\quad
        \forall
        q\in\supp(\mathsf b_\rho)
    \right\},
\end{equation}
that is, the vector space of all quadratic symmetries of the state. 

While a naive algorithm could run a brute force search within $\text{Ker}_2(\rho)$ in order to verify the presence of some symmetry $M$ of the form in \Cref{lemma:clifford_effect_preliminaries}, that would take exponential runtime. In order to overcome this, we can exploit some further structure of the Bell distribution of our class of states, beyond the fact that they satisfy some specific symmetries. To state this, we first prove a statement about $\text{Ker}_2(\psi_A\otimes\psi_B)$, which immediately leads to a similar structure for state vectors of the form $U_C\ket{\psi_A}\otimes\ket{\psi_B}.$

\begin{figure*}[t]
       \centering
       \includegraphics[scale=0.23]{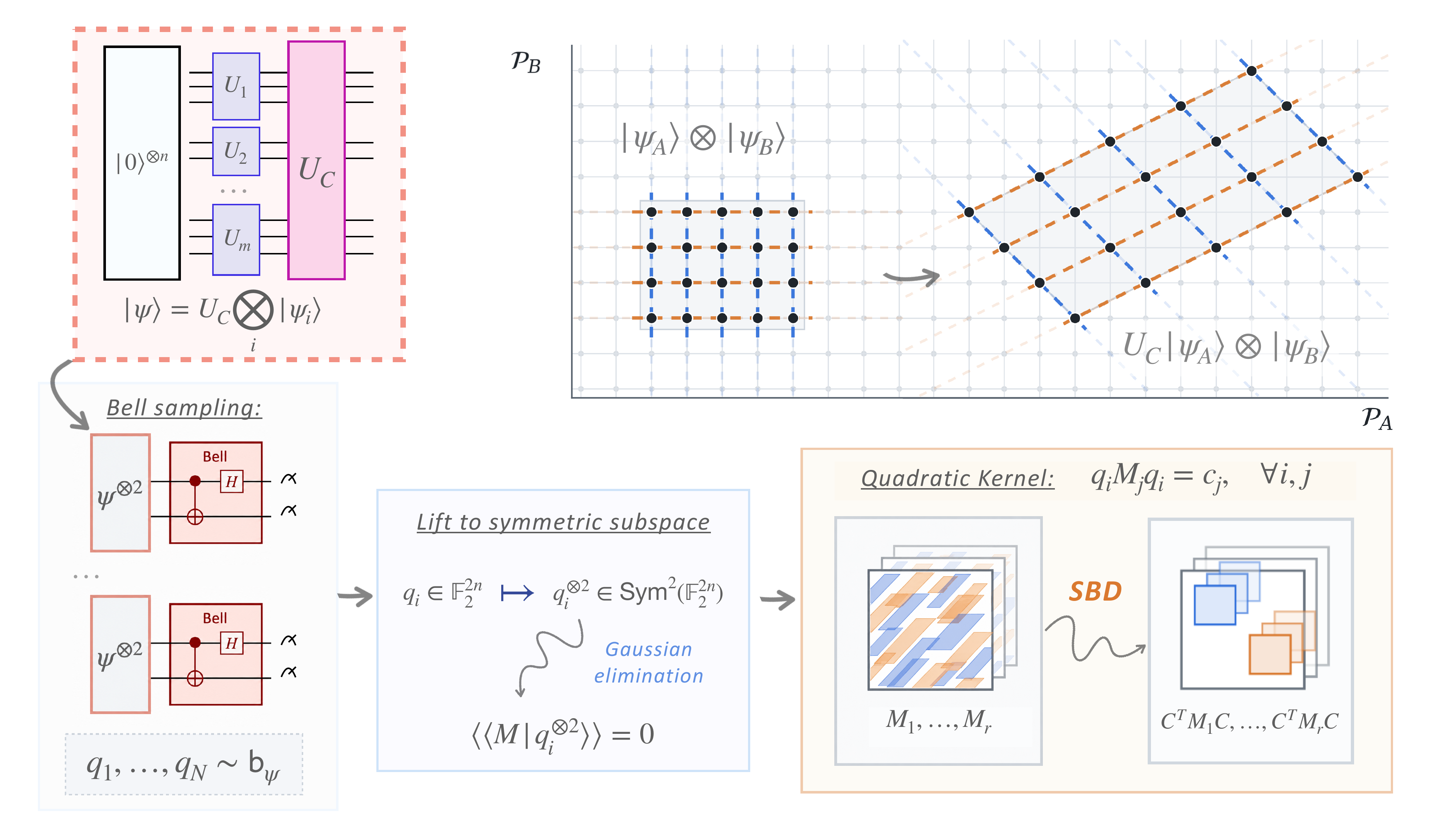}
       \caption{Graphical overview of the protocol in \Cref{alg:hidden-clifford-test}. From $m$ Bell samples $\{q_i\}_i$ of an unknown state vector $\ket{\psi}$, Gaussian elimination on the lifted vectors $q_i^{\otimes 2}\in\mathbb F_2^{2n}\otimes\mathbb F_2^{2n}$, yields a basis of the quadratic kernel $\text{Ker}_{2,m}(\psi)$ of the distribution, consisting of quadratic forms $M$ satisfying $q_i^TMq_i=c$ for every observed outcome. For state vectors of the form $\ket{\psi}=U_C\bigotimes_{i=1}^m\ket{\psi_i}_{c_i}$ this kernel has a common hidden block structure. Indeed, before the action of $U_C$, the Bell distribution factorizes across the product decomposition, forcing its quadratic symmetries to respect the corresponding blocks. The unknown Clifford scrambles both the distribution and its symmetries through an affine transformation, but preserves the common block structure in the associated Clifford frame. Performing \emph{simultaneous block-diagonalisation} (SBD) on the basis of $\text{Ker}_{2,m}(\psi)$ reveals the hidden partition and a Clifford transformation that disentangles the state.
       }
        \label{fig:schematic_plot}
   \end{figure*}

\begin{restatable}[{[Block structure of quadratic Bell symmetries]}]{theorem}{productblock}
    \label{thm:product_block}
    Let $\ket{\psi}=\ket{\psi_A}\otimes\ket{\psi_B}$ be a product state across the cut $A|B$. Assume that the state has no stabilizer. Then every quadratic Bell symmetry $M\in\text{Ker}_2(\psi)$ is block-diagonal across the cut $A|B$. Moreover, the two blocks define local quadratic symmetries on the two factors $\ket{\psi_A},\ket{\psi_B}.$
\end{restatable}

\begin{restatable}[{[Clifford-scrambled block diagonalisation]}]{cor}{blockdiagonalisable}
     \label{cor:block-diagonalisable}
    Let $\ket{\psi}=U_C\ket{\psi_A}\ket{\psi_B}$ with no stabilizers. Then every quadratic Bell symmetry $M\in\text{Ker}_2(\psi)$ can be simultaneously block-diagonalised across the cut $A|B$ via congruence with the symplectic transformation given by $C$. 
\end{restatable}

In other words, this property states that the purity of subsystems $A$ and $B$ does not only mean that $\Omega_+^A,\Omega_+^B\in\text{Ker}_2(\psi_A\otimes\psi_B)$, but also that up to the presence of stabilizers, all other symmetries of the state necessarily arise from symmetries of the subsystems. At the level of our Clifford dressed states, this then means that in some basis, the Bell distribution still factorises and this property is preserved.

Importantly, we can always assume the no-stabilizer condition to be satisfied, since our algorithm runs a stabilizer learning protocol on the Bell samples first and finds a Clifford $U_{C'}$ such that
\begin{equation}
    U_{C'}\ket{\psi}=\ket{\psi'}\otimes\ket{0}^{\otimes k},
\end{equation}
for $\ket{\psi'}$ with no stabilizers.

Our computationally efficient algorithm exploits these properties and proceeds by performing \emph{simultaneous block-diagonalisation} (SBD) of $\text{Ker}_2(\psi)$ in order to find potential structure arising from pure subsystems. We show that, by treating the symmetries as linear operators, we can phrase the SBD problem in terms of a module decomposition, allowing us to use tools from finite algebras in order to tackle this \cite{chistov1997polynomial,ronyai1990computing,wilson2008finding}. 

We then first show that for some $\ket{\psi}=U_C\ket{\psi_A}\otimes\ket
\psi_B$, we do not need to find the finest SBD of $\text{Ker}_2(\psi)$ in order to recover $A,B,U_C.$

\begin{lemma}[{[Efficient simultaneous block diagonalisation} (Informal version of \Cref{lemma:bilinear_to_linear,lemma:simple_comp,lemma:efficient_SBD})]
    \label{lemma:SBD_summary}
    Given $\text{Ker}_2(\psi)=\text{span}_{\mathbb F_2}\{M_i\}_{i=1}^{d_{\text{Ker}}}$ for some state vector $\ket{\psi}$. We can define $T_i:=\Omega(M_i+M_i^T)$, and consider the algebra
    \begin{equation}
        \mathfrak t:=\text{span}_{\mathbb F_2}\langle T_i\rangle_{i=1}^{d_{\text{Ker}}}.
    \end{equation}
    Then,
    \begin{equation}
        \mathfrak e :=\{\phi\in\text{End}_{\mathbb F_2}(\mathbb F_2^{2n}):\phi a=a\phi,\forall a\in\mathfrak t\}
    \end{equation}
    is an associative algebra that uniquely splits into some nilpotent radical $R$ and some simple components $S_i$
    \begin{equation}
        \mathfrak e=S_1\oplus\dots\oplus S_m\oplus R.
    \end{equation}
    Each component $S_i$ defines some $\mathfrak t$-invariant subspaces $V_i\subseteq \mathbb F_2^{2n}$, and in the basis defined by them, $\mathfrak t$ is block-diagonal. Moreover, if the change of basis is symplectic, then $\text{Ker}_2(\psi)$ is also block-diagonal. While this is not necessarily the finest SBD of $\text{Ker}_2(\psi)$, if the state is of the form $\ket{\psi}=U_C\ket{\psi_A}\otimes\ket{\psi_B}$, then one of the block-diagonal cuts found in this way must correspond to the partition $A|B.$ Finally, all of this can be computed in time $\text{poly}(n).$
\end{lemma}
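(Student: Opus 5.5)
The argument splits into four parts, taken in the order the statement lists its claims: the algebraic structure of $\mathfrak e$, why blocks of $\mathfrak t$ give blocks of $\text{Ker}_2(\psi)$, why the cut $A|B$ is among the cuts found, and efficiency.

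\textbf{Algebraic structure and invariant subspaces.} $\mathfrak e$ is the centraliser of a set of matrices, so it is closed under composition and contains the identity. It is therefore a finite-dimensional unital associative $\mathbb F_2$-algebra. Wedderburn--Malcev theory gives the Jacobson radical $R$ (nilpotent) and a semisimple quotient $\mathfrak e/R\cong S_1\oplus\dots\oplus S_m$. Over the perfect field $\mathbb F_2$ this quotient lifts to a subalgebra complementing $R$, which justifies the stated splitting.

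To obtain invariant subspaces, I take a complete set of orthogonal idempotents $e_i\in\mathfrak e$, one per simple component (central idempotents of the lifted semisimple part), with $\sum_i e_i=\mathbb 1$. Put $V_i:=e_i\mathbb F_2^{2n}$. Since each $e_i$ commutes with every $a\in\mathfrak t$, we have $aV_i=ae_i\mathbb F_2^{2n}=e_ia\mathbb F_2^{2n}\subseteq V_i$. Because $\mathbb F_2^{2n}=\bigoplus_iV_i$, every element of $\mathfrak t$ is block-diagonal in an adapted basis. This is the standard correspondence between idempotents of the commutant and direct-sum decompositions of the $\mathfrak t$-module $\mathbb F_2^{2n}$.

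\textbf{From $\mathfrak t$ to $\text{Ker}_2(\psi)$.} Let $G$ be a change of basis adapted to the decomposition, and suppose $G$ is symplectic, $G^T\Omega G=\Omega$. Then $G^{-1}T_iG=\Omega\,G^T(M_i+M_i^T)G$, so block-diagonality of $T_i$ transfers to the symmetric part $G^T(M_i+M_i^T)G$, because $\Omega$ is block-diagonal in a symplectic basis.

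Over $\mathbb F_2$ a quadratic form $q^TMq$ is determined only by its symmetric part together with its diagonal. I will therefore work with representatives modulo alternating matrices and choose, for each $M_i$, the upper-triangular representative in the new basis. This representative is block-diagonal whenever its symmetric part is. The diagonal entries are not seen by $M+M^T$, but they are already indexed by single coordinates and so cause no obstruction.

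\textbf{The cut $A|B$ appears.} This is the main obstacle. By \Cref{cor:block-diagonalisable}, in the Clifford frame $C$ every $M$ is block-diagonal across $A|B$. Hence every $T_i$ preserves both $W_A:=C(\mathbb F_2^{A})$ and $W_B:=C(\mathbb F_2^{B})$. The projector $\pi_A$ onto $W_A$ along $W_B$ therefore lies in $\mathfrak e$ and is an idempotent. A single idempotent need not be one of the primitive central ones; indeed $\pi_A$ may lie partly in $R$ or inside a noncommutative simple component.

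What I would try first is to show that $\pi_A$ is conjugate, by a unit of $\mathfrak e$, to a sum of the lifted idempotents. Over a finite field, idempotents of $\mathfrak e$ are determined up to conjugacy by their images in $\mathfrak e/R$ (idempotent lifting). By Wedderburn, each simple $S_j\cong M_{k_j}(\mathbb F_{2^{r_j}})$, so the images are classified by ranks in each factor. Consequently, choosing primitive idempotents inside each $S_j$ lets some sum of the resulting $V_i$ be isomorphic, as a $\mathfrak t$-module, to $W_A$. This is presumably what ``one of the block-diagonal cuts'' means.

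A remaining worry is whether the adapted basis can be made symplectic. Here I would use that $\Omega$-adjointness maps $\mathfrak t$ to itself: each $T_i$ is $\Omega$-self-adjoint in the sense $\Omega T_i^T\Omega=T_i$. Consequently the involution $\phi\mapsto\Omega\phi^T\Omega$ preserves $\mathfrak e$, and I would pick idempotents compatible with it, giving $\Omega$-orthogonal $V_i$.

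\textbf{Efficiency.} Computing $\mathfrak e$ amounts to solving the linear system $\phi T_i=T_i\phi$ in $4n^2$ unknowns, which takes $\text{poly}(n)$ time. Computing the radical, the simple components and the primitive idempotents over a finite field takes polynomial time by the algorithms of R\'onyai and of Chistov et al.\ \cite{ronyai1990computing,chistov1997polynomial,wilson2008finding}. The symplectic Gram--Schmidt step is polynomial as well.
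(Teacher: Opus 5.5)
Your first, second and fourth parts match the paper: Wedderburn--Malcev for $\mathfrak e$, the transfer from $\mathfrak t$ to $\text{Ker}_2(\psi)$ via $G^{-1}T_iG=\Omega G^T(M_i+M_i^T)G$, and R\'onyai/Berlekamp for efficiency. The step you flag as the main obstacle, however, has a genuine gap. You treat $\pi_A=C(\mathbb I^A\oplus 0^B)C^{-1}$ as a generic idempotent of $\mathfrak e$ that may sit partly in $R$ or inside a noncommutative simple component. You then refine to primitive idempotents inside each $S_j$, and conclude only that some sum of the resulting $V_i$ is \emph{isomorphic} to $W_A$ as a $\mathfrak t$-module. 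That is weaker than the claim. An isomorphic copy $uW_A$ ($u$ a unit of $\mathfrak e$) is not the hidden cut, and need not be $\Omega$-orthogonal to its complement. Moreover, once you pass to primitive idempotents the decomposition is no longer unique. So nothing supports your plan to `pick idempotents compatible with' the involution $\phi\mapsto\Omega\phi^T\Omega$.

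The missing observation is that $\pi_A$ lies in $\mathfrak t$ itself. By \Cref{lemma:clifford_effect_preliminaries}, the transformed purity $M=C^{-T}(\Omega_+^A+\lambda\lambda^T)C^{-1}$ is in $\text{Ker}_2(\psi)$. Using $C\Omega=\Omega C^{-T}$, one gets $\Omega(M+M^T)=C\Omega\Omega^AC^{-1}=\pi_A$. Being in $\mathfrak t$, it commutes with all of $\mathfrak e$ by the definition of the centralizer. Since every element of $\mathfrak t$ preserves $W_A$ and $W_B$, it also lies in $\mathfrak e$. Hence $\pi_A\in\mathfrak t\cap\mathfrak e\subseteq Z(\mathfrak e)$, contradicting your worry that it might sit inside a noncommutative component. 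Its image in $\mathfrak e/R$ is therefore a central idempotent, i.e.\ a sum $\sum_{j\in J}1_{S_j}$, because each $Z(S_j)$ is a field. Let $f=\sum_{j\in J}e_j$ be the lifted sum and $x=f-\pi_A\in R$. Then $x$ commutes with $\pi_A$, so $f^2=f$ gives $x^2=x$ in characteristic $2$, and nilpotency forces $x=0$. Thus $W_A=\bigoplus_{j\in J}V_j$ and $W_B=\bigoplus_{j\notin J}V_j$ exactly. This is the argument of \Cref{lemma:simple_comp}. It is also what makes the symplectic lift work: with central, hence unique, idempotents, the $\Omega$-adjoint can only permute them (\Cref{lemma:closed_centralizer}), so $\pi+\pi^\dagger$ gives self-adjoint projectors, and $\pi_A=\Omega(M+M^T)$ is already self-adjoint. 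The same observation would also rescue your isomorphism route. Since $\pi_A\in\mathfrak t$ acts as the identity on any submodule isomorphic to $W_A$, such a submodule lies in $\text{im}\,\pi_A=W_A$, and equals it by dimension.
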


The algorithm in \Cref{lemma:SBD_summary} in general will return a transformation $A\in GL(2n,\mathbb F_2)$ performing the SBD, not necessarily a symplectic one. We will show that since the decomposition is unique, given this $A$, it is then easy to find a symplectic change of basis $C\in \text{Sp}(2n,\mathbb F_2)$. Moreover, for a state vector $\ket{\psi}=U_C\ket{\psi_A}\otimes\ket{\psi_B}$, any such $C$ not only block-diagonalises $\text{Ker}_2(\psi)$, but also forces $\Omega_+^A,\Omega_+^B$ to be symmetries of the block-diagonalised kernel, and hence gives us a disentangling Clifford.
We at this point subsume all of this in the following theorem.
\begin{restatable}[{[Polynomial-time recovery of hidden Clifford-product structure]}]{theorem}{blockdiagonalisesym}
    \label{thm:block_diagonalisation_product}
    Given a basis $\{M_i\}_{i=1}^{d_{\text{Ker}}}$ of $\text{Ker}_2(\psi)$, for some $n$-qubit pure state vector $\ket{\psi}$, there exists an algorithm running in time $\text{poly}(n)$ that finds, if it exists, a symplectic operation $\tilde C\in\text{Sp}(2n,\mathbb F_2)$ that simultaneously block-diagonalises $\{M_i\}_{i=1}^{d_{\text{Ker}}}$. For a state of the form $\ket{\psi}=U_C\bigotimes_{i=1}^m\ket{\phi_i}_{c_i}$, satisfying the conditions in \Cref{def:learning_task}, this procedure finds blocks corresponding to each $c_i$, and a Clifford unitary $U_{\tilde C}^{\dagger}$ which necessarily disentangles $\ket\psi$ across the $c_i$'s.
\end{restatable}

We can then finally show that the number of copies of $\ket{\psi}$ required to solve tasks in \Cref{def:testing_task,def:learning_task} by an algorithm relying on the techniques above is polynomially bounded, and thus our algorithms are fully efficient.

\begin{restatable}[{[Efficient testing]}]{theorem}{maintesting}
    \label{thm:testing_alg}
    Let $\ket{\psi}$ be a state that is of the form $U_C\ket{\psi_A}\otimes\ket{\psi_B}$ as in \Cref{def:testing_task} (A) or $\epsilon$-far from it in $2$-norm for any choice of $U_C,\psi_A,\psi_B$ (B). Then, there is an algorithm running in time $\text{poly}(n,\frac{1}{\lambda},\log\frac{1}{\delta})$, such that given access to $N=O(\frac{1}{\lambda^2}(n^2+\log(1/\delta)))$ copies of the state, for $\lambda=\min(\eta,\epsilon)$ determines if $\ket{\psi}$ is in Case A or Case B with probability at least $1-\delta.$
\end{restatable}

\begin{restatable}[{[Efficient learning]}]{theorem}{mainlearning}
    \label{thm:learning_alg}
    Let $\ket{\psi}=U_C\left(\bigotimes_{i=1}^m\ket{\psi_i}_{c_i}\right)$ satisfying the stabilizer condition in \Cref{def:learning_task}. Then, there is an algorithm running in time $\text{poly}(n,\frac{1}{\lambda},\log\left(\frac{1}{\delta}\right))$, for $\lambda=\min(\eta,\epsilon/\sqrt{n})$, such that given access to $N=O\left(\frac{1}{\lambda^2}(n^2+\log\left(\frac{1}{\delta}\right))\right)$ many copies of $\ket{\psi}$, outputs a Clifford unitary $U_{\tilde C}$ and a set partition $\{\tilde c_i\}_{i=1}^{m'}$ ($m'\geq m$) satisfying $||\ket{\psi}-U_{\tilde C}\bigotimes_{i=1}^{m'}\ket{\phi_i}_{\tilde c_i}||_2\leq \epsilon$, for some choice of $\{\phi_i\}_{i=1}^{m'}$, with probability at least $1-\delta.$
    
\end{restatable}

The sample analysis of \Cref{thm:testing_alg} and \Cref{thm:learning_alg} is split in two main parts. First, we show that $\text{poly}(n,\frac{1}{\epsilon},\log\frac{1}{\delta})$ many samples are enough for an exhaustive search algorithm to succeed. Indeed, for that number of samples, only symmetries $M=C^{-T}(\Omega_+^S+\lambda_C)C^{-1}$ incurring an error at most $\epsilon$, are observed in $\text{Ker}_{2,N}(\psi)$ (the subindex $N$ distinguishes the finite sample kernel from the ideal one), and hence an algorithm that simply performs brute-force search succeeds with probability $1-\delta.$ In a subsequent step, we show that the properties of \Cref{cor:block-diagonalisable} also hold with high probability with a bounded number of samples. These are then enough to prove the overall efficiency of our algorithm, which we describe in detail in \Cref{alg:hidden-clifford-test}.

Note that the algorithm in \Cref{thm:learning_alg} only learns a description of the disentangling Clifford, and the partitions, however this can easily be lifted to a full state learning algorithm by using results from \cite{bakshi2025learning}. 

\begin{restatable}[{[Full state learning]}]{cor}{learningstates}
    \label{cor:state_learning}
    Let $\ket{\psi}$ be of the form in \Cref{def:learning_task}, with $|c_i|=\log_2d,\forall i$. Then, for any $\epsilon \in (0,1/\sqrt{3})$, using \Cref{alg:hidden-clifford-test} and Algorithm 4.19 in Ref.~\cite{bakshi2025learning}, a full description $\ket{\phi}$ such that
    \begin{equation}
        ||\ket{\psi}-\ket{\phi}||_2\leq\epsilon
    \end{equation}
    can be learnt with success probability at least $1-\delta$ with a sample complexity of $N=O\big(\frac{1}{\lambda^2}(n^2+\frac{m'd}{\lambda^2}\log\frac{1}{\delta})+m'd^2\log\frac{m'}{\delta}\big)$ and in time $\text{poly}(n,d,\frac{1}{\lambda},\log\frac{1}{\delta})$, for $\lambda=\min(\eta,\epsilon/\sqrt{n})$.
\end{restatable}
Note that in \Cref{cor:state_learning} we assume all the partitions to be of the same size for simplicity, but the result holds in general. This then not only allows us to learn states of this form, but also certain unitaries given access to their Choi state. Indeed, given any unitary of the form
\begin{equation}
    \label{eqn:conjugated_product_U}
    U=U_{C_1}\bigg(\bigotimes_iU_i\bigg)U_{C_2},
\end{equation}
its Choi state has the form in \Cref{def:learning_task}, since
\begin{equation}
    \ket{\Phi_U}=U_{C_1}\bigg(\bigotimes_iU_i\bigg)\otimes U_{C_2}^T\ket{\Phi_0}=(U_{C_1}\otimes U_{C_2}^T)\bigotimes_i\ket{\Phi_{U_i}},
\end{equation}
and thus, for the right $\ket{\Phi_{U_i}}$, $\ket{\Phi_U}$ can be learnt efficiently. The ability to learn such Choi states allows us to use some disentangling techniques introduced in Ref.~\cite{fux2025disentangling}, in order to show some striking implications of our algorithms for $T$-doped Clifford unitaries.  

\begin{restatable}[{[Learning from algebraically independent Pauli rotations} (Informal version of \Cref{lemma:disentangling} and \Cref{lemma:properlearning})]{theorem}{independentpaulis}
    \label{thm:independent_pauli_learning}
Given a $k$-doped Clifford unitary $U=U_{C_1}\cdot T\cdot\dots \cdot U_{C_k}\cdot T\cdot U_{C_{k+1}}$, with $k\leq 2n$, this can always be written as $U=U_C\prod_{i=1}^k e^{i\pi/8P_i}$. Then if the Paulis $\{P_i\}_{i=1}^k$ are algebraically independent, then the Choi state of $U$ has the form in \Cref{def:learning_task}, and a proper description of $U$ in terms of $U_C$ and $\{P_i\}_{i=1}^k$ is efficiently learnable using \Cref{alg:hidden-clifford-test} and \Cref{cor:state_learning}. Moreover, this still works arbitrary Pauli doping $\{e^{i\theta_iP_i}\}_{i=1}^k.$
\end{restatable}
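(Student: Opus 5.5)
The first step is to bring $U$ into the form $U_C\prod_{i=1}^k e^{i\pi/8P_i}$. Since $T=e^{i\pi/8}e^{-i\pi/8 Z}$, each $T$ gate is, up to phase, a Pauli-$Z$ rotation. Commuting every Clifford layer to the left conjugates $Z$ into a Pauli $P_i=V_i^\dagger Z_{q_i}V_i$, where $V_i$ is the product of the Cliffords to the right of that $T$ gate. This gives $U=U_C\prod_i e^{i\theta P_i}$, up to a global phase, with $U_C=U_{C_1}\cdots U_{C_{k+1}}$. The same rewriting works for arbitrary angles $\theta_i$.

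Next, we use algebraic independence, by which we mean that the symplectic vectors $p_1,\dots,p_k\in\mathbb F_2^{2n}$ are linearly independent. With $k\le 2n$, the disentangling idea of Ref.~\cite{fux2025disentangling} applies. We want a Clifford $W$ with $W^\dagger P_iW$ supported on a single qubit of a doubled register, so that the rotations factorise.

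The natural place to do this is the Choi state. There $\ket{\Phi_U}=(U_C\otimes\mathbb 1)\prod_i e^{i\theta_i P_i\otimes \mathbb 1}\ket{\Phi_0}$. Using $P\otimes\mathbb 1\ket{\Phi_0}=\mathbb 1\otimes P^T\ket{\Phi_0}$, we may split the rotations so that at most $n$ of them act on the first register and the rest act on the second. We do this by choosing a subset of the $p_i$ that is independent, and therefore completable to a symplectic basis up to commutation data.

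The cleaner route is to work directly on the $4n$-qubit space. The Choi stabilizers $P\otimes P^{*}$ give us freedom to replace each $P_i\otimes\mathbb 1$ by an equivalent operator $P_i\otimes \mathbb 1\cdot S$, where $S$ is a stabilizer of $\ket{\Phi_0}$. We then choose these representatives to be pairwise commuting and independent, which is possible because the $2n$-dimensional symplectic space doubles to $4n$. A Clifford $W$ mapping them to $Z_1,\dots,Z_k$ then gives
\[
\ket{\Phi_U}=(U_C\otimes\mathbb 1)W\Big(\bigotimes_{i\le k} e^{i\theta_i Z_i}\Big)W^\dagger\ket{\Phi_0}.
\]
Since $W^\dagger\ket{\Phi_0}$ is a stabilizer state, we would further choose $W$ so that $W^\dagger\ket{\Phi_0}$ is a product of single-qubit or Bell-pair stabilizer states aligned with the rotated qubits. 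This yields the form of \Cref{def:learning_task}, with blocks of constant size.

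We must also check the stabilizer-gap promise. Each block is a state $e^{i\theta Z}\ket{+}$, or a small stabilizer block. For a single rotation, the non-unit Pauli expectations equal $|\cos 2\theta|$ or $|\sin 2\theta|$. For $\theta=\pi/8$ this gives a gap $\eta=1-1/\sqrt2$. Since Pauli expectations of a product state factorise, the gap is inherited globally, and it degrades only for angles close to Clifford angles.

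With the form and the gap in place, \Cref{thm:learning_alg} yields $U_{\tilde C}$ and a partition, and \Cref{cor:state_learning} with $d=O(1)$ learns each block. From each learned block we read off $\theta_i$ and the local Pauli $Z$. We then pull back through $U_{\tilde C}$ to obtain the $P_i$ and $U_C$. For this last step we use that the Choi map is injective up to phase, and the $\mathbb 1\otimes P^T$ identity converts the second-register Paulis back.

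The main obstacle is the construction of $W$: showing that linear independence of the $p_i$, for up to $2n$ of them, suffices to choose commuting, independent representatives modulo the Choi stabilizer group, and that the resulting $W^\dagger\ket{\Phi_0}$ splits compatibly with the rotated qubits. I would first try an explicit symplectic Gram–Schmidt argument on the $4n$-dimensional space, exploiting that the Choi stabilizer group is a Lagrangian subspace of the form $\{(p,\bar p)\}$.
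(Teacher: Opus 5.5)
\textbf{Gap: the substitution of Choi-stabilizer representatives ignores the order of the rotations.}

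The gap is in the step you flag as the main obstacle, but what is missing there is not what you say. Replacing $P_i\otimes\mathbb I$ by $(P_i\otimes\mathbb I)S$, with $S$ a stabilizer of $\ket{\Phi_0}$, is only legitimate for the rotation that acts directly on $\ket{\Phi_0}$. Every other rotation acts on an intermediate state obtained from $\ket{\Phi_0}$ by non-Clifford rotations. $S$ stabilizes that state only if it commutes with all rotations applied before. So choosing pairwise commuting, independent coset representatives does not give $\ket{\Phi_U}=(U_C\otimes\mathbb I)W(\bigotimes_i e^{i\theta_iZ_i})W^\dagger\ket{\Phi_0}$, and in general this identity is false.

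Take $n=1$ and $U=e^{i\theta_1Z}e^{i\theta_2X}$. The representatives $B_1=(Z\otimes\mathbb I)(Z\otimes Z)=\mathbb I\otimes Z$ and $B_2=X\otimes\mathbb I$ commute, are independent, and lie in the correct cosets. Yet $e^{i\theta_1B_1}e^{i\theta_2B_2}\ket{\Phi_0}=(e^{i\theta_2X}e^{i\theta_1Z}\otimes\mathbb I)\ket{\Phi_0}$ is the Choi state of the \emph{reversed} product. It already differs from $\ket{\Phi_U}$ at $\theta_1=\theta_2=\pi/8$. Commuting representatives are easy to find; the missing idea is that they must be compatible with the order of the non-commuting rotations. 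Your first idea has the same problem: the transpose trick moves only a suffix of the product to the second register, not an arbitrary independent subset.

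\textbf{How to close it, and how this compares with the paper.}

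Label the rotations in the order they act on $\ket{\Phi_0}$ and take $S_j=R_j\otimes R_j^*$, with $R_j$ chosen as follows. $R_j$ commutes with $P_j$ and with every $P_i$ acting before it. For every $P_i$ acting after it, $R_j$ anticommutes with $P_i$ exactly when $P_j$ does.

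Then $S_j$ commutes with all previously substituted $B_i$, so it stabilizes the intermediate state. Hence the substitution $e^{i\theta_jP_j\otimes\mathbb I}\to e^{i\theta_jB_j}$, with $B_j=P_jR_j\otimes R_j^*$, is exact, and one checks that the $B_j$ pairwise commute.

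$P_i\otimes\mathbb I$ commutes with $R\otimes R^*$ if and only if $P_i$ commutes with $R$. So each $R_j$ is fixed by $k$ prescribed commutation relations with the $P_i$. These are solvable because the $p_i$ are linearly independent and the symplectic form is nondegenerate. This is where algebraic independence and $k\le 2n$ actually enter, rather than through a doubling argument.

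The $B_j$ are then independent modulo the Lagrangian Choi stabilizer group. So there is a $W$ with $W^\dagger B_jW=Z_j$ and $W^\dagger\ket{\Phi_0}$ a product of $X$-eigenstates on the first $k$ qubits and $Z$-eigenstates on the rest. This gives single-qubit blocks and makes your stabilizer-gap computation valid.

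With this repair your argument is a genuinely different, global route. The paper instead maps $\ket{\Phi_0}$ to $\ket{0}^{\otimes 2n}$ via $C_\Phi$. It then disentangles one rotation at a time with Cliffords $e^{i\pi/4(Z_j-\mathbb I)\otimes Q}$, which act trivially on $\ket{0}_j$, and checks inductively that the remaining Paulis keep an $X$ or $Y$ on a fresh qubit. Your version replaces that induction by a single linear system. It also makes recovery transparent, since $B_j$ determines $P_j$ uniquely once its Choi-stabilizer component is split off. For a proper description, however, you must still say how the order of the rotations is recovered. It is encoded in the commutation pattern of the $R_j$ with the $P_i$, and your sketch does not address it.
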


Indeed, we show that some $T$-doped unitaries, in spite of exhibiting large amounts of entanglement or magic, still have a Choi state retaining a hidden Clifford-product structure, and hence remain efficiently learnable. Importantly, our proper learner does not rely solely on a global approximation of $\ket{\Phi_U}$, but rather performs state tomography on single-qubit factors, thereby providing direct control of the induced channel error. Under the exact $T$-doped promise, these factors belong to a finite, constantly separated family, so a proper description of $U$ can be identified exactly with high probability. More generally, for continuously parametrized Pauli rotations, the angle-estimation errors accumulate only linearly, so estimating each factor to accuracy $\epsilon/k$ yields an $\epsilon$ approximation in diamond norm.

Finally, we show that the condition that the Paulis are algebraically independent, and hence this structure, is maintained with high probability over the choice of Cliffords, at regimes well beyond the standard $O(\log(n))$-doped regime.
\begin{restatable}[{[Probability of algebraic independence]}]{lemma}{independentpauliprobabilities}
    For a $k$-doped Clifford unitary $U=U_{C_1}\cdot T\cdot\dots \cdot U_{C_k}\cdot T\cdot U_{C_{k+1}}$, with $U_{C_i}$ chosen uniformly at random from the Clifford group, the Pauli operators $\{P_i\}_{i=1}^k$ after conjugating the $T$ gates through, are algebraically independent with probability 
    \begin{equation}
        \label{eqn:success_p}
        p=\prod_{i=0}^{k-1}\left(1-\frac{2^i-1}{4^n-1}\right).
    \end{equation}
\end{restatable}

\begin{restatable}[{[Typical learnability of $T$-doped Clifford circuits]}]{cor}{probabilities}
    \label{cor:t_doped_probability}
    Given $k-$doped Clifford unitaries of the form $U=U_{C_1}\cdot T\cdot\dots \cdot U_{C_k}\cdot T\cdot U_{C_{k+1}}$, with $U_{C_i}$ chosen uniformly at random from the Clifford group, they can be efficiently learnt using the algorithm in \Cref{cor:state_learning} with:
    \begin{itemize}
    \item $1-O(\frac{1}{2^{c n}})$ success probability, if  $k\leq 2n-c n$ with $c>0$.
    \item $1-O(\frac{1}{n^c})$ success probability, if $k\leq 2n-c\log n$ with $c>0$.
    \item $\Theta(1)$ success probability, if $k\leq 2n-c$ with $c\geq0$.
\end{itemize}
\end{restatable}

\begin{algorithm}[t]
\caption{Computationally efficient algorithm for task in \Cref{def:learning_task}}
\label{alg:hidden-clifford-test}
\begin{algorithmic}[1]

\Statex \textbf{Input:} 
\(N=O(\frac{1}{\lambda^2}(n^2+\log(1/\delta)))\), for $\lambda=\min(\eta,\epsilon/\sqrt{n})$ copies of some state vector $\ket{\psi}.$ 

\Statex \textbf{Promise:} 
There exist a Clifford $U_C$, and some partition $\{c_i\}_{i=1}^m$ such that for some $\{\psi_i\}_{i=1}^m$ it holds that $\ket{\psi}=U_C\bigotimes_{i=1}^m\ket{\psi_i}_{c_i}$. Moreover, $\forall P\in\mathcal P_n$ either $|\langle P\rangle_\psi|=1$ or $|\langle P\rangle_\psi|\leq1-\eta$.
\Statex \textbf{Output:} 
A Clifford $U_{\tilde C}$ and a partition $\{\tilde{c}_i\}_{i=1}^{m'}$ for $m'\geq m$ such that $||U_{\tilde C}\bigotimes_{i=1}^{m'}\ket{\phi_i}_{\tilde{c}_i}-\ket{\psi}||_2\leq\epsilon$, for some $\{\phi_i\}_{i=1}^{m'}.$

\State Perform Bell sampling on the copies of $\ket{\psi}.$

\State Run stabilizer learning algorithm to precision $\eta$ on the samples.

\State After stabilizer learning $\exists U_{C_1}^{\dagger}$ such that $ U_{C_1}^\dagger\ket\psi=\ket{\tilde\psi}\otimes\ket{0}^{m_1}$. Construct a basis of $\text{Ker}_{2,N}(\tilde\psi).$

\State Run the SBD algorithm on $\text{Ker}_{2,N}(\tilde\psi).$
\State Let $C_2$ be the block-diagonalising matrix output above, and $c'_i$ with $i\in[m_2]$ be the blocks.
\State Compute $\text{Ker}_{2,N}(U_{C_2}^\dagger\ket{\tilde\psi})$.

\For{$i\in[m_2]$}
    \State $P_i:=\Omega_+^{c'_i}$
\EndFor
  \State Solve the linear system $\sum_{i=1}^{m_2}s_iP_i\in\text{Ker}_{2,N}(U_{C_2}^\dagger\ket{\tilde\psi})$, with unknowns $\mathbf s=(s_1,\dots,s_{m_2})\in\mathbb F_2^{m_2}$.

\State \textbf{return} $U_{\tilde C}=U_{C_1}\cdot U_{C_2}$ \textit{and} $\{\tilde c_i\}_{i\in[m']}$ for $m'\leq m_1+m_2$, such that the corresponding subsystems are found pure.

\end{algorithmic}
\end{algorithm}

\subsection{Discussion and future work}

\paragraph{Hidden structure beyond conventional low-resource regimes.} Our results identify a family of efficiently learnable quantum states whose structure is not captured by any of the usual notions of low complexity. Namely, the states in \Cref{def:learning_task} can exhibit substantial entanglement, magic, and non-Gaussianity. Nevertheless, the hidden product structure leaves strong signatures in its Bell distribution, 
which can be retrieved efficiently.

Our approach to do so admits a crisp interpretation within the abelian StateHSP framework of Ref.~\cite{hinsche2026abelian}, but 
requires additional reconstruction beyond the ordinary hidden-cut problem. Accommodating the unknown Clifford enlarges the ambient group to account for all genuinely quadratic forms, and obscures the relation between the recovered symmetries and the underlying tensor-product structure. Our simultaneous-block diagonalisation step exploits the structure of the quadratic constraints to efficiently recover a frame in 
which the subsystem purity equations become explicit. 

This result can be, first of all, interpreted 
as an obstruction to a gluing lemma for Cliffords. 
It would be natural to think, given the high resource content that states in \Cref{def:learning_task} can have due to the mismatch between Cliffords and product unitaries, that a state like $U_C\bigotimes_i\ket{\psi_i}$ would be hard to learn. Potentially, by choosing $\{\ket\psi_i\}_i$ to be pseudorandom and using $U_C$ as a scrambling unitary, a plausible hypothesis would be that the transformed state would also be globally pseudorandom. Our result clearly disproves this and shows not only that Cliffords can't erase all evidence of the underlying factorisation, but also that this information can be efficiently accessed.

In this sense our results are closely related to those in Ref.~\cite{grewal2025efficient}, which gives efficient algorithms for learning unitaries of the form
\begin{equation}
    \label{eqn:unitaries}
    V=U_C^tU_d,\quad W=U_dU_C^t,
\end{equation}
where $U_C^t$ is a $t$-doped Clifford, and $U_d$ a depth-$d$ circuit. Their algorithm already showed that these apparently highly resourceful unitaries remained structured, and likewise identify certain block-structures in the Pauli expansion of operators related to $V$ and $W.$ Their work admits controlled departures from the exact Clifford and product regimes, by allowing modest amounts of magic and entanglement, but assumes a stronger oracle model of access to the unitaries and their inverses. Our work restricts to the exact case (exact Clifford and product) but works with a significantly weaker access model, that is only state access. This then also allows us to do unitary learning without demanding oracle access to the unitary or its inverse, but just its Choi state. Moreover, while their algorithm is improper, our method can sometimes manage to output a description of $V$ in terms of $U_C$ and $\bigotimes_iU_i$, as we show for the $t$-doped circuit case.

\paragraph{Consequences for $T$-doped Clifford circuits.} \Cref{thm:independent_pauli_learning} and \Cref{cor:t_doped_probability} deeply challenge previously established beliefs on the learnability of this class of circuits \cite{lai2022learning,leone2024learningstates,leone2024learning}. Indeed, general efficient learning algorithms were only known for $t$-doped circuits with $t=O(\log n)$, via compression methods. While these algorithms work at $t\sim n$, the runtime grows exponentially, deeming this regime quasi-chaotic \cite{leone2024learning}. The $t\sim 2n$ limit is then commonly associated with the onset of chaotic behaviour and no generic learning algorithm is believed to exist.

While our results do not contradict worst-case hardness in these regimes, they show that a large family of circuits of this type retains an efficiently recoverable compressed description well beyond the limit of $t=O(\log n)$, in terms of states of the form in \Cref{def:learning_task}. Moreover, far from being an isolated effect, an exponentially big fraction of these circuits satisfy this in the \textit{quasi-chaotic} regime ($t\sim n$), and still a constant fraction in the \textit{chaotic} case ($t\sim 2n$), and hence can be learnt efficiently. Besides these cases, our algorithm can still learn certain circuits in the setting where $t>2n$. For instance, whenever the algebraic dependencies breaking the underlying product structure would remain confined in subsystems of $O(\log n)$ many qubits, the unitaries could still be learnt. It could then be relevant to study, for instance, what circuits in this class satisfy this property.

In this regard our work also closely 
connects to Ref.~\cite{lee2025learning}, where efficient learning algorithms are given for states with higher doping than $O(\log(n))$, albeit for the restricted case where these can be written as $\ket{\psi}=T^{\otimes\vec i}U_C\ket{0}^{\otimes n}$. Certain members of this state family are also accessible to our framework after conjugating the $T$ gates through. Moreover, the corresponding unitary family $U=T^{\otimes\vec i}U_C$ is subsumed by our \Cref{thm:independent_pauli_learning}. Their approach however, differs significantly from ours by looking at the problem as a non-abelian 
StateHSP instance. We believe that exploring these connections could be interesting to understand the full reach of efficient learnability of $t$-doped circuits.

\paragraph{Benchmarking resource-rich regimes and logical sectors.} We believe that our results may also be relevant in practical scenarios, by enabling probes of previously inaccessible regimes. Many benchmarking and certification methods need to exploit some efficiently accessible description of the states and circuits to remain scalable. To date, these have then been restricted, mostly to low-resource regimes, relying on low-bond-dimension MPS, almost stabilizer structure, or Gaussianity of states \cite{eisert2020quantum,aolita2015reliable,cramer2010efficient,flammia2011direct,kliesch2021theory}. Nevertheless, the states and unitaries that we consider here can still look highly complex according to any such description, and hence our methods could provide access to a regime complementary to that of previous works. This could be especially relevant for designing new benchmarking protocols for instance for the early fault-tolerant regime \cite{mills2025logical,harper2019fault,leitch2026learningunknownstabilizercodes}, where extending the reach of previous techniques will likely be required.

A natural task for our algorithm in that setting would for instance be to perform learning of logical product states of stabilizer error correcting codes \cite{gottesman1997stabilizer}. Indeed, if $U_C$ is the Clifford encoding circuit of a stabilizer code, and the logical state factors across logical subsystems, then its physical realization is of the form
\begin{equation}
    \ket{\psi}=U_C\left(\bigotimes_i\ket{\phi_i}_{L}\otimes\ket{0}_{\text{aux}}\right),
\end{equation}
which can be learnt by \Cref{alg:hidden-clifford-test}. Recovering the hidden factors, could therefore be used to certify the absence of spurious correlations between distinct logical subsystems, in a completely code agnostic way, even when the systems are not physically separated.

\paragraph{Hamiltonian property learning.}

Other potential applications of our algorithm include the study of many-body systems, where one might be interested in learning some Hamiltonian $H$, or at least some of its properties. A broad range of methods has been developed for this purpose \cite{wiebe2014hamiltonian,bairey2019learning,anshu2021sample,huang2023learning,haah2024learning, hu2025ansatz,bluhm2026certifying, bluhm2026hamiltonian, ma2024learning, franca2024efficient}, but their guarantees typically rely on short evolution times, or strong structural facts such as locality or sparsity in some other specified operator basis. For Hamiltonians of the type
\begin{equation}
    H=U_C\left(\sum_S H_S\otimes \mathbb I_{S^c}\right)U_C^\dagger,
\end{equation}
where common promises might not hold, our \Cref{alg:hidden-clifford-test} can still manage to learn $H$ given access only to its forward time evolution operator, at almost arbitrary times. While the algorithms in Ref.~\cite{grewal2025efficient} could also be used for this purpose, the fact that our algorithm does not rely on inverse time access makes it significantly more amenable to be run in practice. 

Moreover, even when full Hamiltonian learning might still not be possible, our methods can still help us finding decoupling strategies that drastically compress the system and reduce its simulation cost for instance. These phenomena have been studied recently in the context of critical spin chains \cite{fan2025disentangling,kim2026disentangling}, where it has been shown that $XX$-type Hamiltonians can decouple exactly into two distinct Ising chains under the action of a Clifford, in approaches building on earlier related work disentangling fermionic systems
\cite{PhysRevLett.117.210402}.
Our algorithm could then enable practical exploration of these hidden symmetries for more families of systems.

\paragraph{Open questions.}
Whether or not our guarantees can be extended to a robust setting is one of the main open questions we leave for the future. As we already stated, in Ref.~\cite{bakshi2025learning}, an efficient learning algorithm is given also for the case where the Cliffords showcase some amount of magic and the product structure is slightly perturbed to have some entanglement. This is only possible in their case given full oracle access to the unitary and its inverse. In our case, under these perturbations, namely for state vectors 
of the form
\begin{equation}
    U_C^tU_d\ket{0}^{\otimes n},
\end{equation}
with $U_C^t$ a $t$-doped Clifford, and $U_d$ an arbitrary depth-$d$ circuit, for $t,d$ small enough, the exact quadratic constraints become noisy. Recovering them by our method then leads to a parity-learning problem resembling learning parity with noise (LPN) \cite{blum1994lpn}. This identifies a clear computational obstruction for our algorithm to work in that case, albeit some more involved protocols could potentially overcome this. We conjecture however, that any such method necessarily runs into similar computational issues, rendering these states hard to learn.

States of this form live in the first level of the magic hierarchy introduced in Ref.~\cite{parham2026quantum}, and understanding their learnability would have strong implications for questions across a wide range of fields, like cryptography, or resource theories. In the latter context, these states have recently received significant amount of attention under the name of \emph{Clifford augmented MPS} (CAMPS) \cite{qian2024camps,masot2026limits,fux2025disentangling,liu2026camps,fan2025disentangling,kim2026disentangling}, as they have been argued to be a better ansatz for bounding the simulation complexity of some many-body systems, and thus solving this conjecture would shed some light on the interplay between simulability and learnability of quantum states.

A further open question is whether the same strategy admits a genuinely fermionic formulation. In particular, can analogous quadratic symmetries in the Majorana representation be efficiently sampled and simultaneously block-diagonalised so as to recover a hidden tensor-product or mode decomposition up to a fermionic Gaussian transformation? Such an extension could provide learning and disentangling algorithms directly in the fermionic setting, without first mapping the system to qubits, while naturally incorporating fermionic parity and the absence of an ordinary subsystem tensor-product structure of fermionic modes. Also in that case, extensions to related free fermionic augmented MPS \cite{PhysRevLett.117.210402} could be interesting.

Finally, on a more technical note, our analysis gives an explicit transformation between quadratic symmetries of the Bell distribution and quadratic symmetries of the Pauli distribution, showing that its image is a proper subset of the latter. In particular, some properties of the Pauli distribution cannot be straightforwardly inferred from quadratic symmetries of the Bell samples. A better characterization of this, would sharpen our understanding of which nonlinear properties of pure states are accessible through Bell measurements. We conjecture that all quadratic Bell constraints necessarily reduce to stabilizer and subsystem purities (up to affine transformations). Proving this, would both simplify the analysis developed here, and clarify the intrinsic information content of the Bell distribution. Conversely, finding genuinely different families of symmetries, would then point to new classes of structured states admitting similar efficient learning algorithms.

\subsection*{Structure of 
this work}

We introduce some preliminaries about Bell sampling in \Cref{sec:bell_sampling}. We introduce our framework of quadratic Bell symmetries and generalized purities and show how some canonical examples fit into it. In \Cref{sec:clifford_dressed} we then introduce the family of states that we care about and show how they are still structured in a very particular way. Given this structure we already show here that there exists an exponential runtime algorithm requiring only access to polynomially many copies in order to retrieve it. In \Cref{sec:comp_eff} we lift this to an algorithm running in polynomial time. We introduce the required finite algebra background and show the simultaneous block-diagonalisation procedure. We complete the section showing that the sample complexity of this algorithm is still polynomial. We finalize in \Cref{sec:Tdoped} by showing how to disentangle the Choi states of $T$-doped unitaries, and how to use this to come up with a proper learning algorithm for them.

\section{Bell sampling}

\label{sec:bell_sampling}

We now introduce Bell sampling and its associated probability distribution, which form the foundation of our entire framework. Bell sampling has emerged as a remarkably powerful primitive in quantum learning theory: collective measurements on pairs of copies can reveal algebraic structure that is inaccessible to ordinary single-copy measurements, enabling, among other applications, efficient stabilizer learning and testing, fidelity estimation, and the recovery of hidden tensor-product structure \cite{montanaro2017learning,gross2021schur,grewal2024improved,grewal2025efficient,hangleiter2024bell,hinsche2025efficient,hinsche2026abelian}. In our setting, its central role is that the Bell distribution retains precise quadratic signatures of purity and product structure even after an unknown Clifford transformation has obscured the underlying tensor-product decomposition.


\begin{definition}[{[Bell distribution $\mathsf b_\rho$]}]
    \label{def:bell_dsitribution}
    Let $\rho$ be some state on $n$ qubits. Then let $d=2^n,$ and
    \begin{equation}
        \ket{\Phi}=\frac{1}{\sqrt d}\sum_{x\in\{0,1\}^n}\ket{x}\ket{x}
    \end{equation}
    be the maximally entangled state on two copies of an $n$-qubit system. For all $n-$qubit Pauli operators 
    \begin{equation*}
    \ket{\Phi_P}=(P\otimes \mathbb I)\ket{\Phi},
    \qquad P\in\mathcal P_n .
\end{equation*}
define the full basis of Bell states on $2n$-qubits. 

Then, given two copies of the state $\rho$, Bell sampling performs measurements of $\rho\otimes\rho$ in the Bell basis. The Bell distribution of $\rho$ is then defined as the probability distribution resulting from these measurements
\begin{equation}
\mathsf b_\rho(\Phi_P)=\Tr\!\left[\ket{\Phi_P}\!\bra{\Phi_P}\,\rho\otimes\rho\right]=\frac{1}{d}\Tr\!\left[P\rho P\rho^T\right].
\end{equation}
\end{definition}
Since all Bell states can be labeled by their corresponding Pauli, slightly abusing notation, from now on we will label the Bell distribution using Paulis directly. In spite of this, one should take care to not confuse this with the Pauli distribution of a state, i.e.,
\begin{equation}
    \mathsf P_\rho(P)=\frac{1}{d}\text{Tr}[P\rho]^2.
\end{equation}
This difference becomes particularly clear for pure states $\rho=\ket\psi\!\!\bra\psi$, for which the Bell distribution reduces to 
\begin{equation}
    \label{eqn:pure_Bell}
    \mathsf b_\psi(P)=\frac{1}{d}|\langle\psi|P|\psi^*\rangle|^2.
\end{equation}
From this, one can see that, while the two distributions coincide in the special case of real states, they do differ in general.

For our purposes, the important point about the Bell distribution is that it will help us convert the structural properties of our states into a clear algebraic constraint on its support. Indeed, certain properties of $\rho$ will set certain probabilities $\mathsf b_\rho(P)$ to zero in some structured way. For instance, we will show that stabilizer and product structure, restrict the Bell distribution to live in some quadric defined by some binary matrix $M\in M_{2n}(\mathbb F_2)$ and the equation
\begin{equation}
     p^TMp=c,\quad\forall p\in\supp(\mathsf b_\rho)
\end{equation}
with $p\in\mathbb F_2^{2n}$ the symplectic representation of the Bell labels. We will prove this in the following section and develop it further to see how this helps in solving our problem.

\subsection{Quadratic symmetries of the Bell distribution}

For the moment, we will ignore sampling-complexity considerations and assume that we have access to the full support of the Bell distribution
\begin{equation*}
    \supp(\mathsf b_\rho)=\{q\in\mathbb F_2^{2n}:\mathsf b_\rho(P_q)\neq 0\}.
\end{equation*}
Our goal is to show how several structural properties of a state, such as purity, stabilizers, or product structure across hidden cuts, manifest as quadratic constraints on this support. To that end, let us consider the maximally mixed state $\rho=\mathbb I/d$, for which the Bell distribution looks completely unstructured:
\begin{equation}
    \mathsf b_{\mathbb I/d}(P)=\frac{1}{d}\Tr\!\left[P\frac{\mathbb I}{d}P\frac{\mathbb I}{d}\right]=\frac{1}{d^2}\qquad\forall P\in\mathcal P_n .
\end{equation}
Thus the Bell distribution is uniform over all $d^2$ Pauli labels, and there are no non-trivial support constraints.

For pure states the situation changes already. In that case, it can be shown that Bell outcomes belonging to the antisymmetric subspaces have zero probability, and thus one can't ever sample an odd number of singlets
\begin{equation}
    \ket{\Psi_{\text{singlet}}}=\frac{\ket{0,1}-
    \ket{1,0}}{\sqrt{2}}=\ket{\Phi_Y}.
\end{equation}
Since singlets are labeled by the antisymmetric single qubit $Y$ Pauli, in the $n$-qubit setting, this constraint means that $\mathsf b_\psi$ must be entirely supported on Bell states labeled by real Paulis. Indeed, one can show that for Paulis such that
\begin{equation}
    P^T=-P
\end{equation}
it must be that 
\begin{equation}
    \mathsf b_\psi(P)=0.
\end{equation}
This can be seen from looking at \Cref{eqn:pure_Bell}. Writing
\begin{equation}
    s=\bra{\psi}P\ket{\psi^*},
\end{equation}
since $s$ is a scalar, $s=s^T$ and therefore
\begin{align}
    s=s^T&=((\ket{\psi^*}^T)P\ket{\psi^*})^T\\&=((\ket{\psi^*}^T)P^T\ket{\psi^*})\notag\\&=\bra{\psi}P^T\ket{\psi^*}.\notag
\end{align}
If $P^T=-P$, then this implies $s=-s$ and hence $s=0$. Therefore
\begin{equation}
    \mathsf b_\psi(P)=0
    \qquad
    \text{whenever } P^T=-P .
\end{equation}

This constraint has a particularly transparent form as a quadratic symmetry in symplectic representation. In fact, write
\begin{equation}
    q=(x_1,z_1,\dots,x_n,z_n)\in\mathbb F_2^{2n}
\end{equation}
such that
\begin{equation}
    P_q=\prod_{i=1}^nX_i^{x_i}Z_i^{z_i}.
\end{equation}
Then a single-qubit factor is antisymmetric when both $x_i=z_i=1$. Thus the global parity of imaginary terms in the $n-$qubit case reads
\begin{equation}
    \sum_{i=1}^nx_iz_i=q^T\Omega_+q,\qquad \Omega_+=\bigoplus_{i=1}^n\begin{pmatrix}
        0 &1\\ 0& 0
    \end{pmatrix}.
\end{equation}
Hence the purity constraint can be written as the quadratic equation
\begin{equation}
    \label{eqn:purity_quadric}
    q^T\Omega_+q=0,\qquad \forall q\in\supp(\mathsf b_\psi).
\end{equation}
We can now make this discussion slightly more general in order to see how, for instance, product structure gives a direct refinement of the global purity case $\Omega_+$. We will also show that in fact stabilizers then arise as linear constraints of the same type.
For this, let $\rho$ be a generic $n-$qubit quantum state. Then, we can take any $M\in\text{Mat}_{2n}(\mathbb F_2)$ and define a quadratic symmetry on the Bell distribution such that
\begin{equation}
     q^TMq=c,\quad \forall q\in \supp(\mathsf b_\rho),
\end{equation}
where $c\in\mathbb F_2$ is constant. In order to help interpret these constraints we will show a direct translation between these quadratic Bell symmetries and symmetries in the Pauli distribution. Before doing so however, let us remind some basic facts of binary quadratic forms that will be useful for that.

Firstly, since we work over $\mathbb F_2$, every $M$ is totally determined by its associated bilinear form $B=M+M^T$ and its diagonal $v=\text{diag}(M),$ since  
\begin{equation}
    \label{eqn:quadratic_form}
    q^TMq=\sum_{i=1}^{2n}M_{i, i}q_i+\sum_{i<j}(M_{i, k}+M_{j , i})q_iq_j=\sum_{i=1}^{2n}M_{i, i}q_i+\sum_{i<j}B_{i, k}q_iq_j.
\end{equation}
 Thus, we can assume $M$ to be an upper triangular matrix, since for any symmetric term $S$, $M=M+S$ as quadratic forms. Moreover, the polar form $B$ is symmetric and alternating, and hence by standard results in classification of bilinear forms \cite{grove2002classical}, there exist some $A\in GL(2n,\mathbb F_2)$, such that
\begin{equation}
    A^{-T}BA^{-1}=\Omega^l\oplus 0_{2n-2l},\quad \Omega^l=\bigoplus_{i=1}^l\begin{pmatrix}
        0&1\\1&0
    \end{pmatrix}_i,
\end{equation}
and by \Cref{eqn:quadratic_form}, $A$ also block-diagonalises $M$ as a quadratic form, since for $M'=A^{-T}MA^{-1}$
\begin{equation}
    q^TM'q=\sum_{i=1}^{2n}M'_{i, i}q_i+\sum_{i<j}B'_{i, k}q_iq_j=q^T(v'+\Omega_+^l)q
\end{equation}
for $v'=\text{diag}(M')$, which is clearly block-diagonal. Hence, for all practical purposes
\begin{equation}
    \label{eqn:block_quadratic}
    M=A^{T}(\Omega^l_++v')A
\end{equation}
 We can now state our Lemma.

\begin{lemma}[{[Quadratic Bell symmetries induce Pauli constraints]}]
    \label{lemma:bell_to_Pauli}
    Let $\rho$ be an $n$-qubit state satisfying a quadratic Bell symmetry $M$. Then its Pauli distribution satisfies the following generalised purity equation
    \begin{equation}
        1=\frac{1}{2^l}\sum_{\substack{P\in\mathcal P_n:\\ \Omega p\in A^T{v'}+\mathrm{im}(M+M^T)}}\text{Tr}[P\rho]^2\cdot(-1)^{p^TN_Mp+c_{v'}+c}
    \end{equation} 
    for some quadratic symmetry $N_M$ given by 
    \begin{equation}
    \label{eqn:Pauli_symmetry}N_M=\Omega_++\Omega^TA^{-1}\big[\Omega_+^l+\Omega^lv'v'^T\Omega^l\big]A^{-T}\Omega
    \end{equation}
    and where $l=\frac{1}{2}\text{rank}(M+M^T)$, $A\in GL(2n,\mathbb F_2)$ is the block-diagonalising transformation, and $c_{v'}$ is an additional shift that depends on $v'$ coming from \Cref{eqn:block_quadratic}.
\end{lemma}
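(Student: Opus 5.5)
The plan is to turn the support condition into an exact expectation identity on the Bell distribution, and then Fourier-transform that identity into the Pauli basis. Since $q^TMq=c$ holds on all of $\supp(\mathsf b_\rho)$, we have
\begin{equation}
1=\sum_{q}\mathsf b_\rho(q)(-1)^{q^TMq+c}.
\end{equation}
The first step is to write $\mathsf b_\rho$ in terms of Pauli expectations. Expand $\rho=\frac1d\sum_p \Tr[P_p\rho]P_p$ and $\rho^T$ likewise, using $P_p^T=\pm P_p$ with sign $(-1)^{p^T\Omega_+p}$. The commutation phase between $P_q$ and $P_p$ is $(-1)^{q^T\Omega p}$. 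Together these give
\begin{equation}
\mathsf b_\rho(q)=\frac{1}{d^2}\sum_p \Tr[P_p\rho]^2(-1)^{p^T\Omega_+p+q^T\Omega p}.
\end{equation}
Substituting, the Pauli side becomes $\sum_p\Tr[P_p\rho]^2(-1)^{p^T\Omega_+p}\,\widehat{\chi}_M(\Omega p)$. Here $\widehat\chi_M(w)=\frac1{d^2}\sum_q(-1)^{q^TMq+q^Tw}$ is the Walsh--Hadamard transform of the quadratic character of $M$.

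The second step evaluates this transform using the normal form in \Cref{eqn:block_quadratic}. Write $M=A^T(\Omega_+^l+v')A$ and substitute $r=Aq$, so that
\begin{equation}
\widehat\chi_M(w)=\frac1{d^2}\sum_r(-1)^{r^T\Omega_+^lr+v'^Tr+r^TA^{-T}w}.
\end{equation}
This factorizes over the coordinates.
\begin{itemize}
\item On the $2n-2l$ coordinates outside the symplectic block, the sum is a character sum. It vanishes unless the linear coefficient is zero, which forces $\Omega p\in A^Tv'+\mathrm{im}(M+M^T)$. Here I use $\mathrm{im}(B)=A^T(\mathbb F_2^{2l}\oplus0)$, since $B=A^T(\Omega^l\oplus0)A$.
\item On each hyperbolic pair, the standard Gauss-sum identity gives $\sum_{x,y}(-1)^{xy+ax+by}=2(-1)^{ab}$. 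Writing the pair's linear term $u=v'+A^{-T}w$ restricted to the block, the total sign is $(-1)^{u^T\Omega_+^l u}$, which is the same as $(-1)^{u^T\Omega^l_+ \Omega^l\Omega^l u}$ after rewriting.
\end{itemize}
The magnitudes contribute $2^l\cdot2^{2n-2l}/d^2=2^{-l}$, which is the stated prefactor.

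The third step is bookkeeping. Expand $u=v'+A^{-T}\Omega p$ in the quadratic form $\Omega_+^l$. The pure-$v'$ term gives the constant $c_{v'}$. The pure-$p$ term gives $p^T\Omega^TA^{-1}\Omega_+^lA^{-T}\Omega p$. The cross term $v'^T\Omega^l A^{-T}\Omega p$ is linear in $p$; over $\mathbb F_2$ it equals the quadratic term $p^T\Omega^TA^{-1}\Omega^lv'v'^T\Omega^lA^{-T}\Omega p$, since $x^2=x$. Adding the $\Omega_+$ from the transpose sign then yields exactly $N_M$ as in \Cref{eqn:Pauli_symmetry}.

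I expect the main obstacle to be the sign conventions. These are: the $(-1)^{p^T\Omega_+p}$ from transposition and the $i$-phases of $XZ$ versus $Y$; the orientation of $\Omega p$ versus $\Omega^Tp$ (harmless over $\mathbb F_2$); and making sure the restriction to the block correctly uses the off-block part of $v'$ to produce the affine coset. I would first check the whole computation on $M=\Omega_+$. There the condition should reduce to ordinary purity $\frac1d\sum_P\Tr[P\rho]^2=1$ with $l=n$, $A=\mathbb I$, $v'=0$, and $N_M=\Omega_++\Omega^T\Omega_+\Omega$ should be trivial as a quadratic form.
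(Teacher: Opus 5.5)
Your proposal is correct and follows essentially the same route as the paper's proof. Both arguments rewrite the support condition as $1=\sum_q \mathsf b_\rho(q)(-1)^{q^TMq+c}$ and expand $\rho$ in the Pauli basis to obtain the phase $(-1)^{p^T\Omega_+p+q^T\Omega p}$. They then substitute $q\leftarrow Aq$ using $M=A^T(\Omega_+^l+v')A$, factorize the $q$-sum into vanishing character sums off the block and Gauss sums $2(-1)^{ab}$ on the hyperbolic pairs, and expand $v'+A^{-T}\Omega p$ using $x^2=x$ to absorb the linear cross term into $N_M$. Your derivation of $\mathsf b_\rho$, which drops the off-diagonal terms via $\Tr[P_pP_{p'}]=d\,\delta_{pp'}$, is slightly cleaner than the paper's corresponding line.
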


\begin{proof}
    
    First, the condition imposed by $M$ can be written as a generalized purity constraint on the Bell distribution. Namely,
    \begin{equation}
    1=\frac{1}{d}\sum_{\substack{Q\in\mathcal P_n:\\q\in  \supp(\mathsf b_\rho)}}\text{Tr}[Q\rho Q\rho^T]=\frac{1}{d}\sum_{Q\in\mathcal P_n}\text{Tr}[Q\rho Q\rho^T]\cdot(-1)^{q^TMq+c}
    \end{equation}
where in the last equality we have used that $\forall Q$ s.t. $q^TMq+c=1$, then $\text{Tr}[Q\rho Q\rho^T]=0$. 
Now, we can proceed by writing $\rho$ in the Pauli basis
\begin{equation}
    \rho=\frac{1}{d}\sum_{P\in\mathcal P_n} \text{Tr}[P\rho]P.
\end{equation}
Plugging this in the previous expression yields the following
\begin{align}
    1&=\frac{1}{d}\sum_{Q\in\mathcal P_n}\text{Tr}[Q\rho Q\rho^T]\cdot(-1)^{q^TMq+c}\\ 
        &=\frac{1}{d^2}\sum_{P,Q\in\mathcal P_n}\text{Tr}[P\rho]^2\text{Tr}[QPQP^T]\cdot(-1)^{q^TMq+c}\notag\\
        &=\frac{1}{d^2}\sum_{P,Q\in\mathcal P_n}\text{Tr}[P\rho]^2\cdot(-1)^{p^T\Omega_+p+q^T\Omega p}\cdot(-1)^{q^TMq+c}\notag\\
        &=\frac{1}{d^2}\sum_{P\in\mathcal P_n}\text{Tr}[P\rho]^2\cdot(-1)^{p^T\Omega_+p}\sum_{Q\in\mathcal P_n}(-1)^{q^T\Omega p+q^TMq+c},\notag
\end{align}
where following the previous notation $\Omega=\Omega^n$ is the standard symplectic form. Now we can analyze the sum over $Q$, for which is useful to write $M$ as in \Cref{eqn:block_quadratic}
\begin{align}
    \label{eqn:symmetry_translation}
    \sum_{Q\in\mathcal P_n}(-1)^{q^T\Omega p+q^TMq+c}&=\sum_{Q\in\mathcal P_n}(-1)^{q^T\Omega p+q^TA^T(\Omega_+^l+v')Aq+c}\\
        &=\sum_{Q\in\mathcal P_n}(-1)^{(A^{-1}q)^T\Omega p+q^T(\Omega_+^l+v')q+c}\notag\\
        &=\sum_{Q\in\mathcal P_n}(-1)^{q^T(v'+A^{-T}\Omega p)+q^T\Omega_+^lq+c}\notag\\
        &=\sum_{Q\in\mathcal P_n}(-1)^{q^Tv'_p+q^T\Omega_+^lq+c}\notag\\
        &=(-1)^c\prod_{i=1}^n\sum_{Q\in\mathcal P_1}(-1)^{q^T{v'}_p^i+q^T\Omega_+^{i_l}q}\notag
\end{align}
where in the second equality we have applied a change of basis on the Paulis, i.e., $q\leftarrow Aq$, and in the third and fourth equalities we have used that over $\mathbb F_2$ we have $a^2=a$, and thus
\begin{equation}
    q^Tv'q=\sum_{i=1}^{2n}v'_{i, i}q_i^2=\sum_{i=1}^{2n}v'_{i, i}q_i=q^T\cdot \vec v'
\end{equation}
and we can treat $v'$ as a vector (we will however write $v'$ instead of $\vec v'$ for ease of notation). Finally, in the last equality we take advantage of the block structure and factorise the sum into single qubit sums, and where $i_l=1$ if $i\leq l$ and $i_l=0$ otherwise. Then evaluating the single qubit sums yields the following result
\begin{equation}
    \label{eqn:sum_over_Q}
    \sum_{Q\in\mathcal P_n}(-1)^{q^T\Omega p+q^TMq+c}=(-1)^c\prod_{i=1}^n\sum_{Q\in\mathcal P_1}(-1)^{q^T{v'}_p^i+q^T\Omega_+^{i_l}q}=\begin{cases}
        0 & {v'}_p\not\in \mathrm{im}(\Omega^l),\\
        4^{n-l}\times 2^l(-1)^{v'_p\Omega^l_+v'_p+c} &\mathrm{else}
        .
    \end{cases}
\end{equation}
Indeed, let us evaluate one single sum term, to get
\begin{equation}
    \sum_{Q\in\mathcal P_1}(-1)^{q^T{v'}_p^{i}+q^T\Omega_+^{i_l}q}=\sum_{q_1,q_2\in\mathbb F_2}(-1)^{q_1{v'}_p^{i,1}+q_2{v'}_p^{i,2}+q_1q_2\cdot\delta_{i\leq l}}
\end{equation}
For $i>l$ this becomes
\begin{equation}
    \sum_{q_1,q_2\in\mathbb F_2}(-1)^{q_1{v'}_p^{i,1}+q_2{v'}_p^{i,2}}=\bigg(\sum_{q_1\in\mathbb F_2}(-1)^{q_1{v'}_p^{i,1}}\bigg)\cdot\bigg(\sum_{q_2\in\mathbb F_2}(-1)^{q_2{v'}_p^{i,2}}\bigg)
    =\begin{cases}
        4,& ({v'}_p^{i,1},{v'}_p^{i,2})=(0,0)\\
        0,& ({v'}_p^{i,1},{v'}_p^{i,2})\neq(0,0)
    \end{cases}.
\end{equation}
Thus, the full product vanishes unless ${v'}_p^i=0$ $\forall i>l,$ which is equivalent to saying $v'_p\in\text{im}(\Omega_l)$, since 
\begin{equation}
    \text{im}(\Omega_l)=\{v'\in\mathbb F_2^{2n}:{v'}^i=0\ \forall i>l\}.
\end{equation}
On the other hand, for $i\leq l$
\begin{equation}
    \sum_{q_1,q_2\in\mathbb F_2}(-1)^{q_1{v'}_p^{i,1}+q_2{v'}_p^{i,2}+q_1q_2}=\sum_{q_1\in\mathbb F_2}(-1)^{q_1{v'}_p^{i,1}}\sum_{q_2\in\mathbb F_2}(-1)^{({v'}_p^{i,2}+q_1)q_2}=2(-1)^{({v'}_p^i)^T\Omega_+^i{v'}_p^i}
\end{equation}
since in the second equality, the second sum is nonzero only when $q_1={v'}_p^{i,2}.$ This then yields the above result. Now, we can plug \Cref{eqn:sum_over_Q} into \Cref{eqn:symmetry_translation}, to get
\begin{align}
    1 &= \frac{1}{2^l}\sum_{P\in\mathcal P_n:v'_p\in\mathrm{im}(\Omega^l)}\text{Tr}[P\rho]^2\cdot(-1)^{p^T\Omega_+p+{v'_p}^T\Omega_+^lv'_p+c}\\
    &= \frac{1}{2^l}\sum_{P\in\mathcal P_n:v'_p\in\mathrm{im}(\Omega^l)}\text{Tr}[P\rho]^2\cdot(-1)^{p^T\Omega_+p+(v'+A^{-T}\Omega p)^T\Omega_+^l(v'+A^{-T}\Omega p)+c}\notag\\
    &= \frac{1}{2^l}\sum_{P\in\mathcal P_n:v'_p\in\mathrm{im}(\Omega^l)}\text{Tr}[P\rho]^2\cdot(-1)^{p^T\Omega_+p+p^T\Omega(A^{-1}\Omega_+^lA^{-T})\Omega p+{v'}^T\Omega^l A^{-T}\Omega p+c_{v'}+c}\notag\\
    &= \frac{1}{2^l}\sum_{P\in\mathcal P_n:v'_p\in\mathrm{im}(\Omega^l)}\text{Tr}[P\rho]^2\cdot(-1)^{p^T\Omega_+p+p^T\Omega A^{-1}(\Omega_+^l+\Omega^l v'{v'}^T\Omega^l) A^{-T}\Omega p+c_{v'}+c}\notag\\
    &=\frac{1}{2^l}\sum_{\substack{P\in\mathcal P_n:\\ \Omega p\in A^Tv'+\mathrm{im}(M+M^T)}}\text{Tr}[P\rho]^2\cdot(-1)^{p^TN_Mp+c_{v'}+c}\notag
\end{align}
where in the third equality we have used that 
\begin{equation}
    {v'}^T\Omega_+^lA^{-T}\Omega p+p^T\Omega A^{-1}(\Omega_+^l)v'={v'}^T(\Omega_+^l+(\Omega_+^l)^T)A^{-T}\Omega p={v'}^T\Omega^lA^{-T}\Omega p,
\end{equation}
and where
\begin{equation}
    c_{v'} = {v'}^T \Omega_+^l v'
\end{equation}
and
\begin{equation}
    N_M=\Omega_++\Omega^TA^{-1}\big[\Omega_+^l+\Omega^lv'v'^T\Omega^l\big]A^{-T}\Omega.
\end{equation}
Thus, every quadratic symmetry $M$ at the level of the Bell distribution leads to a quadratic symmetry $N_M$ of the form above on the Pauli distribution of the state and we have proven our statement. 
\end{proof}

We believe that \Cref{lemma:bell_to_Pauli} is interesting on its own. Indeed, not all quadratic forms can be written as in \cref{eqn:Pauli_symmetry}, and hence it helps draw a criterion for when certain symmetries of the Pauli distribution of a state can be observed from quadratic symmetries of the Bell distribution or not. An example of this is the case of imaginarity in the Pauli basis, which counts the weight of odd-$Y$ Paulis. The quadratic form 
\begin{equation}
    N_{\text{im}}= \Omega_+.
\end{equation}
precisely records this quantity. Then, a state $\rho$ is real if and only if its Pauli support satisfies $p^TN_{\text{im}}p=0$. Despite this simple quadratic description, reminiscent of the purity constraint, efficient measurement of such a property is in general not possible in the copy-access model \cite{schatzki2024imaginarity,hinsche2025efficient}. Interestingly, an algebraic impediment can already be seen from our \cref{eqn:Pauli_symmetry}. Indeed, by plugging $N_{im}$ in it, we obtain
%
\begin{equation}
    0=\Omega^TA^{-1}[\Omega^l_++\Omega^lvv^T\Omega^l]A^{-T}\Omega\iff l=0.
\end{equation}
Thus, the corresponding Bell constraint has zero polar form and is therefore linear. However, as we show below, that can at most correspond to a stabilizer symmetry and cannot determine the total odd-$Y$ weight for systems larger than one qubit. Hence, our correspondence prevents the existence of a genuinely quadratic Bell-support constraint that measures this property, in agreement with the fact that it cannot
be efficiently measured.

On the other hand, any binary matrix $M\in\mathbb F_2^{2n\times 2n}$ can define a candidate Bell symmetry, and \Cref{lemma:bell_to_Pauli} can be useful to interpret their meaning. For instance we can now use it to immediately see that the former case, where $M=\Omega_+$ and $c=0$, indeed corresponds to a global purity condition. This follows since in that case $A=\mathbb  I_{2n}$ and $v=0$
\begin{equation}
    N_{\Omega_+}=\Omega_++\Omega^T\Omega_+\Omega=\Omega
\end{equation}
and since $\mathrm{im} (M+M^T)=\mathrm{im}(\Omega)=\mathcal P_n$, we recover the standard purity equation
\begin{equation}
    1=\frac{1}{2^n}\sum_{P\in \mathcal P_n}\text{Tr}[P\psi]^2(-1)^{p^T\Omega p}=\frac{1}{2^n}\sum_{P\in \mathcal P_n}\text{Tr}[P\psi]^2.
\end{equation}

Similarly, product structure is then simply a refinement of the global purity constraint to smaller subsystems. Indeed, given a product state vector $\ket{\psi}=\ket{\psi_A}\otimes\ket{\psi_B}$, since subsystems $A$ and $B$ are also pure, $\ket{\psi}^{\otimes 2}$ is not only invariant under global SWAPs, but also local SWAPs of the $A$ and $B$ subsystems. In other words, at the level of the Bell distribution, both symmetries 
\begin{equation}
    \Omega_+^A=\Omega_+|_A\oplus 0_B,\qquad \Omega_+^B=0_A\oplus\Omega_+|_B
\end{equation}
again with $c=0$, must be satisfied. Indeed, we can observe that these restrictions lead to the standard subsystem purity equations in the Pauli distribution. Again, $A=\mathbb I_{2n}$ and $v=0$, but now $l_i=n_i$, and $\mathrm{im}(\Omega_+^i+{\Omega_+^{i}}^T)=\mathcal P_{n_i}$. Then, for instance for $i=A$, we get
\begin{align}
    \label{eqn:subsystem_purity_symm}
    1&=\frac{1}{d_{A}}\sum_{P_A\in\mathcal P_{n_A}}\text{Tr}[(P_A\otimes\mathbb I)\psi]^{2}(-1)^{p_A^T(\Omega_++\Omega_-^A)p_A}\\
    &=\frac{1}{d_A}\sum_{P_A\in\mathcal P_{n_A}}\text{Tr}[(P_A\otimes\mathbb I)\psi]^2(-1)^{p_A^T(\Omega^A\oplus\Omega_+^B)p_A}\notag\\
    &=\frac{1}{d_A}\sum_{P_A\in\mathcal P_{n_A}}\text{Tr}[P_A\psi]^2=\text{Tr}[\psi_A^2].\notag
\end{align}

Where $\Omega_-^A={\Omega_+^{A}}^T$, and we slightly abused notation for $p_A=p_A\oplus 0_B.$
Finally, it can also be seen that stabilizer symmetries correspond to linear symmetries on the Bell support, and in fact these are homogeneous (i.e., $c=0$) for real stabilizers, and affine ($c=1$) for imaginary stabilizers. In that case, $A=\mathbb I_{2n}$, $l=0$, and $\text{span}(M+M^T)=0$. This results in a collapse of the purity equation to a single term, since the condition $\Omega p\in A^Tv+\mathrm{im}(M+M^T)$ becomes 
\begin{equation}
    \label{eqn:stabilizer_symm_1}
    p_*=\Omega A^Tv=\Omega v.
\end{equation}
On the other hand, since $l=0$ the phase term $N_M$ becomes
\begin{equation}
    \label{eqn:stabilizer_symm_2}
    N_v=\Omega_+.
\end{equation}
Thus, overall 
\begin{equation}
    \label{eqn:stabilizer_symm_3}
    1=\text{Tr}[P_*\psi]^2(-1)^{p_*^T\Omega_+p_*+c}.
\end{equation}
For this to hold, $c$ must satisfy
\begin{equation}
    \label{eqn:stabilizer_symm_4}
    c=p_*^T\Omega_+p_*
\end{equation}
i.e., $c=0$ if $P_*$ is real, and $c=1$ otherwise, and then
\begin{equation}
    \label{eqn:stabilizer_sym_5}
    1=\text{Tr}[P_*\psi]^2,
\end{equation}
which is a standard stabilizer condition. Moreover, in the following Lemma, we will show that some symmetries of the form $M=\Omega_+^S+d$, i.e., subsystem purity symmetries plus some diagonal component, can actually be decoupled into the corresponding purity and stabilizer symmetries, which will be a useful result for later.

\begin{lemma}[{[Diagonal symmetries imply product and stabilizer structure]}]
    \label{lemma:diagonal_stab}
    Given a state vector $\ket{\psi}$ such that $M=\Omega_+^S+d\in\text{Ker}_2(\psi)$, for some subsystem $S$ and $d$ some diagonal term, then the state must be a product and if $d\neq 0$ then $\exists  P\in\text{STAB}(\psi)$.
\end{lemma}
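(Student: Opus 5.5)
The plan is to feed $M=\Omega_+^S+d$ into \Cref{lemma:bell_to_Pauli} and read off the resulting generalised purity equation. Since the polar form is $M+M^T=\Omega^S$, which is already in the canonical block form, we may take $A=\mathbb I_{2n}$, $l=|S|$ and $v'=d$, viewed as a vector. Split $d=d_S\oplus d_{S^c}$. The support condition $\Omega p\in d+\mathrm{im}(\Omega^S)$ then fixes the $S^c$-part of $p$ to $p_{S^c}=\Omega d_{S^c}$. It leaves $p_S$ free. Hence the equation reads
\begin{equation*}
1=\frac{1}{2^{|S|}}\sum_{P_S\in\mathcal P_{|S|}}\Tr\big[(P_S\otimes Q_*)\psi\big]^2(-1)^{\varphi(p_S)},
\end{equation*}
where $Q_*$ is the fixed Pauli on $S^c$ labelled by $\Omega d_{S^c}$, and $\varphi$ is the affine-quadratic phase obtained from $N_M$ together with the constants $c_{v'}$ and $c$.

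The main step is an upper bound that must be saturated. Put $X:=\Tr_{S^c}[(\mathbb I_S\otimes Q_*)\psi]$. Bounding all signs by $+1$ and using Pauli orthonormality on $S$ gives
\begin{equation*}
1\le \frac{1}{2^{|S|}}\sum_{P_S}|\Tr[P_SX]|^2=\Tr[X^\dagger X].
\end{equation*}
Next write the Schmidt decomposition $\ket\psi=\sum_i s_i\ket{a_i}\ket{b_i}$. Then
\begin{equation*}
\Tr[X^\dagger X]=\sum_{i,j}s_i^2s_j^2|\bra{b_j}Q_*\ket{b_i}|^2.
\end{equation*}
Because $Q_*$ is unitary, $\sum_j|\bra{b_j}Q_*\ket{b_i}|^2\le 1$. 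This yields $\Tr[X^\dagger X]\le \max_j s_j^2\le 1$. Equality forces $\max_j s_j=1$, so the Schmidt rank is one and $\ket\psi=\ket{\psi_S}\otimes\ket{\psi_{S^c}}$ is a product. This is the first claim.

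The stabilizer claim then splits into two cases.

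\textbf{Case $d_{S^c}\neq0$.} Here $Q_*\neq\mathbb I$, and saturation additionally requires $|\bra{\psi_{S^c}}Q_*\ket{\psi_{S^c}}|=1$. Then $\mathbb I_S\otimes Q_*$ is a stabilizer of $\psi$ up to sign.

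\textbf{Case $d_{S^c}=0$ but $d_S\neq0$.} Here $Q_*=\mathbb I$, so the bound is just $\Tr[\psi_S^2]=1$. Saturation then forces every term of the sum to carry sign $+1$: $(-1)^{\varphi(p_S)}=1$ on the Pauli support of $\psi_S$. I would evaluate $\varphi$ explicitly from \Cref{eqn:Pauli_symmetry}, with $A=\mathbb I$ and $v'=d$. The quadratic parts $\Omega_+$ and $\Omega^T\Omega_+^S\Omega$ cancel on $S$, as in the purity computation after \Cref{lemma:bell_to_Pauli}. What survives is a linear term $p_S^T\Omega\,d_Sd_S^T\Omega\,p_S=(d_S^T\Omega p_S)^2=d_S^T\Omega p_S$ plus a constant. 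The identity $p_S=0$ is always in the support of $\psi_S$, which fixes the constant to $0$. Hence every $P_S$ in the support commutes with the Pauli $P_{d_S}$, which is nontrivial since $d_S\neq0$. Consequently $P_{d_S}\psi_SP_{d_S}=\psi_S$. For a pure state this means $P_{d_S}\ket{\psi_S}\propto\ket{\psi_S}$, so $P_{d_S}\otimes\mathbb I$ is a stabilizer.

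I expect the main obstacle to be the bookkeeping of the phase $\varphi$ in this last case. One has to check carefully that the quadratic pieces of $N_M$ cancel exactly on $S$, and that the constants $c_{v'}$ and $c$ combine so the identity term has sign $+1$. If the algebra becomes messy, a fallback is to bypass \Cref{lemma:bell_to_Pauli} here. Once product structure is known, the Bell distribution factorises. One can then directly use $\mathsf b_{\psi_S}(p_S)=2^{-|S|}|\bra{\psi_S}P_S\ket{\psi_S^*}|^2$ together with $p_S^T\Omega_+p_S=0$ on the support. The residual constraint $d_S\cdot p_S=c$ is then a linear Bell symmetry, which by \Cref{eqn:stabilizer_symm_1,eqn:stabilizer_sym_5} is exactly a stabilizer of $\psi_S$.
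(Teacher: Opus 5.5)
Your proposal is correct and follows the paper's own route. You feed $M=\Omega_+^S+d$ into \Cref{lemma:bell_to_Pauli} with $A=\mathbb I_{2n}$, use the support condition to pin $p_{S^c}=\Omega d_{S^c}$, bound the signed sum by the unsigned one, and read off product structure and stabilizers from saturation. The only variation is how you bound the unsigned sum: you bound $\Tr[X^\dagger X]$ uniformly via the Schmidt decomposition and Bessel's inequality. The paper instead splits into cases and uses the $\pm1$ eigenprojectors of $Q$; there $X=\psi_+^Q-\psi_-^Q$, so the difference is cosmetic.

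There is one bookkeeping slip, exactly where you anticipated it. In $\Omega^T\Omega^S dd^T\Omega^S\Omega$ the factor $\Omega^T\Omega^S$ acts as the identity on $S$. The surviving linear term is therefore $d_S^Tp_S$, not $d_S^T\Omega p_S$. Consequently the stabilizer is the Pauli labelled by $\Omega d_S$, as your fallback via \Cref{eqn:stabilizer_symm_1} correctly gives, not $P_{d_S}$. For example, for $\ket{0}$ with $d=\mathrm{diag}(1,0)$ the stabilizer is $Z$, not $X$. This does not affect the existence claim.
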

\begin{proof}
    In order to see this, we will look at the Pauli distribution constraint instead, since it is easier to interpret. For a symmetry $\Omega_+^S+d$, we take $M=\Omega_+^S$, $v=d$ and $A=\mathbb I_{2n}$.

    In this setting
    \begin{align}
        N_M&=\Omega_++\Omega\Omega_+^S\Omega+\Omega\Omega^Sdd^T\Omega^S\Omega\\
        &=\Omega^S+\Omega_+^{S^c}+d_S(d_S)^T\notag .
    \end{align}
    On the other hand, the sum runs over
    \begin{equation}
        p\in \Omega d+\mathrm{im}(\Omega\cdot\Omega^S)=\Omega d_S+\mathrm{im}(\mathbb I_S)+q,\quad q=\Omega d_{S^c},
    \end{equation}
    so that the component on $S^c$ is fixed.
    Plugging this into the Pauli purity equation yields
    \begin{align}
        1&=\frac{1}{2^{|S|}}\sum_{p_S\in\Omega d_S+\mathrm{im}(\Omega^S)}\text{Tr}[P_S\otimes Q\psi]^2\cdot(-1)^{p^T(\Omega^S+\Omega_+^{S^c}+d_Sd_S^T)p+d^T\Omega_+^Sd+c}\\
        &=\frac{1}{2^{|S|}}\sum_{p_S\in\Omega d_S+\mathrm{im}(\Omega^S)}\text{Tr}[P_S\otimes Q\psi]^2\cdot(-1)^{p^T\Omega^Sp+q^T\Omega_+^{S^c}q+p^Td_Sd_S^Tp+d^T\Omega_+^Sd+c}\notag\\
        &=\frac{1}{2^{|S|}}\sum_{p_S\in\Omega d_S+\mathrm{im}(\Omega^S)}\text{Tr}[P_S\otimes Q\psi]^2\cdot(-1)^{p^Td_Sd_S^Tp}\notag\\
        &=\frac{1}{2^{|S|}}\sum_{p_S\in\mathrm{im}(\Omega^S)}\text{Tr}[P_S\otimes Q\psi]^2\cdot(-1)^{(p+\Omega d_S)^Td_Sd_S^T(p+\Omega d_S)}\notag\\
        &=\frac{1}{2^{|S|}}\sum_{p_S\in\mathrm{im}(\Omega^S)}\text{Tr}[P_S\otimes Q\psi]^2\cdot(-1)^{p^Td_Sd_S^Tp+p^Td_Sd_S^T\Omega d_S+d_S^T\Omega d_Sd_S^Tp+d_S^T\Omega d_Sd_S^T\Omega d_S}\notag\\
        &=\frac{1}{2^{|S|}}\sum_{p_S\in\mathrm{im}(\Omega^S)}\text{Tr}[P_S\otimes Q\psi]^2\cdot(-1)^{p^Td_Sd_S^Tp}\notag\\
        &=\frac{1}{2^{|S|}}\sum_{p_S\in\mathrm{im}(\Omega^S)}\text{Tr}[P_S\otimes Q\psi]^2\cdot(-1)^{ d_S^T\cdot p}\notag
    \end{align}
    where we have stripped off all constants, since they are just there for consistency and ultimately cancel each other out.

 From this it is easier to interpret the effect of the diagonal component $d$, which we have now split into $d_S$ and $Q$, being the components of the diagonal acting on $S$ and $S^c$ respectively. In order to see that both $Q\neq\mathbb I$ and $d_S\neq 0$ impose stabilizer conditions we can do the following. First assume the case where $Q=\mathbb I$, then
\begin{equation}
    1=\frac{1}{2^{|S|}}\sum_{P\in\mathcal P_S}\text{Tr}[P\psi]^2\cdot(-1)^{d_S^T\cdot p}\leq\frac{1}{2^{|S|}}\sum_{P\in\mathfrak{\mathcal P_S}}\text{Tr}[P\psi]^2\leq1.
\end{equation}
Which is a contradiction except in the case of exact equality. For the first $\leq$ to be an exact equality we must have that $p^T\cdot d_S=0$ $\forall p\in\mathcal P_S$ with $\text{Tr}[P\psi]^2>0$, which is exactly a stabilizer condition, since this demands that all Paulis in the support of $\psi$ commute with the Pauli corresponding to $\tilde d_S=\Omega d_S$. Then for the second inequality to be an exact equality we must have that the state is pure on subsystem $S$, hence the full state is a product across $S|S^c$. Note that if the overall constant shift would have been -1, then the constraint would instead become
\begin{equation}
    -1=\frac{1}{2^{|S|}}\sum_{P\in\mathcal P_S}\text{Tr}[P\psi]^2\cdot(-1)^{p^T\cdot d_S}\geq-\frac{1}{2^{|S|}}\sum_{P\in\mathfrak{\mathcal P_S}}\text{Tr}[P\psi]^2\geq-1
\end{equation}
where equality would now only be satisfied if $p^T\cdot d_S=1$ $\forall p\in \mathcal P_S$ with $\text{Tr}[P\psi]^2>0.$ However, this case can never arise, since  $\text{Tr}[\psi]^2>0$ and $p_{\mathbb i}^T\cdot d_S=0$.
Now take the case where $d_S=0$, then
\begin{align}
    1&=\frac{1}{2^{|S|}}\sum_{P\in\mathcal P_S}\text{Tr}[P\otimes Q\psi]^2\\
    &=\frac{1}{2^{|S|}}\sum_{P\in\mathcal P_S}\text{Tr}_1[P\text{Tr}_2[\mathbb I\otimes (\Pi_+^Q-\Pi_-^Q)\psi]]^2\notag\\
    &=\frac{1}{2^{|S|}}\sum_{P\in\mathcal P_S}\text{Tr}_1[P(\psi_+^Q-\psi_-^Q)]^2\notag\\
    &=\frac{1}{2^{|S|}}\sum_{P\in\mathcal P_S}\text{Tr}[P\psi_+^Q]^2+\text{Tr}[P\psi_-^Q]^2-2\text{Tr}[P\Psi_+^Q]\text{Tr}[P\Psi_-^Q]\notag\\
    &=\text{Tr}[{\psi_+^Q]}]^2+\text{Tr}[{\psi_-^Q]}]^2-2\text{Tr}[\psi_+^Q\psi_-^Q]\notag\\
    &\leq \text{Tr}[{\psi_+^Q]}]^2+\text{Tr}[{\psi_-^Q]}]^2\leq 1\notag
\end{align}
where we have split the Pauli $Q$ into its $+1/-1$ eigenspace projections and used that the reduced state $\psi_i^Q$ is positive. For this not to become a contradiction, the inequalities need to be exactly satisfied, forcing $\psi$ to live exactly in one of the eigenspaces of $Q$, i.e., $Q\psi=\pm\psi$ and hence $Q\in\text{STAB}(\psi)$. Then, assuming without loss of generality that $\psi_-^Q=0$, the rest of the equation imposes purity of the $S$ subsystem, i.e., $\psi$ is a product across $S|S^c$. Finally, for the generic case where both $d_S\neq 0$ and $Q\neq \mathbb I$
\begin{align}
    1&=\frac{1}{2^{|S|}}\sum_{P\in\mathcal P_S}\text{Tr}[P\otimes Q\psi]^2(-1)^{p^T\cdot d_S}\\
    &\leq \frac{1}{2^{|S|}}\sum_{P\in\mathcal P_S}\text{Tr}[P\otimes Q\psi]^2\leq 1\notag
\end{align}
where equalities are only satisfied if the above conditions are satisfied, namely if both $d_S$ and $Q$ are stabilizers and the state is a product.
\end{proof}

All of these examples, showcase how some typical instances can be phrased in terms of quadratic symmetries of the Bell distribution. Finally, before moving on, note that given a state $\rho$, the set of all Bell symmetries it satisfies forms a vector space. Indeed, given $M,N$ two Bell symmetries of $\rho$
\begin{equation}
    q^TMq=c_M,\quad q^TNq=c_N\quad \forall q\in\text{supp}(\mathsf b_\rho),
\end{equation}
then clearly
\begin{equation}
    q^T(N+M)q=c_N+c_M,\quad \forall q\in\text{supp}(\mathsf b_{\rho}),
\end{equation}
which is also a constant. Then, from now on we will refer to this vector space as the quadratic kernel of the Bell distribution, i.e.,
\begin{equation}
    \text{Ker}_2(\rho)=\{M\in\mathbb F_2^{2n\times 2n}:q^TMq=c\quad\forall q\in\text{supp}(\mathsf b_\rho)\}.
\end{equation}
Then, the previous problems of stabilizer, or hidden cut testing / learning can be seen as the problem of figuring out whether some particular symmetry is contained in this vector space.

\section{Clifford dressed product states}
\label{sec:clifford_dressed}

Like we said, in this case, our goal is to understand one particular family of such structured states, namely those of the form
\begin{equation}
    \ket{\psi}=U_C(\ket{\psi_A}\otimes\ket{\psi_B})
\end{equation}
where $U_C$ is an unknown Clifford. We keep this discussion to the bipartite case for simplicity, but it follows similarly for the general case. Before applying $U_C$, the Bell distribution would be such that $\{\Omega_+^A,\Omega_+^B\}\subseteq\text{Ker}_2(\psi)$. Therefore, the main question then is whether, and how, these product symmetries remain visible after the action of $U_C,$ i.e., how does $\text{Ker}_2{(\ket{\psi})}$ get transformed.

Fortunately we can show that Clifford transformations preserve this quadratic structure. More precisely, they map any quadratic symmetry to another one by an affine transformation. We show this in  \Cref{lemma:clifford_effect_preliminaries}. We introduce one auxiliary lemma before.

\begin{lemma}[{[Transpose square of a Clifford unitary]}]
    \label{lemma:U^TU_1}
    Given $U_C$ a Clifford unitary, $U_C^TU_C$ is proportional to some Pauli W. In particular, in symplectic representation, $w=\Omega\cdot\mathrm{diag}(C^T\Omega_+C)$, where $C$ is the symplectic action of $U_C$. This measures how far $U_C$ is from being a real Clifford.
    
\end{lemma}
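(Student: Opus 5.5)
We need to show that $U_C^TU_C$ is proportional to a Pauli $W$, and identify its symplectic label. The plan is to show first that $U_C^TU_C$ is a Clifford acting trivially on Pauli labels, and then read off $W$ from the sign pattern produced by transposition.

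\textbf{Step 1: $U_C^TU_C$ is a Pauli up to phase.} Transposition preserves the Pauli group, since $P^T=\pm P$. Therefore $U_C^T P U_C^{*}=(U_C^\dagger P^T U_C)^T$ is again a Pauli up to sign, so $U_C^T$ is Clifford. Next I track symplectic labels. Conjugation by $U_C$ maps $p\mapsto Cp$ up to signs, and transposition only introduces signs. Hence $U_C^TU_C$, acting by conjugation $P\mapsto (U_C^TU_C)P(U_C^TU_C)^\dagger$, maps every label $p$ to itself up to a sign. Concretely, I use
\[
U_C^T P U_C^* = \big(U_C^\dagger P^T U_C\big)^T .
\]
A Clifford whose symplectic action is the identity is a Pauli up to a global phase. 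This gives $U_C^TU_C\propto W$ for some Pauli $W$.

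\textbf{Step 2: identifying $w$ via the signs.} The Pauli $W$ is determined by its commutation signs: $WPW^\dagger=(-1)^{w^T\Omega p}P$. I will compute these signs by conjugating $P$ through the composite step by step. Transposing a Pauli $Q$ gives $Q^T=(-1)^{q^T\Omega_+q}Q$, which is exactly the odd-$Y$ parity used in the purity constraint.

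The computation runs as follows. Write $U_C P U_C^\dagger=\pm P_{Cp}$. Transposing both sides gives $U_C^*P^TU_C^T=\pm P_{Cp}^T$. Substituting the transpose signs,
\[
U_C^* P U_C^T=(-1)^{p^T\Omega_+p+(Cp)^T\Omega_+(Cp)}\,\big(\pm P_{Cp}\big).
\]
Now conjugate by the full composite $U_C^TU_C$, applying $U_C$ first and then $U_C^T$. Applying $U_C$ gives $\pm P_{Cp}$. Applying $U_C^T$ then maps $P_{Cp}$ back to $P$, with a sign obtained by inverting the relation above. The unknown $\pm$ phases of $U_C$ cancel, leaving
\[
WPW^\dagger=(-1)^{p^T\Omega_+p+p^TC^T\Omega_+Cp}\,P .
\]

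\textbf{Step 3: linearising the exponent (main obstacle).} The exponent is $p^T(\Omega_++C^T\Omega_+C)p$. Since $C$ is symplectic, $C^T\Omega C=\Omega$. Because $\Omega_++\Omega_+^T=\Omega$, the matrix $\Omega_++C^T\Omega_+C$ has zero polar form, so it is symmetric as a quadratic form. A quadratic form with zero polar form is linear over $\mathbb F_2$, using $p_i^2=p_i$:
\[
p^T(\Omega_++C^T\Omega_+C)p=\mathrm{diag}(\Omega_++C^T\Omega_+C)^Tp=\mathrm{diag}(C^T\Omega_+C)^Tp,
\]
since $\mathrm{diag}(\Omega_+)=0$. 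Matching this against $w^T\Omega p$ gives $\Omega^T w=\mathrm{diag}(C^T\Omega_+C)$, and since $\Omega^{-1}=\Omega$ over $\mathbb F_2$, this yields $w=\Omega\,\mathrm{diag}(C^T\Omega_+C)$.

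The delicate part is the sign bookkeeping in Step 2. One has to verify that the Clifford's own phase convention cancels in the composite, and that transposition contributes exactly the $\Omega_+$ parity at both ends.

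Finally, $w=0$ exactly when $U_C^TU_C\propto\mathbb I$, which is the case for real Cliffords since then $U_C^T=U_C^\dagger$. This justifies the interpretation of $w$ as a measure of how far $U_C$ is from being a real Clifford.
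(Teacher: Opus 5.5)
Your proposal is correct and follows essentially the same route as the paper: show that $U_C^TU_C$ conjugates every Pauli to $\pm$ itself, obtain the sign $(-1)^{p^T\Omega_+p+(Cp)^T\Omega_+(Cp)}$ by transposing the relation $U_CPU_C^\dagger=\pm P_{Cp}$, and linearise the exponent using $C^T\Omega C=\Omega$. The only difference is that you cite the standard fact that a Clifford with trivial symplectic action is a Pauli up to phase, whereas the paper proves it directly by expanding the unitary in the Pauli basis and using non-degeneracy of the symplectic form.
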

\begin{proof}
    Take any Pauli $P$. Since $U_C$ is a Clifford, we have $Q=U_CPU_C^\dagger$ for some other Pauli $Q$. Now
    \begin{align}
        U_C^TU_CPU_C^\dagger U_C^*&=U_C^TQU_C^*\\
            &=(U_C^\dagger Q^T U_C)^T\notag\\
            &=\pm (U_C^\dagger QU_C)^T\notag\\
            &=\pm P^T=\pm P\notag
    \end{align}
    which shows that the $U_C^TU_C$ conjugates any Pauli $P$ to itself up to a sign. We can show that any unitary $V$ satisfying this must be proportional to a Pauli. We start by writing
    \begin{equation}
        VPV^\dagger=s(P) P\iff VP=s(P)PV
    \end{equation}
    where $s(P)$ is the sign associated to $P$. Write $V$ in the Pauli basis $V=\sum_ac_aP_a$, and note that, for every term $P_a$, we have $P_aP=(-1)^{[P_a,P]}PP_a$. This gives us
    \begin{align}
        VP=\sum_ac_aP_aP&=\sum_ac_a(-1)^{[P_a,P]}PP_a=s(P)\sum_ac_aPP_a=s(P)PV.
    \end{align}
    Comparing term by term, we have that if $c_a\neq 0$ we must have $(-1)^{[P_a,P]}=s(P)$ $\forall P$. Now assume $\exists$ $a,b$ s.t. $c_a,c_b\neq 0$. Then for every Pauli $P$
    \begin{equation}
        (-1)^{[P_a,P]}=s(P)=(-1)^{[P_b,P]},
    \end{equation}
    which means that $[P_a\cdot P_b,P]=0$ $\forall P$. However, since the symplectic form is non-degenerate this means $P_a\cdot P_b=\mathbb I\iff P_a=P_b$. Hence, $V\propto W\in\mathcal P_n$. 

    In order to see the specific form for $W$, we can look at its action on a generic Pauli P. Take first $U_C$, then
    \begin{equation}
        \label{eqn:CCT_eqn1}
        U_CPU_C^\dagger=\epsilon_qQ,\quad \text{for $Q$ s.t.}\quad q=Cp.
    \end{equation}
    for some phase $\epsilon_q$.Now, taking the transpose of the above expression gives
    \begin{equation}
        \label{eqn:CCT_eqn2}
        U_C^*P^TU_C^T=\epsilon_qQ^T\iff (-1)^{p^T\Omega_+p}U_C^*PU_C^T=\epsilon_q(-1)^{q^T\Omega_+q}Q.
    \end{equation}
    Plugging \Cref{eqn:CCT_eqn1} into \Cref{eqn:CCT_eqn2} yields
    \begin{equation}
        (-1)^{p^T\Omega_+p}U_C^*PU_C^T=(-1)^{q^T\Omega_+q}U_CPU_C^\dagger,
    \end{equation}
    and thus
    \begin{equation}
        U_C^TU_CPU_C^\dagger U_C^*=(-1)^{p^T\Omega_+p+q^T\Omega_+q}P.
    \end{equation}
    Now we can reduce to 
    \begin{equation}
        U_C^TU_CPU_C^\dagger U_C^*=(-1)^{p^T\Omega_+p+(Cp)^T\Omega_+(Cp)}P.
    \end{equation}
    Then, $W$ corresponds to the Pauli whose commutation sign with $P$ is given by $(-1)^{p^T\Omega_+p+(Cp)^T\Omega_+(Cp)}$ which precisely measures the difference between the imaginarity of $P$ and its image under $U_C$. In symplectic notation,
    
    \begin{equation}
        w^T\Omega \cdot p=p^T\Omega_+p+(Cp)^T\Omega_+(Cp)=p^T(C^T\Omega_+C+\Omega_+)p.
    \end{equation}
    Since $C$ is symplectic the polar form of this quadratic expression is 
    \begin{equation}
        (C^T\Omega_+C+\Omega_+)+(C^T\Omega_+C+\Omega_+)^T=C^T\Omega C+\Omega=0
    \end{equation}
    so that this form is actually linear, which means that only the diagonal matters. Then we can rewrite
    \begin{equation}
        p^T(C^T\Omega_+ C+\Omega_+)p=\text{diag}(C^T\Omega_+C+\Omega_+)\cdot p=\text{diag}(C^T\Omega_+C)\cdot p.
    \end{equation}
    Ultimately, $w$ is the vector such that
    \begin{equation}
        w^T\Omega=\text{diag}(C^T\Omega_+C)
    \end{equation}
    so
    \begin{equation}
        w=\Omega\cdot\text{diag}(C^T\Omega_+C)^T.
    \end{equation}
\end{proof}

With this we can now show how the action of a Clifford on a quadratic Bell symmetry looks like.

\mainlemma*

\begin{proof}
    We begin by looking at the effect of a Clifford on the Bell distribution of the state $\rho,$ which unlike for the Pauli distribution the transformation is more involved.
    
    Action by a Clifford $U_C$ on a state performs a linear transformation on the Pauli support such that for $p\in\text{supp}(\mathsf p_\rho)$
    \begin{equation}
        p\mapsto Cp+r.
    \end{equation}
    
    Then let $C\in \mathrm{Sp}(2n,\mathbb F_2)$, describe the Pauli tableau of the Clifford unitary $U_C$.
The effect of this term on the Bell distribution is then the following
\begin{equation}
    \text{tr}[Q(U_C\rho U_C^\dagger)Q(U_C\rho U_C^\dagger)^T]=\text{tr}[(U_C^{ T}QU_C)\rho (U_C^\dagger QU_C^{*})\rho^T]
\end{equation}
so the support of the Bell distribution, labeled by Pauli operators, transforms according to 
\begin{equation}
    U_C^T QU_C=(U_C^TU_C)U_C^\dagger QU_C
\end{equation}
which means that they transform under the standard adjoint action of the Clifford, with an additional affine shift given by $W=U_C^TU_C$. The overall transformation in symplectic representation is then an affine transformation $q\mapsto Cq+w$, where $w$ is of the form in \Cref{lemma:U^TU_1}.

Now, let the original distribution be labeled $q$ and the Clifford conjugated one be $\tilde q$ in symplectic representation. Since both distributions are related by a Clifford, from the transformation above, we have that $\tilde q=Cq+w\iff q=C'\tilde q+w'$ (where $C'=C^{-1}$, $w'=C^{-1}w$). Assume the original distribution satisfies some symmetry $M$, i.e., $q^TM q=c$, then we have
\begin{align}
    c&=(C'\tilde q+w')^TM(C'\tilde q+w')\\
    &=\tilde q^TC'^TMC'\tilde q+w'^T(M+M^T)C'\tilde q+w'^TMw'\notag\\
    \tilde c=c+w'^TMw'&=\tilde q^TC'^TMC'\tilde q+\lambda_C^TC'\tilde q\notag\\
    &=\tilde q^TC'^T(M+\lambda_C\lambda_C^T)C'\tilde q\notag
\end{align}
where $\lambda_C\coloneqq(M+M^T)w'$ and we used $a^2=a$ for $a\in\mathbb F_2.$ This means that after Clifford conjugation, the distribution also satisfies an (affine) quadratic constraint. In particular, for state vectors of the form $\ket{\psi}=U_C\ket{\psi_A}\otimes\ket{\psi_B}$, since $\ket{\psi_A}\otimes\ket{\psi_B}$, satisfies the purity symmetries $\Omega_+^A,\Omega_+^B$, plugging this in the expression above concludes the proof.
\end{proof}

The lemma shows that for the product case, after applying an unknown Clifford, the state need no longer be product across the original bipartition and may in fact be highly entangled. Nevertheless, it remains structured at the level of Bell sampling: the symmetries are not destroyed, but are transformed into new ones, and the state then presents an equally large non-trivial quadratic kernel. 

The main caveat here is that since $C$, i.e., the affine transformation, is also unknown, the specific form of the symmetry we are looking for is also unknown. Moreover, for stabilizer or vanilla hidden cut learning, the symmetries are essentially linear. Even for the hidden cut case, the symmetry has the form
\begin{equation}
    \Omega_+^A=\bigoplus_{i\in A}\Omega_+^{(i)},
\end{equation}
which behaves linearly on the qubit registers. One can then test the local purity form on each qubit and solve a system of linear equations to retrieve A. In our setting this need not be the case due to the affine transformation, and thus we will have to deal with truly quadratic symmetries. We will now discuss the additional classical post-processing steps required to recover these structures.

\subsection{Symmetry space retrieval}

We now assume access to Bell samples of a state of the form $\ket{\psi}=U_C\ket{\psi_A}\otimes\ket{\psi_B}$, we will later on discuss how many samples are required. Like we said, this distribution still satisfies some non-trivial quadratic symmetries. In particular, at least one of these corresponds to an affine transformed subsystem purity symmetry $\Omega_+^A$, for some symplectic transformation $C.$ 

The first step then is to recover a basis for the full quadratic kernel $\text{Ker}_2(\rho)$ from the Bell samples. The key observation in order to do this, is that we can turn the quadratic symmetries into linear ones, by lifting the Bell labels to a symmetric subspace, i.e., let

\begin{align}
    \phi:\mathbb F_2^{2n}&\rightarrow\mathbb F_2^{2n}\otimes \mathbb F_2^{2n}\\
        p&\mapsto p^{\otimes 2}\notag
\end{align}
and $|M\rangle\rangle\in \mathbb F_2^{2n\times 2n}$ be the vectorization of some quadratic symmetry $M$. Then,
\begin{equation}
    q^TMq=0\iff \langle\langle M|\phi(q)\rangle=0,
\end{equation}
which is a linear condition. Note that for non-zero affine shifts, we can have
\begin{equation}
    \langle\langle M|\phi(q)\rangle=1,
\end{equation}
we can fix some $q_0$ and look at the difference distribution, i.e.,
\begin{equation}
   \langle\langle M|\phi(q)+\phi(q_0)\rangle=0.
\end{equation}
With this in mind, we can then define the lifted Bell distribution by
\begin{equation}
    \tilde {\mathsf b}_\rho=\{\phi(q)+\phi(q_0):q,q_0\in\supp(\mathsf b_\rho)\}\subseteq\mathbb F_2^{2n}\otimes \mathbb F_2^{2n}.
\end{equation}
While this does not form a closed subspace, we can define its linear closure and look at this instead
\begin{equation}
    V_\rho=\text{span}_{\mathbb F_2}\tilde{\mathsf b}_\rho.
\end{equation}
The reason why we can do this is that at this stage we do not strictly care about the Bell distribution, but only about the symmetries it satisfies. It can be seen that if all elements of $\tilde {\mathsf b}_\rho$ satisfy some symmetry $M$, then this is also a symmetry for all of $V_\rho$, precisely because in this lifted space the symmetries behave linearly. Hence, the following equality is satisfied
\begin{equation}
     \text{Ker}_2(\rho)=V_\rho^{\perp},
\end{equation}
i.e., the quadratic kernel of the Bell distribution is equal (up to vectorization of the symmetries) to the linear kernel of the lifted Bell closure. Thus, given a basis of $V_\rho$ one can obtain a basis of $\text{Ker}_2(\rho)$ by standard Gaussian elimination over $\mathbb F_2.$

Given this basis, the task is then to find a symmetry inside $\text{Ker}_2(\rho)$ satisfying
\begin{equation}
    M = C^{-T}(\Omega_+^A+d)C^{-1}
\end{equation}
for some $C, A$ and $d$, and learning these, for the testing and learning tasks respectively.

In the next section we will show how to do this given a naive algorithm running in exponential time, and show that the number of Bell samples required for it to work remains bounded, and later lift it to an algorithm running in polynomial time.

\subsection{Sample complexity I: exponential runtime algorithm}

We just showed that, given access to the Bell distribution, the task can be reduced to finding an element of the particular form above in the kernel. However, in practice we are given a finite number of Bell samples in order to construct this kernel. From now on, let us call this $\text{Ker}_{2,N}(\rho)$, i.e., the observed kernel when given access to $N$ Bell samples of $\rho.$

It could happen then, that $\text{Ker}_{2,N}(\rho)$ is too big, for example because for some $M$, the Bell labels such that $q^TMq\neq c$ had too low probability and were not sampled. Then, we will need to make sure that for $N=O(\text{poly}(n))$, $\text{Ker}_{2,N}(\rho)$ is a good enough approximation of the true kernel. We will first show this for the setting where we use a naive algorithm with exponential runtime. 

We will restrict to the pure state case and describe the procedure of \Cref{alg:exp_runtime}. Given a basis of $\text{Ker}_{2,N}(\psi)$, the algorithm iterates over all symplectic matrices $C\in \text{Sp}(2n,\mathbb F_2)$. For each such $C$, we compute the conjugate
\begin{equation}
    M_i(C):=C^{-T}\Omega_+^{(i)}C^{-1},\quad i\in[n],
\end{equation}
as well as diagonal contributions $C^{-T}dC^{-1}$, for $d\in\mathbb F_2^{2n}$. Then, since for any subset $S\subseteq [n]$ one has
\begin{equation}
    \Omega_+^S=\sum_{i\in S}\Omega_+^{(i)},
\end{equation}
deciding whether there exists $\emptyset\neq S\subseteq[n]$ such that 
\begin{equation}
    C^{-T}(\Omega_+^S+d)C^{-1}=\sum_{i=1}^ns_iM_i(C)+C^{-T}dC^{-1}\in\text{Ker}_{2,N}(\psi),
\end{equation}
for $\vec s \in\mathbb F_2^n$, can be done by solving a linear system of equations. The solution to this will yield all possible cuts hidden by that $C.$ Having done this for all $C$'s we keep the choice of $C$ and $S$'s that yields the maximal number of cuts. Overall, the runtime of this algorithm is dominated by the iteration over $\text{Sp}(2n,\mathbb F_2)$, and scales roughly like $O(2^{n^2})$.

We now show that for such an algorithm, with polynomially many samples, we can construct a good enough kernel, such that if such an $M$ is found, then the state is of the given form with high probability, and hence the problems in Definition 1 and 2 can be solved with bounded number of samples. Before doing so, however, let us make a remark regarding the choice of the distance measure between states and the notation used.

\begin{remark}[{[Choice of distance measure]}]
    \label{remark:distance_measure}
    The relevant distances between pure states in \Cref{def:testing_task} and \Cref{def:learning_task} are expressed using the Euclidean $2$-norm on state vectors. In a slight abuse of notation, since a global phase is physically irrelevant, this is understood in the phase-minimized sense
    \begin{equation}
        d_2(\psi,\phi):=\min_{\theta\in[0,2\pi)}\left\|\ket{\psi}-e^{i\theta}\ket{\phi}\right\|_2
        =\sqrt{2-2|\langle\psi|\phi\rangle|}.
    \end{equation}
    We use this distance because it interacts particularly well with the vector-level product decompositions constructed by our algorithms. For pure states it is equivalent, up to universal inequalities, to the operationally more standard trace distance
    \begin{equation}
        D_{\rm tr}(\psi,\phi)=\frac12\left\|\ketbra{\psi}{\psi}-\ketbra{\phi}{\phi}\right\|_1
        =\sqrt{1-|\langle\psi|\phi\rangle|^2},
    \end{equation}
    with
    \begin{equation}
        D_{\rm tr}(\psi,\phi)\leq d_2(\psi,\phi)\leq \sqrt{2}\,D_{\rm tr}(\psi,\phi).
    \end{equation}
    It therefore leads to the same notion of efficient approximate learning while keeping the vector-level analysis transparent.
\end{remark}

\begin{lemma}[{[Exponential-time Clifford-product testing]}]
    \label{lemma:sample_test_exp}
    The testing task in \Cref{def:testing_task} can 
    be solved in exponential time using 
    \begin{equation}N=O\bigg(\frac{1}{\epsilon^2}\big(n^2+\log(\frac{1}{\delta})\big)\bigg)
    \end{equation}
    many copies of the state. 
\end{lemma}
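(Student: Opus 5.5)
The plan is to analyse the exhaustive-search procedure of \Cref{alg:exp_runtime}: run it on $\text{Ker}_{2,N}(\psi)$ and output Case A iff some $C\in\text{Sp}(2n,\mathbb F_2)$, some diagonal $d$ and some nontrivial cut $\emptyset\neq S\subsetneq[n]$ satisfy $C^{-T}(\Omega_+^S+d)C^{-1}\in\text{Ker}_{2,N}(\psi)$. Two directions must be handled.

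\emph{Completeness.} By \Cref{lemma:clifford_effect_preliminaries}, in Case A the true kernel $\text{Ker}_2(\psi)$ contains $C^{-T}(\Omega_+^A+\lambda_C)C^{-1}$. Since $\text{Ker}_2(\psi)\subseteq\text{Ker}_{2,N}(\psi)$ for every sample set, the search always accepts. Only the affine constant $c$ must be handled consistently, which the difference lift $\phi(q)+\phi(q_0)$ does.

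\emph{Soundness via a union bound over candidate symmetries.} Call a candidate $M$, together with its constant $c$, \emph{$\epsilon$-bad} if
\begin{equation}
\Pr_{q\sim\mathsf b_\psi}[q^TMq\neq c]\geq\mu(\epsilon),
\end{equation}
for a threshold $\mu(\epsilon)=\Theta(\epsilon^2)$ to be fixed. A bad $M$ survives $N$ i.i.d.\ Bell samples with probability at most $(1-\mu)^N\leq e^{-\mu N}$. The candidates are $2n\times 2n$ binary matrices with a bit $c$, so there are at most $2^{4n^2+1}$ of them. This count already covers every pair $(C,S,d)$, since distinct pairs may give the same matrix. Choosing $N=O(\mu^{-1}(n^2+\log(1/\delta)))$ makes every bad candidate violated by some sample with probability $\geq1-\delta$. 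Hence, with that probability, any accepted $M=C^{-T}(\Omega_+^S+d)C^{-1}$ satisfies $\Pr[q^TMq\neq c]<\mu$. The stated $1/\epsilon^2$ scaling requires $\mu=\Omega(\epsilon^2)$.

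\emph{Robustness, the main obstacle.} It remains to show that near-satisfaction of such a symmetry forces $\psi$ to be $\epsilon$-close to $U_C(\ket{\phi_A}\otimes\ket{\phi_B})$. Undo the Clifford, again using \Cref{lemma:clifford_effect_preliminaries}: $\psi'=U_C^\dagger\psi$ then satisfies $\Omega_+^S+d'$ up to violation probability $\mu$. Redo the generalised-purity computation of \Cref{lemma:bell_to_Pauli} and \Cref{lemma:diagonal_stab} approximately. The Bell-side sum equals $1-2\mu$ instead of $1$, and by the proof of \Cref{lemma:diagonal_stab} it is bounded above by
\begin{equation}
\frac{1}{2^{|S|}}\sum_{P}\text{Tr}[(P\otimes Q)\psi']^2\leq\text{Tr}[\psi_S'^2].
\end{equation}
This yields $\text{Tr}[\psi_S'^2]\geq1-2\mu$. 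In the $d=0$ case this follows directly from the subsystem purity equation \Cref{eqn:subsystem_purity_symm}. The Schmidt decomposition then gives a largest Schmidt coefficient $\lambda_1^2\geq\text{Tr}[\psi_S'^2]\geq 1-2\mu$. Truncating to the top Schmidt term gives a product state $\phi$ with $|\langle\psi'|\phi\rangle|=\lambda_1$, so
\begin{equation}
d_2\leq\sqrt{2-2\sqrt{1-2\mu}}=O(\sqrt\mu).
\end{equation}
Taking $\mu=c\,\epsilon^2$ makes this smaller than $\epsilon$, which contradicts Case B.

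The delicate points are the following. First, the diagonal or stabilizer shift $d$ must be handled robustly. If the signed sum for $d_S\neq0$ or $Q\neq\mathbb I$ is close to $1$, the same chain of inequalities still lower-bounds the purity by $1-2\mu$, so it suffices, though the signs must be checked carefully. Second, the affine constant $c$ must match: a wrong $c$ is detected immediately because the identity label appears with constant probability. The stabilizer gap $\eta$ is not needed in this exponential-time version. It enters only later, for the efficient algorithm.
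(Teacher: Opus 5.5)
Your proposal is correct and follows essentially the paper's argument: a union bound over $2^{O(n^2)}$ candidate symmetries, combined with the bound of the signed Bell expectation of $\Omega_+^S$ (plus diagonal) by the purity of the Clifford-rotated subsystem, which controls the distance to a product through the top Schmidt coefficient. The differences are that you run it in the contrapositive direction (nearly satisfied symmetry $\Rightarrow$ purity above $1-2\mu$ $\Rightarrow$ close to a product, versus the paper's far from product $\Rightarrow$ purity at most $(1-\epsilon^2/2)^2$ $\Rightarrow$ low acceptance), and that you handle the diagonal shift more cleanly through an absolute-value bound instead of the paper's appeal to \Cref{lemma:diagonal_stab}; your side remark that the identity Bell label has constant probability is false, since $\mathsf b_\psi(\mathbb I)=2^{-n}|\langle\psi|\psi^*\rangle|^2$, but it is never needed, because your union bound already covers both values of $c$ and the purity bound holds for either sign.
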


\begin{proof}
    For states satisfying the condition of 
    Case $A$ the argument is trivial since for any number of samples 
\begin{equation}
    \mathsf p(M\in\text{Ker}_{2,N}(\psi))=1
\end{equation}
for $M=C^T(\Omega_+^A+d)C$, and after performing the exhaustive search the algorithm will test positive. For states satisfying Case $B$ it could happen that a lack of samples leads to some such $M$ being accidentally identified. The probability of this happening however is bounded by the purity of the conjugated $S$ subsystem. Let us look first at the case $M=C^T(\Omega_+^A+\lambda_C)C$, i.e., when the diagonal matches the $C$ induced shift we have that

\begin{equation}
    \label{eqn:purity_probability}
    \mathsf p(M\in\text{Ker}_{2,2}(\psi))=\frac{1+\text{Tr}[\tilde\rho_A^2]^2}{2},\qquad\tilde\rho_{A}=\text{Tr}_{B}[U_C^{\dagger}\ket{\psi}\!\!\bra{\psi}U_C].
\end{equation}

Setting aside the effect of $U_C$,  we have already shown in \Cref{lemma:bell_to_Pauli} and \Cref{eqn:subsystem_purity_symm}, that the purity of some subsystem $A$ is related to $\Omega_+^A$ by
\begin{equation}
    \text{Tr}[\psi_A^2]=\sum_{Q\in\mathcal P_n}\mathsf b_{\psi}(Q)\cdot(-1)^{q^T\Omega_+^Aq}=\mathbb E_{q\sim\mathsf b_{\psi}}[(-1)^{q^T\Omega_+^Aq]}]
\end{equation}
from which it follows that
\begin{equation}
    \mathsf p_{q\sim\mathsf b_\psi}(\Omega_+^A\in\text{Ker}_{2,2}(\psi))=\frac{1+\mathbb E_{q\sim\mathsf b_{\psi}}[(-1)^{q^T\Omega_+^Aq]}]}{2}=\frac{1+\text{Tr}[\psi_A^2]}{2}.
\end{equation}
Looking at the difference distribution with respect to some fixed reference $q_0$ gives us
\begin{equation}
    \mathsf p_{q\sim\mathsf b_\psi}(\Omega_+^A\in\text{Ker}_{2,2}(\psi)|q_0)=\frac{1+(-1)^{q_0^T\Omega_+^Aq_0}\mathbb E\left[(-1)^{q^T\Omega_+^Aq}\right]}{2}=\frac{1+(-1)^{q_0^T\Omega_+^Aq_0}\text{Tr}[\psi_A^2]}{2}.
\end{equation}
Since conditioned on $q_0$, the acceptance events of $N$ many draws are independent, it holds that
\begin{equation}
    \label{eqn:purity_probabilityN}
    \mathsf p_N(\Omega_+^A\in\text{Ker}_{2,N}(\psi)|q_0)\leq\left(\frac{1+\text{Tr}[\psi_A^2]}{2}\right)^N
\end{equation}
which also holds after averaging over $q_0.$ By defining the corresponding subsystem $\tilde\rho_A$ accordingly, i.e., taking into account the Clifford action, the same argument gives \Cref{eqn:purity_probability,eqn:purity_probabilityN} for $M=C^{-T}(\Omega_+^A+\lambda_C)C^{-1}$. On the other hand, whenever the diagonal is not chosen appropriately, i.e., $M=C^{-T}(\Omega_+^A+d)C^{-1}$, for $d\neq\lambda_C$, it follows from \Cref{lemma:diagonal_stab} that 
\begin{equation}
    \mathsf p(M_d\in\text{Ker}_{2,2}(\psi))\leq p(M\in\text{Ker}_{2,2}(\psi)),
\end{equation}
and hence the same bound holds.

Now, if $\lambda_{\max}^A$ is the maximum eigenvalue of $\tilde\rho_A$, then
\begin{equation}
   \min_{\psi_A,\psi_B} ||U_C^\dagger \ket{\psi}-\ket{\psi_A}\otimes\ket{\psi_B}||_2^2=2-2\sqrt{\lambda^A_{\text{max}}}.
\end{equation}
Given that
\begin{equation}
    \forall U_C, \forall A\subset[n]\backslash\emptyset: \|U_C^\dagger\ket{\psi}-\ket{\psi_A}\otimes \ket{\psi_B}\|_2\geq \epsilon
\end{equation}
this means
\begin{equation}
    \lambda^A_{\text{max}}\leq\left(1-\frac{\epsilon^2}{2}\right)^2.
\end{equation}
However
\begin{equation}
    \text{Tr}[\tilde\rho_A^2]=\sum_i{\lambda^A_i}^2\leq \lambda^A_{\text{max}}\sum_i\lambda^A_i=\lambda^A_{\text{max}}\leq\left(1-\frac{\epsilon^2}{2}\right)^2,
\end{equation}
and thus after taking $N$ many samples
\begin{equation}
    \mathsf p_N(M\in\text{Ker}_{2,N}(\psi))\leq\big(1-\frac{3\epsilon^2}{8}\big)^N\leq  e^{-3N\epsilon^2/8}.
\end{equation}
Finally, we can roughly approximate the number of all candidate symmetries by
\begin{equation}
    |\mathcal M|\leq|\text{Sp}(2n,\mathbb F_2)|\times2^{3n}=2^{O(n^2)},
\end{equation}
which we can then use to perform a union-bound
\begin{equation}
    \mathsf p_N(\text{wrong}\ M)\leq |\text{Sp}(2n,\mathbb F_2)|\times2^n\cdot e^{-3N\epsilon^2/8}
\end{equation}
Setting the maximum error probability to $\delta$, it is then enough to use
\begin{equation}
    N=O\bigg(\frac{1}{\epsilon^2}\big(n^2+\log\left(\frac{1}{\delta}\right)\big)\bigg)
\end{equation}
many samples, so the number of samples required to make the probability to sample the wrong symmetries exponentially small remains efficient.
\end{proof}

\begin{lemma}[{[Exponential-time Clifford-product learning]}]
    \label{lemma:sample_learn_exp}
    The learning task in \Cref{def:learning_task} can be solved in exponential time using $N=O\bigg(\frac{n}{\epsilon^2}\big(n^2+\log(\frac{1}{\delta})\big)\bigg)$ many copies of the state.
\end{lemma}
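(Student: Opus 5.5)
We run the same exhaustive-search algorithm as in \Cref{lemma:sample_test_exp} and return the pair $(C,\{\tilde c_i\})$ that yields the largest number of compatible cuts, i.e.\ the finest partition. The extra factor $n$ in the sample complexity comes from an error-accumulation step: the output is a product over up to $n$ blocks, and each accepted cut must be accurate to $\epsilon/\sqrt{n}$ rather than $\epsilon$.

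\textbf{Soundness for a single cut.} Every candidate symmetry $M=C^{-T}(\Omega_+^S+d)C^{-1}$ that survives in $\text{Ker}_{2,N}(\psi)$ should certify that the corresponding Clifford-rotated subsystem $S$ is nearly pure. Replacing $\epsilon$ by $\epsilon'=\epsilon/\sqrt{n}$ in the argument of \Cref{lemma:sample_test_exp}, any $M$ whose subsystem satisfies $\text{Tr}[\tilde\rho_S^2]\le (1-\epsilon'^2/2)^2$ survives $N$ samples with probability at most $e^{-3N\epsilon'^2/8}$. \Cref{lemma:diagonal_stab} handles a wrong diagonal $d$ in the same way. A union bound over the $2^{O(n^2)}$ candidates then shows that, with probability at least $1-\delta$, every accepted symmetry has $\min\|U_C^\dagger\ket\psi-\ket{\phi_S}\ket{\phi_{S^c}}\|_2<\epsilon'$, provided
\begin{equation*}
N=O\bigl(\tfrac{n}{\epsilon^2}(n^2+\log(1/\delta))\bigr).
\end{equation*}
\textbf{Completeness.} The true decomposition $(C,\{c_i\})$ is always accepted, since its symmetries lie in $\text{Ker}_2(\psi)\subseteq\text{Ker}_{2,N}(\psi)$. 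So the maximum over $C$ returns some $m'\ge m$.

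\textbf{Composing the cuts.} The returned $\tilde C$ certifies $m'-1$ nested (or complementary) cuts $\tilde c_1|\tilde c_1^c,\ \tilde c_2|\tilde c_2^c,\dots$, each individually $\epsilon'$-close to product. From these we must build a single product vector $\bigotimes_i\ket{\phi_i}$ that is globally $\epsilon$-close to $U_{\tilde C}^\dagger\ket\psi$. For each block, let $\ket{\phi_i}$ be the top eigenvector of the reduced state on $\tilde c_i$. Write $1-\lambda_{\max}^{(i)}\le\epsilon'^2$ up to constants. A union-bound-type fidelity inequality for commuting-subsystem projectors, e.g.\ the quantum union bound or the simpler bound
\begin{equation*}
1-\langle\Pi_1\cdots\Pi_{m'}\rangle\le\sum_i(1-\langle\Pi_i\rangle)
\end{equation*}
for commuting projectors $\Pi_i=\ket{\phi_i}\!\bra{\phi_i}\otimes\mathbb I$ on disjoint supports, then gives $1-|\langle\psi|\bigotimes_i\phi_i\rangle|^2\le\sum_i(1-\lambda^{(i)}_{\max})\le m'\epsilon'^2\lesssim\epsilon^2$. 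This yields global $2$-norm error $O(\epsilon)$, and rescaling constants finishes the proof.

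\textbf{Main obstacle.} The difficulty is the composition step. One must verify that the disjoint-support projectors $\Pi_i$ genuinely commute and that the bound is additive rather than multiplicative. One must also handle the fact that the search certifies purity only for the subsets spanned by the solution space $\vec s$, not necessarily for each atom $\tilde c_i$. The plan for the latter is to take the atoms of the Boolean algebra generated by the accepted $S$'s. Each atom's purity is controlled because $\Omega_+^{S}+\Omega_+^{S'}$ for overlapping sets corresponds to unions and complements, and pure complements follow from global purity.
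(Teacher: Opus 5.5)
Your core argument is the paper's argument run in the forward direction. The paper fixes an $\epsilon$-far description $(U_C,\{C_i\}_{i=1}^m)$ and uses the same commuting-projector bound $1-\langle\prod_i\Pi_i\rangle\le\sum_i(1-\langle\Pi_i\rangle)$, by contradiction, to show that some block has purity at most roughly $1-\epsilon^2/m$. That block's symmetry $C^{-T}\Omega_+^{C_i}C^{-1}$ then survives $N$ samples with probability at most $e^{-N\epsilon^2/(2m)}$, and a union bound over $2^{O(n^2)}$ candidates with $m\le n$ finishes. Your per-symmetry soundness at scale $\epsilon/\sqrt{n}$ followed by composition is the contrapositive of this and gives the same $N$.

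The one step that fails as written is your treatment of atoms. Adding the quadratic forms gives $\Omega_+^{S}+\Omega_+^{S'}=\Omega_+^{S\triangle S'}$, a symmetric difference, not a union. So the accepted sets form an $\mathbb F_2$-subspace of subsets of $[n]$. It is closed under complement, because the transformed global purity with $S=[n]$ is always accepted, but it is not closed under intersection. For example, the span of $[4],\{1,2\},\{1,3\}$ has the four singletons as its atoms yet contains none of them, so the search certifies nothing about the atoms' purities.

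In the exact setting, intersections of pure subsystems of a pure state are indeed pure. An approximate version, however, degrades the parameters and would have to be chained across many intersections, which your $\epsilon^2/n$ budget does not obviously absorb.

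You do not need atoms at all. Let the exhaustive search output only partitions every block of which lies in the accepted subspace; the paper's proof implicitly assumes exactly this, since it requires every $M_i$ of the output description to be observed. Completeness is unaffected, because all blocks of the true partition $\{c_i\}$ are accepted, giving $m'\ge m$. Soundness then follows directly from your composition step.
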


\begin{proof}
    The proof proceeds similarly to the case before. In this case we need to make sure that we do not sample enough wrong symmetries such that the disentangled description fails to approximate the original state to accuracy $\epsilon$. 
    
    Let $(U_C,\{C_i\}_{i=1}^m)$ be a Clifford and a partition such that
    \begin{equation}
        \label{eqn:distance_guarantee}
        \left|\left|\ket{\psi}-U_C\bigotimes_{i=1}^m\ket{\phi_i}_{C_i}\right|\right|_2>\epsilon,
    \end{equation}
    for any choice of $\{\phi_i\}_{i=1}^m$. We can bound the probability of obtaining such a description in a similar way as before.
    
    Indeed, let
    \begin{equation}
        \rho_i^C=\text{Tr}_{C_i^c}[U_C^\dagger\ket\psi\!\!\bra\psi U_C].
    \end{equation}
    Just like before, the probability of observing the corresponding symmetry $M_i=C^{-T}\Omega_+^{i}C^{-1}$ (we ignore here diagonal affine shifts) is given by
    \begin{equation}
        \mathsf p_N(M_i\in\text{Ker}_{2,N}(\psi))\leq\left(\frac{1+\text{Tr}[(\rho_i^C)^2]}{2}\right)^N.
    \end{equation}
    Then we can show that given \Cref{eqn:distance_guarantee}, there is at least one $i$ for which $\mathsf p(M_i\in\text{Ker}_{2,N}(\psi))\leq1-\eta$. 
    Assume the contrary, i.e., every such block has high purity, such as
\begin{equation}
        \text{Tr}[(\rho_i^C)^2]>1-\eta,\quad\forall i.
    \end{equation}
    Let $\lambda_i$ be the largest eigenvalue of $\rho_i^C$. Then
    \begin{equation}
        \text{Tr}[(\rho_i^C)^2]\leq \lambda_i.
    \end{equation}
    Now let $\ket{u_i}$ be the top eigenvector of $\rho_i^C$ and its associated projector $\Pi_i=\ket{u_i}\!\!\bra{u_i}$. Then
    \begin{equation}
        \bra\psi U_C\Pi_iU_C^\dagger\ket\psi=\lambda_i>1-\eta.
    \end{equation}
    Since the different $\Pi_i$'s act on different blocks, they commute, and
    \begin{equation}
        \langle\psi |U_C\prod_{i=1}^m\Pi_iU_C^\dagger|\psi\rangle\geq1-\sum_{i=1}^m(1-\bra\psi U_C\Pi_i U_C^\dagger\ket\psi)>1-m\eta.
    \end{equation}
    However,
    \begin{equation}
        \prod_{i=1}^m\Pi_i=\bigotimes_{i=1}^m\ket{u_i}\!\!\bra{u_i},
    \end{equation}
    so
    \begin{equation}
        \left|\langle\psi| U_C\bigotimes_{i=1}^m|u_i\rangle\right|^2>1-m\eta,
    \end{equation}
    and equivalently
    \begin{equation}
        \left|\left|\ket{\psi}-U_C\bigotimes_{i=1}^m\ket{u_i}_{C_i}\right|\right|_2\leq \sqrt{2-2\sqrt{1-m\eta}}.
    \end{equation}
    On the other hand, from our initial claim, 
    we must have that
    \begin{equation}
        \epsilon<\sqrt{2-2\sqrt{1-m\eta}}\iff \eta>\frac{\epsilon^2-\epsilon^4/4}{m}.
    \end{equation}
    This means that for $\eta=\frac{\epsilon^2-\epsilon^4/4}{m}$, there is some $i$, such that 
    \begin{equation}
        \text{Tr}[(\rho_i^C)^2]\leq\bigg(1-\frac{\epsilon^2-\epsilon^4/4}{m}\bigg)\approx1-\frac{\epsilon^2}{m},
    \end{equation}
    and then
    \begin{equation}
        \mathsf p_N(M_i\in\text{Ker}_{2,N}(\psi))\leq \left(1-\frac{\epsilon^2}{2m}\right)^N\leq e^{-\frac{N\epsilon^2}{2m}}.
    \end{equation}

    So the probability of observing symmetries leading to wrong descriptions of the state ($\epsilon$ approximation) gets exponentially suppressed with the number of samples. Then union bounding over the total number of possible wrong symmetries, we roughly get
    \begin{equation}
        \mathsf p_N(\text{wrong}\ M)\leq 2^{O(n^2)}\cdot e^{-\left(\frac{N\epsilon^2}{2m}\right)},
    \end{equation}
    where we have bounded the number of wrong purity symmetries by 
    \begin{equation}
        |\mathcal M_{wrong}|\leq|\text{Sp}(2n,\mathbb F_2)|\times 2^{3n}=2^{O(n^2)}.
    \end{equation}
    Then for a maximum error probability $\delta$, it is enough to use 
    \begin{equation}
        N\geq\frac{m}{\epsilon^2}\bigg(\log(|\mathcal M_{wrong}|)+\log(\frac{1}{\delta})\bigg)
    \end{equation}
    which since $m\leq n$, gives
    \begin{equation}
        N=O\bigg(\frac{n}{\epsilon^2}\big(n^2+\log(\frac{1}{\delta})\big)\bigg)
    ,\end{equation}
    Then, for such $N$, the probability to observe symmetries yielding some description $U_C\bigotimes_{i=1}^m\ket{\phi_i}_{C_i}$ that is $\epsilon$ far from the true $\ket{\psi}$ in 2-norm is exponentially suppressed. Then, for any set of transformed purities that the algorithm finds, we can guarantee that the description this will yield is $\epsilon$ close to the true state.
\end{proof}

Having shown this, we will now discuss the computationally efficient protocols and their sampling complexities.

\begin{algorithm}[ht]
\label{alg:exp_runtime}
\caption{Exponential runtime algorithm for task in \Cref{def:learning_task}.}
\label{alg:exponential_algorithm}
\begin{algorithmic}[1]

\Statex \textbf{Input:} 
\(2N=O(\frac{n}{\epsilon^2}(n^2+\log(1/\delta)))\) copies of some state vector $\ket{\psi}$.

\Statex \textbf{Promise:} There exist a Clifford $U_C$, and some partition $\{c_i\}_{i=1}^m$ such that for some $\{\psi_i\}_{i=1}^m$ it holds that $\ket{\psi}=U_C\bigotimes_{i=1}^m\ket{\psi_i}_{c_i}$.

\Statex \textbf{Output:} 
A Clifford $U_{\tilde C}$ and a partition $\{\tilde{ c}_i\}_{i=1}^{m'}$ for $m'\geq m$ such that $||U_{\widetilde C}\bigotimes_{i=1}^{m'}\ket{\phi_i}_{\tilde C_i}-\ket\psi||_2\leq \epsilon$ for some $\{\phi_i\}_{i=1}^{m'}.$

\State Perform Bell sampling on the \(2N\) copies of \(\ket{\psi}\) and construct a basis for $\text{Ker}_{2,N}(\psi)$.

\ForAll{\(C\in \text{Sp}(2n,\mathbb F_2)\)}
    \For{\(i=1,\dots,n\)}
        \State Compute and store $M_i(C):=C^T\Omega_+^{(i)}C$.
        
    \EndFor

    \State Solve the linear system 
    $\sum_{i=1}^n s_i M_i(C)+C^T dC
    \in
    \text{Ker}_{2,N}(\psi)$,
    
    with unknowns $\mathbf s=(s_1,\dots,s_n)\in\mathbb F_2^n$ and $\mathbf d\in\mathbb F_2^{2n}$.
\EndFor
\State \textbf{return} $U_{\tilde C}$ \textit{and} $\{\tilde{c}_i\}_{i=1}^{m'}$ \textit{that yield the largest} $m'$.

\end{algorithmic}
\end{algorithm}

\section{Computationally efficient algorithm}
\label{sec:comp_eff}
Above, we have shown that the problem can be solved using polynomially many samples, by phrasing it as a symmetry search problem, albeit at an exponential cost in runtime. Here we will show that an efficient algorithm running in polynomial time also exists, and again works given only polynomially many copies of the state.
In order to show this, we will need to introduce a key fact about Bell symmetries of product states. We will show, that for some state vector $\ket{\psi}=\ket{\psi_A}\otimes\ket{\psi_B}$, not only $\Omega_+^A,\Omega_+^B\in\text{Ker}_2(\psi)$, but also, for any other symmetry $M\in\text{Ker}_2(\psi)$, this needs to respect the cut.

\productblock*

\begin{proof}

Let
\begin{equation}
    S_A=\supp(\mathsf b_{\psi_A}),
    \qquad
    S_B=\supp(\mathsf b_{\psi_B}).
\end{equation}
Since $\ket{\psi}$ is a product state vector, its Bell distribution factorizes
\begin{equation}
    \mathsf b_\psi(P_A\otimes P_B)
    =
    \mathsf b_{\psi_A}(P_A)\,
    \cdot\mathsf b_{\psi_B}(P_B),
\end{equation}
and hence the support of $\mathsf b_\psi$ satisfies
\begin{equation}
    \supp(\mathsf b_\psi)
    =
    S_A\otimes S_B.
\end{equation}
Let us then write the samples as $q=(q_A,q_B)\in\mathbb F_2^{2|A|}\times \mathbb F_2^{2|B|}$, and take arbitrary $(q_A,q_B),(q_A',q_B')\in\text{supp}(\mathsf b_\psi)$. Due to the factorisation of $\text{supp}(\mathsf b_\psi)$, we can guarantee that $(q'_A,q_B),(q_A,q_B')\in\text{supp}(\mathsf b_\psi)$. Now let $M$ be some symmetry of $\ket{\psi}$, of the form
\begin{equation}
    M
    =
    \begin{pmatrix}
        M_A & M_{A, B}\\
        0   & M_B
    \end{pmatrix}.
\end{equation}
We will show that $M_{A, B}=0$. Let $Q_M(q):=q^TMq$. Then, for the 4 samples above one can check that
    \begin{equation}
        \label{eqn:cross_term}
        Q_M(q_A,q_B)+Q_M(q'_A,q'_B)+Q_M(q'_A,q_B)+Q_M(q_A,q'_B)=(q_A+q'_A)^TM_{A, B}(q_B+q_B').
    \end{equation}
Since all four terms are contained in the support, they must satisfy the constraint $M$, and hence for all four samples $Q_M$ has the same constant value $c.$ Thus, \Cref{eqn:cross_term} reduces to
\begin{equation}
    (q_A+q'_A)^TM_{A, B}(q_B+q_B')=0.
\end{equation}
Now suppose for contradiction that M is not block-diagonal, i.e., $M_{A, B}\neq 0.$ If there exists $q_B,q'_B\in S_B$ such that
\begin{equation}
    v_A:=M_{A, B}(q_B+q_B')\neq 0,
\end{equation}
then
\begin{equation}
    (q_A+q_A')\cdot\quad v_A=0, \forall q_A,q'_A\in S_A\iff q_A\cdot v_A=c_A\quad \forall q_A\in S_A.
\end{equation}
This is a non-trivial linear constraint on $S_A$, hence a stabilizer-type constraint, contradicting the assumption.

Otherwise,
\begin{equation}
    M_{A, B}(q_B+q_B')=0,\quad\forall q_B,q'_B\in S_B.
\end{equation}
But again, since $M_{A, B}\neq0$, there is some row of $M_{A, B}$ defining a non-zero vector $v_B$ such that
\begin{equation}
    (q_B+q'_B)\cdot v_B=0\quad\forall q_B,q'_B\in S_B\iff q_B\cdot v_B=c_B\quad\forall q_B\in S_B,
\end{equation}
which is now a non-trivial linear constraint on $S_B$, contradicting again the assumption. Hence
\begin{equation}
    M_{A, B}=0.
\end{equation}
Therefore, every quadratic symmetry is block-diagonal across the cut, i.e.,
\begin{equation}
    M=M_A\oplus M_B.
\end{equation}
Finally, by taking $(q_A,q_B)$ and $(q_A',q_B)$ in $\text{supp}(\mathsf b_\psi)$
\begin{equation}
    Q_M(q_A,q_B)+Q_M(q_A',q_B)=Q_{M_A}(q_A)+Q_{M_A}(q'_A)=0\iff Q_{M_A}(q_A)=k_A,\quad\forall q_A\in S_A
\end{equation}
and similarly for $M_B.$ Thus, this also means that the two blocks $M_A,M_B$ define local quadratic symmetries on the factors themselves. This proves the claim.

\end{proof}
The main idea of \Cref{thm:product_block} is that for product states, up to the presence of stabilizers, not only do we have that the subsystems are pure, but also that all other quadratic symmetries of $\psi$ need to stem from symmetries of the respective subsystems.

It is important to note that we can always assume this to be the case, namely the symmetries to be block-diagonal. That is because we can always identify the linear constraint by performing some stabilizer learning procedure on the Bell samples first. Having done that we can always find a Clifford $U_{C'}$ that maps the stabilizers onto some local subsystem, such that
\begin{equation}
    U_{C'}\ket{\psi}=\ket{ \psi'}\otimes\ket{0}^{\otimes k}.
\end{equation}
Then, if $\ket{\psi}$ had some other hidden cut besides these trivial \textit{stabilizer-like} ones, $\ket{\psi'}$ also does.
Now we show how this fact carries over to the actual setting that we care about, where a Clifford follows the product state.

\blockdiagonalisable*

\begin{proof}
    By the \Cref{thm:product_block} all symmetries of $\ket{\psi_A}\ket{\psi_B}$ are block diagonal across $A|B$. Then from \Cref{lemma:clifford_effect_preliminaries}, after applying $U_C$ on the state, all symmetries are of the form $M=(C^{-1})^T(M_A\oplus M_B+\lambda\lambda^T)(C^{-1})$, where the $\lambda\lambda^T$ term is diagonal and does not affect the block-structure. Then $C^TMC=M'_A\oplus M'_B$.
\end{proof}

This is the crucial point we rely upon to show that both testing and learning can be done computationally efficiently. We will first show how an algorithm running in polynomial time can exploit \Cref{cor:block-diagonalisable} to recover $C$ and the corresponding partition, in case it exists. Later, we will show again that polynomially many samples are enough to guarantee a kernel satisfying these properties.

\subsection{Simultaneous block-diagonalisation of the kernel}

Our algorithm will exploit the properties above, and recover $U_C$ and $\{C_i\}$ (in case they exist) by block-diagonalising $\text{Ker}_2(\psi)$. In order to block-diagonalise a set of quadratic forms $\mathcal M=\{M_i\}_{i=1}^{d_{\rm Ker}}$, we will again rely on the ability to block-diagonalise their polar forms $\mathcal B=\{B_i\}_{i=1}^{d_{\rm Ker}}$. Even if that implies losing information about $\text{diag}(M_i)$, by \Cref{lemma:diagonal_stab} we can always assume this to be 0 since $\ket{\psi}$ can be assumed to have no stabilizers. 

For the task that we now want to solve however, we will need to resort to simultaneous block-diagonalisation of $\mathcal B$, which asks for a single change of basis that brings every member of $\mathcal B$ into the same block structure. This is substantially more restrictive than block-diagonalising the forms one at a time: the latter may select incompatible bases and therefore reveal no common decomposition. 
Take for instance $B_1=\Omega\in\mathcal B$, which is always the polar form of some symmetry in $\text{Ker}_2(\psi)$ for pure states $\ket{\psi}$, and $B_2\in\mathcal B$ some arbitrary bilinear form. Clearly, $B_1$ is already block-diagonal, so $A_1=\mathbb I_{2n}$. On the other hand, in general, $A_2\in GL(2n,\mathbb F_2)$ block-diagonalising $B_2$, will be such that
\begin{equation}
    A_2\Omega A_2^T\neq V_A\oplus V_B.
\end{equation}

Equivalently, after converting the forms into linear operators, a simultaneous block structure corresponds to a decomposition of the underlying vector space into subspaces invariant under the algebra generated by all of these operators. This connects SBD to the classical theory of modules and finite-dimensional associative algebras, and makes common invariant subspaces---rather than the spectral decomposition of any individual matrix---the relevant objects \cite{chistov1997polynomial,ronyai1990computing,wilson2008finding}. In our setting there is an additional constraint: the common change of basis must ultimately be symplectic, because only then does it correspond to a Clifford transformation. Indeed we require an operation acting by congruence such that
\begin{equation}
    C^TB_iC=B_{i,A}\oplus B_{i,B},\quad C\in \text{Sp}(2n,\mathbb F_2).
\end{equation}

This precise constraint however, allows us to explicitly map our problem on quadratic forms to one of linear operators, thus enabling us to use the tools in \cite{chistov1997polynomial,ronyai1990computing,wilson2008finding} to perform SBD efficiently. We show this step in the following lemma.

\begin{lemma}[{[Congruence-to-similarity reduction]}]
    \label{lemma:bilinear_to_linear}
    Given $\mathcal B=\{B_i\}_{i=1}^{d}$ be some set of bilinear forms acting on $K=\mathbb F_2^{2n}$, let $\mathcal T=\{T_i\}_{i=1}^d$ be of the form
    \begin{equation}
        T_i=\Omega B_i,\quad i\in[d].
    \end{equation}
    Then, given $C\in \text{Sp}(2n,\mathbb F_2)$ such that
    \begin{equation}
        C^{-1}T_iC=V^i_A\oplus V^i_B, \quad\forall T_i\in \mathcal T,
    \end{equation}
    then 
    \begin{equation}
        C^{T}B_iC=W^i_A\oplus W^i_B, \quad\forall B_i\in \mathcal B.
    \end{equation}
\end{lemma}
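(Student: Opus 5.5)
The plan is a direct algebraic computation that uses only that $C$ is symplectic and that $\Omega$ is itself block-diagonal for every qubit cut. Throughout, "block-diagonal across $A|B$" means with respect to the splitting $\mathbb F_2^{2n}=\mathbb F_2^{2|A|}\oplus\mathbb F_2^{2|B|}$ induced by a partition of the qubits. This is the only notion relevant for a Clifford disentangler, so each block consists of whole pairs $(x_i,z_i)$.

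First I record two facts about $\Omega$ over $\mathbb F_2$. Since $\Omega=\bigoplus_i\lsmatrix 0&1\\1&0\rsmatrix$ and each $2\times 2$ block squares to the identity, we have $\Omega^2=\mathbb I_{2n}$, so $\Omega^{-1}=\Omega=\Omega^T$. Moreover, $\Omega$ is a direct sum of single-qubit blocks, so for any qubit partition it splits as $\Omega=\Omega^A\oplus\Omega^B$.

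Next I rewrite $C^{-1}$ using the symplectic condition. From $C^T\Omega C=\Omega$ we get $C^{-1}=\Omega^{-1}C^T\Omega=\Omega C^T\Omega$. Substituting this into the hypothesis gives
\begin{equation*}
C^{-1}T_iC=\Omega C^T\Omega\,\Omega B_i C=\Omega\,(C^TB_iC).
\end{equation*}
The hypothesis therefore reads $\Omega\,(C^TB_iC)=V^i_A\oplus V^i_B$. Multiplying on the left by $\Omega$ and using $\Omega^2=\mathbb I_{2n}$, we obtain
\begin{equation*}
C^TB_iC=(\Omega^A\oplus\Omega^B)(V^i_A\oplus V^i_B)=\Omega^AV^i_A\oplus\Omega^BV^i_B,
\end{equation*}
which is block-diagonal with $W^i_A=\Omega^AV^i_A$ and $W^i_B=\Omega^BV^i_B$. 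This holds for every $i$ simultaneously, because the same $C$ is used for all $T_i$.

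The argument is essentially a two-line identity, so the main point needing care is not a hard estimate but a bookkeeping one. One must make sure that the block structure in the hypothesis is aligned with the symplectic pairing, i.e. each block is a union of qubit pairs, so that $\Omega$ commutes with the block decomposition. If the blocks of $C^{-1}T_iC$ were allowed to split an $(x_i,z_i)$ pair, the multiplication by $\Omega$ could mix them. I would therefore state explicitly that the cuts considered are qubit cuts. This is the setting of \Cref{cor:block-diagonalisable}, and also what a symplectic $C$ realising a Clifford disentangler produces. With that convention the converse also follows by the same identity, so congruence block-diagonalisation of $\mathcal B$ by a symplectic $C$ is equivalent to similarity block-diagonalisation of $\mathcal T$.
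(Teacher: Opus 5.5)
Your proof is correct and follows the paper's argument. Both use the symplectic condition to reduce to $C^{-1}T_iC=\Omega\,C^TB_iC$ and then multiply by $\Omega=\Omega^A\oplus\Omega^B$; the paper substitutes $\Omega=C\Omega C^T$ directly, while you invert $C^T\Omega C=\Omega$ to get $C^{-1}=\Omega C^T\Omega$, and your requirement that the blocks be unions of qubit pairs is the same assumption the paper makes implicitly when it writes $\Omega_AV_A^i\oplus\Omega_BV_B^i$.
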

\begin{proof}
    Since $C\in\text{Sp}(2n,\mathbb F_2)$, it satisfies
    \begin{equation}
        C\Omega C^T=\Omega.
    \end{equation}
    Then $\forall i\in[d]$ we have
    \begin{align}
        V_A^i\oplus V_B^i&=C^{-1}T_iC\\
        &=C^{-1}\Omega B_iC\notag\\
        &=C^{-1}C\Omega C^TB_iC\notag\\
        &=\Omega C^TB_iC,\notag
    \end{align}
    and thus
    \begin{equation}
        C^TB_iC=(\Omega_AV_A^i)\oplus(\Omega_BV_B^i).
    \end{equation}
\end{proof}

The reason \Cref{lemma:bilinear_to_linear} is useful is because by considering
the set $\mathcal T$ we can look at SBD via similarity transformation ($C^{-1}(\cdot)C$) instead of congruence ($C^T(\cdot)C$), which allows us to use some tools for finite algebras over finite fields. Indeed, we can now enlarge $\mathcal T,$ which is a vector space, in order to consider the full algebra $\mathfrak t=\text{span}_{\mathbb F_2}\langle T_i\rangle_{i=1}^{d_{\rm Ker}}$. This can be done since given a basis where any two $T_i,T_j\in\mathcal T$ are block-diagonal, then trivially $T_i+T_j$ and $T_i\cdot T_j$ are also block-diagonal. 

By considering $\mathfrak t$, the block-diagonalisation problem becomes a question about finding an irreducible decomposition of $K=\mathbb F_2^{2n}$ into minimal $\mathfrak t-$invariant subspaces $\{V_i\}$, which can be solved by explicitly constructing their corresponding projectors $\{\pi_i\}$. This set of projectors forms a subset of the \textit{centralizer} algebra $\mathfrak e$, i.e.,
\begin{equation}
    \label{eqn:centralizer}
    \{\pi_i\}\subseteq\mathfrak e\equiv \text{End}_{\mathfrak t}(K)=\{\phi\in\text{End}_{\mathbb F_2}(K):\phi a=a\phi,\forall a\in\mathfrak t\},
\end{equation}
which is itself an associative algebra, and can be efficiently constructed by solving a small system of linear equations. Since $\mathfrak e$ forms an associative algebra, by the Wedderburn-Malcev theorem \cite{ronyai1990computing}, we know that it admits a unique decomposition into a semisimple part and a nilpotent radical
\begin{equation}
    \mathfrak e=S_1\oplus\dots\oplus S_m\oplus R,
\end{equation}
where each $S_i\cong M_{n_i}(D_i) $ for some division ring $D_i$, and $R$ is a nilpotent subalgebra, i.e., $R^k=0$, for some finite $k$. We will reduce the search of projectors to the semisimple part, since $R$ contains no non-trivial idempotents. Constructing a basis for $R$ can be done efficiently for $2n\times 2n$ matrices over $\mathbb F_2$ by using the algorithm given in Ref.~\cite{ronyai1990computing}.
 With this, it is efficient to construct a subalgebra $\bar{\mathfrak e}\equiv\mathfrak e/R\cong S_1\oplus\dots\oplus S_m$. We will now show that each such simple component $S_i$ induces a projector $\pi_{S_i}$, and that in fact, in order to find the \textit{subsystem-induced} blocks that we are interested in, it is enough to compute these.

\begin{lemma}[{[Simple components recover hidden cuts]}]
    \label{lemma:simple_comp}
    Given $\mathfrak e$ defined in \Cref{eqn:centralizer}, and $\bar{\mathfrak e}$ its semisimple component. Then, each simple component $S_i\subseteq\bar{\mathfrak e}$ induces some projector $\pi_{S_i}$. Moreover, given some state vector $\ket{\psi}=U_C\ket{\psi_A}\otimes\ket{\psi_B}$, there is some $\vec i,\vec j$ such that the projectors
    \begin{equation}
        \pi_{\vec i}=\bigoplus_{\vec i}\pi_{S_i},\quad \pi_{\vec j}=\bigoplus_{\vec j}\pi_{S_j}
    \end{equation}
    can be lifted to projectors $\pi^C_{A},\pi^C_B$ onto the blocks defined by the hidden cut $A|B$.

\end{lemma}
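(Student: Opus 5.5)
The plan has two halves. First, each simple component $S_i$ of $\bar{\mathfrak e}$ yields an idempotent of $\mathfrak e$. Second, in the Clifford-dressed setting the hidden-cut projectors are sums of such idempotents.

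For the first half, let $1_{S_i}$ be the identity of $S_i$; these are central, orthogonal idempotents of $\bar{\mathfrak e}$ summing to $1$. Since $R$ is a nilpotent ideal, idempotents lift modulo $R$: starting from any preimage $x$ of $1_{S_i}$, the standard iteration $x\mapsto 3x^2-2x^3$ (over $\mathbb F_2$, $x\mapsto x^2+\dots$, or Wedderburn–Malcev lifting via the algorithm of Ref.~\cite{ronyai1990computing}) converges after $O(\log k)$ steps to an idempotent $\pi_{S_i}\in\mathfrak e$. Lifting the complete orthogonal family simultaneously (successively lifting inside $(1-\pi_{S_1}-\dots)\mathfrak e(1-\pi_{S_1}-\dots)$) gives orthogonal idempotents with $\sum_i\pi_{S_i}=\mathbb I$. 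Because they commute with $\mathfrak t$, each image $V_i=\pi_{S_i}K$ is $\mathfrak t$-invariant and $K=\bigoplus_iV_i$. Hence $\mathfrak t$ is block-diagonal in an adapted basis.

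For the second half, work first with $\ket{\psi_A}\otimes\ket{\psi_B}$, assuming no stabilizers, as allowed by the preprocessing step. By \Cref{thm:product_block} every $M\in\text{Ker}_2$ is block-diagonal, so every $T=\Omega(M+M^T)$ is block-diagonal with respect to $K=K_A\oplus K_B$. Therefore the coordinate projectors $P_A,P_B$ commute with $\mathfrak t$ and lie in $\mathfrak e$; they are complementary orthogonal idempotents. For the dressed state, \Cref{cor:block-diagonalisable} and \Cref{lemma:bilinear_to_linear} give $\pi^C_A=CP_AC^{-1}$ and $\pi^C_B=CP_BC^{-1}$ in $\mathfrak e$. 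It remains to show that these are sums of the $\pi_{S_i}$.

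The key claim is that $P_A$ is central in $\mathfrak e$. I would argue this using $\Omega_+^A,\Omega_+^B\in\text{Ker}_2$. Their polar forms give $T=\Omega\,\Omega^A=P_A$ (and $P_B$), so $P_A,P_B\in\mathfrak t$ itself. Any $\phi\in\mathfrak e$ commutes with all of $\mathfrak t$, hence with $P_A$, so $P_A$ lies in $\mathfrak e\cap\mathfrak t\subseteq Z(\mathfrak e)$. The image of a central idempotent in $\bar{\mathfrak e}=\bigoplus S_i$ is central, hence equal to $\sum_{i\in\vec i}1_{S_i}$ for some index set $\vec i$, and the image of $P_B$ is the complementary sum over $\vec j$.

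The main obstacle is passing from this congruence modulo $R$ to an actual equality of projectors, or at least of their images. Both $\pi_{\vec i}=\sum_{\vec i}\pi_{S_i}$ and $\pi^C_A$ are idempotents with the same image in $\mathfrak e/R$. Lifted idempotents are unique up to conjugation by a unit $1+r$ with $r\in R$, so $\pi^C_A=u\pi_{\vec i}u^{-1}$. This is the sense in which the projectors ``can be lifted'': the subspaces $\pi_{\vec i}K$ and $CK_A$ are related by a $\mathfrak t$-module automorphism. In particular they are isomorphic invariant complements yielding the same block structure for $\mathfrak t$, and since $\pi^C_A\in\mathfrak t$ is central one can even choose the lifting so that $\pi_{\vec i}=\pi^C_A$. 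I would verify this last point carefully, and also that the symplectic correction step later preserves it.
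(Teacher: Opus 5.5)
Your proposal is correct and follows the paper's route: the central idempotents of $\bar{\mathfrak e}$ are sums of the $1_{S_i}$ because each $Z(S_i)$ is a field, and the polar form of the transformed purity gives $\pi_A^C=CP_AC^{-1}\in\mathfrak t$, which block-diagonality places in $\mathfrak e$ and hence in its center. You additionally spell out the idempotent lifting modulo $R$ that the paper leaves implicit, and the last point you flag closes immediately: from $\pi_A^C=u\pi_{\vec i}u^{-1}$ and the centrality of $\pi_A^C$ you get $\pi_{\vec i}=u^{-1}\pi_A^Cu=\pi_A^C$, so no choice of lift is needed.
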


\begin{proof}
    Let us first take the identity element $1_{\bar{\mathfrak e}}$ and write it as
    \begin{equation}
        1_{\bar{\mathfrak e}}=1_{S_1}+\dots+1_{S_m}
    \end{equation}
    with $1_{S_i}$ its restriction to each simple component. Since for $i\neq j$
    \begin{equation}
        S_iS_j\subseteq S_i\cap S_j=\{0\},
    \end{equation}
    then $\forall \phi\in S_j$,
    \begin{equation}
        \phi=1_{\bar{\mathfrak e}}\phi=\sum_i1_{S_i}\phi=1_{S_j}\phi.
    \end{equation}
    And in particular, taking $\phi=1_{S_j}$, we obtain
    \begin{equation}
        1_{S_j}^2=1_{S_j}.
    \end{equation}

    We will now see that these are enough to capture the purity blocks. For this we need to define first the center of $\bar{\mathfrak e}$, which is the subalgebra of elements commuting with every element of the algebra, i.e.,
    \begin{equation}
      Z_\mathfrak t(K)\equiv\{\phi\in\bar{\mathfrak e}|\phi\gamma=\gamma\phi,\forall\gamma\in\bar{\mathfrak e}\}.
    \end{equation}
    Crucially $Z_\mathfrak t(K)\subseteq \bar{\mathfrak e}$ is also an associative algebra, which decomposes as
    \begin{equation}
        Z_{\mathfrak{t}}(K)=Z(S_1)\oplus \dots\oplus Z(S_m),
    \end{equation}
    and therefore, projectors onto the simple components of $Z_{\mathfrak{t}}(K)$ also project onto those of $\bar{\mathfrak e}$. The important property of the center is that these projectors are now complete, in the sense that no other element $e\in Z_{\mathfrak{t}}(K)$ satisfies $e^2=e$. Indeed, each $S_i$ is isomorphic to a full matrix algebra $M_{n_i}(D_i)$, and thus, only the scalar field commutes with all elements in it. Then, since $Z(S_i)$ is a field, there can't be further non-trivial idempotents.

    Then, take some state vector $\ket{\psi}=U_C\ket{\psi_A}\otimes\ket
    \psi_B$, so that all symmetries are block-diagonalisable across $A|B$. Then there is some $M\in\text{Ker}_2(\psi)$ which is a transformed subsystem purity, and hence its polar form satisfies
    \begin{equation}
        M+M^T=C^{-T}\Omega^AC^{-1}.
    \end{equation}
    After converting to linear operators this implies that $\exists T\in\mathfrak t$ of the form
    \begin{align}
        T&=\Omega\cdot C^{-T}\Omega^AC^{-1}\\
        &=CC^{-1}\Omega C^{-T}\Omega^AC^{-1}\notag\\
        &=C(\mathbb I^A\oplus 0^B) C^{-1},\notag
    \end{align}
    i.e., which satisfies $T^2=T$, and which under the right symplectic change of basis corresponds to a projector onto subsystem $A.$ Hence we can set
    \begin{equation}
        \pi_A^C\equiv T
    \end{equation}    
    Now, since this element $\pi_A^C\in\mathfrak t$, by definition of the centralizer this not only means that $\pi_A^C\in\mathfrak e$, but also $\bar{\pi}_A^C\in Z_{\mathfrak t}(K)$ (where by $\bar{\pi}_A^C$ we mean the image of $\pi_A^C$ after removing the nilpotent component)

    Since the only projectors on $Z_{\mathfrak t}(K)$ are those onto the simple components and sums thereof -- since the components are uniquely defined --, there must be some choice of $i$'s such that
    \begin{equation}
        \bar{\pi}_A^C=\sum_{\vec i}\pi_{S_i},
    \end{equation}
    and by lifting this to the full algebra we prove our claim.

\end{proof}

All in all, \Cref{lemma:simple_comp} allows us to focus on a coarser block-diagonalisation, instead of requiring the finest one. Indeed, while each of the simple components of $\mathfrak e$ might contain further projectors leading to a more fine-grained decomposition, that won't give us additional information about potentially pure subsystems, and hence we can limit to analyzing this coarser one, which moreover satisfies the property of being unique.

Computing such a decomposition, can be done efficiently when dealing with $2n\times 2n$ matrices over $\mathbb F_2$. While we won't go into much detail, we will stress the main algorithmic steps for completeness.

\begin{theorem}[{[Efficient computation of simple-component projectors]}]  
    \label{lemma:efficient_SBD}
    There is a deterministic algorithm running in time $\text{poly}(n)$ that computes a complete set of projectors onto the simple components of $Z_\mathfrak t(K).$
\end{theorem}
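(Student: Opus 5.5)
The algorithm is a chain of standard polynomial-time linear algebra over $\mathbb F_2$, fed into known results for finite-dimensional associative algebras over finite fields \cite{ronyai1990computing,chistov1997polynomial,wilson2008finding}. Throughout, every algebra is represented by an explicit $\mathbb F_2$-basis of $2n\times 2n$ matrices, so its dimension is at most $4n^2$.

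\textbf{Step 1: the algebras $\mathfrak t$ and $\mathfrak e$.} From the basis $\{M_i\}$ of $\text{Ker}_2(\psi)$ we form $T_i=\Omega(M_i+M_i^T)$. We generate the associative algebra $\mathfrak t$ by closure: repeatedly multiply the current basis elements pairwise and reduce by Gaussian elimination. Each round either increases the dimension or stops, and the dimension is at most $4n^2$, so this takes $\text{poly}(n)$ rounds. The centralizer $\mathfrak e=\{\phi:\phi T_i=T_i\phi\ \forall i\}$ is the solution space of a homogeneous linear system in $4n^2$ unknowns with $O(d_{\rm Ker}\cdot 4n^2)$ equations. It is therefore obtained by Gaussian elimination in $\text{poly}(n)$ time, and commuting with the generators suffices.

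\textbf{Step 2: the radical and the semisimple quotient.} We compute the radical $R$ of $\mathfrak e$ with the deterministic algorithm of R\'onyai \cite{ronyai1990computing}. Over $\mathbb F_2$ this reduces to solving linear systems built from iterated trace-like (Frobenius/$p$-th power) conditions, and it is deterministic and polynomial in $\dim\mathfrak e$ and $\log|\mathbb F|$. We then form $\bar{\mathfrak e}=\mathfrak e/R$ with its structure constants, again by linear algebra. The center $Z(\bar{\mathfrak e})$ is computed as the solution of the linear system $\phi\gamma_j=\gamma_j\phi$ over a basis $\{\gamma_j\}$ of $\bar{\mathfrak e}$. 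As noted in the proof of \Cref{lemma:simple_comp}, $Z(\bar{\mathfrak e})=Z(S_1)\oplus\dots\oplus Z(S_m)$ is a commutative semisimple algebra, that is, a direct sum of finite fields $\mathbb F_{2^{k_i}}$.

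\textbf{Step 3: the primitive idempotents.} We need the primitive idempotents of the commutative semisimple algebra $Z=Z(\bar{\mathfrak e})$ over the finite field $\mathbb F_2$. Following R\'onyai, we first test whether $Z$ is a field. If it is not, we find a zero divisor deterministically and split. Concretely, the Frobenius map $x\mapsto x^2$ is $\mathbb F_2$-linear on $Z$, and its fixed space $\{x:x^2=x\}$ is exactly the $\mathbb F_2$-span of the primitive idempotents, a Boolean algebra of dimension $m$. We compute this fixed space by linear algebra and then extract the primitive idempotents by repeatedly splitting with products $e\cdot e'$ and $e(1+e')$ of basis elements of the fixed space. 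Over the prime field $\mathbb F_2$ no polynomial factorisation is needed, which is what keeps the method deterministic (Berlekamp-style).

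\textbf{Step 4: lifting to $\mathfrak e$.} The idempotents $1_{S_i}$ found in Step 3 live in $\bar{\mathfrak e}$, so they must be lifted modulo the nilpotent $R$ to genuine orthogonal idempotents $\pi_{S_i}\in\mathfrak e$. We use the standard iteration $e\mapsto 3e^2-2e^3$, which over $\mathbb F_2$ reads $e\mapsto e^2$ up to the correction terms. Alternatively, we solve the lifting as a linear problem along the radical filtration $R\supset R^2\supset\cdots$, which has length at most $2n$. Orthogonality is then enforced by the successive corrections $\pi\mapsto(1-\pi')\pi(1-\pi')$.

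\textbf{Main obstacle.} We expect the main obstacle to be Steps 2 and 3 in characteristic $2$. There, trace-form characterisations of the radical degenerate, and determinism must be guaranteed. For both, the plan is to rely on \cite{ronyai1990computing}, which handles exactly the case of small characteristic over finite fields in deterministic polynomial time, rather than reproving these steps.
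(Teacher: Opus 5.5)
Your proposal is correct, and its overall architecture matches the paper's: the radical via R\'onyai, then the center of the semisimple quotient, then central idempotents, then a lift to $\mathfrak e$. The core splitting step, however, is done differently.

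\textbf{The paper's route.} It builds a tower $\mathbb F_2=F_0\subseteq F_1\subseteq\cdots$ inside $Z_{\mathfrak t}(K)$ by adjoining basis elements one at a time. For each new element it computes the minimal polynomial over the current field $F_i$. When that polynomial factors into coprime pieces $g\cdot h$, B\'ezout yields the idempotents $v(z)h(z)$ and $u(z)g(z)$, and the procedure recurses inside each block. Determinism then rests on deterministic Berlekamp factorisation over the extension fields $F_i$.

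\textbf{Your route.} You use that on the commutative algebra $Z_{\mathfrak t}(K)\cong\bigoplus_i\mathbb F_{2^{k_i}}$ squaring is $\mathbb F_2$-linear. Its fixed space is exactly $\mathbb F_2^m$, spanned by the primitive idempotents, so one linear solve gives a basis of it. Iterated splitting $e\mapsto ee',\,e(1+e')$ then extracts the primitive idempotents. This stops with at most $m$ nonzero pieces, since every survivor $e$ satisfies $ex\in\{0,e\}$ for all $x$ in the fixed space.

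\textbf{Comparison.} Your method is essentially Berlekamp's subalgebra applied directly to the center, rather than to $F_i[x]/(f)$. What it buys:
\begin{itemize}
\item it needs no polynomial factorisation and no extension-field arithmetic;
\item it shows exactly where the prime field of characteristic $2$ is used;
\item it is trivially deterministic.
\end{itemize}
The paper's route follows the general R\'onyai splitting scheme more closely. It would carry over to other base fields given a factoring subroutine.

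\textbf{Two small points on Step 4.} This step goes beyond the statement, since $Z_{\mathfrak t}(K)\subseteq\bar{\mathfrak e}$.
\begin{itemize}
\item Over $\mathbb F_2$ the map $e\mapsto 3e^2-2e^3$ is exactly $e\mapsto e^2$, so no ``correction terms'' are needed. The defect $r=e^2+e$ commutes with $e$, so the new defect is exactly $r^2$. Since $R^{2n}=0$, $O(\log n)$ squarings suffice.
\item The sandwich $(1-\pi')\pi(1-\pi')$ is in general idempotent only modulo $R$. You must re-run the squaring iteration inside the corner algebra $(1-\pi')\mathfrak e(1-\pi')\subseteq\mathfrak e$. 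That yields a genuine idempotent orthogonal to $\pi'$.
\end{itemize}
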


\begin{proof}
    The full proof can be found stated in full generality in Ref.~\cite{ronyai1990computing}. We will outline the key points here for completeness. We will assume $Z_\mathfrak t(K)$ to be semisimple from the start, since, as we already mentioned, the nilpotent radical can be efficiently handled separately and then projectors from the semisimple part can be lifted to projectors of the full algebra.

    Then, in order to compute this set of projectors we will proceed iteratively in the following way. Given a basis $\{z_i\}_i$ of $Z_\mathfrak t(K)$, we will construct subalgebras
    \begin{equation}
        \mathbb F_2=F_0\subseteq F_1\subseteq\dots\subseteq Z_\mathfrak t(K)
    \end{equation}
    where each $F_{i+1}$ is generated by appending $z_{i+1}$ to $F_i$. Assuming $F_i$ is still a field -- in particular, an extension field of $\mathbb F_2$--, the structure of $F_{i+1}=F_i[z_{i+1]}]$ is determined by the minimal polynomial $f\in F_i[x]$ of $z_{i+1}$ over $F_i$. The minimal polynomial $f$ is defined as the monic polynomial with coefficients in $F_i$ of smallest degree such that $f(z_{i+1})=0.$ For such an $f$ we have that
\begin{equation}
    F_i[z_{i+1]}]\cong F_i[x]/(f),
\end{equation}
which follows almost by definition of the minimal polynomial. Then if $f$ is irreducible over $F_i$, the quotient is again a field: by Bézout's identity, irreducibility of $f$ means that every nonzero class in $F_i[z_{i+1]}]$ is invertible. In that case, no non-trivial idempotent exists yet in $F_{i+1}$. If instead $f$ is reducible, say
\begin{equation}
    f=g\cdot h,\qquad\gcd(g,h)=1
\end{equation}
then
\begin{equation}
    g(z_{i+1})\cdot h(z_{i+1})=0
\end{equation}
and by Bézout's identity, there are $u,v\in F_i[x]$ such that
\begin{equation}
    u\cdot g+v\cdot h=1.
\end{equation}
Evaluating at $z_{i+1}$, we obtain two orthogonal idempotents
\begin{equation}
    e:=v(z_{i+1})h(z_{i+1}),\quad 1-e:=u(z_{i+1})g(z_{i+1}),
\end{equation}
which one can check satisfy
\begin{equation}
    e^2=e,\quad (1-e)^2=1-e,\quad e(1-e)=0.
\end{equation}
These idempotents project onto nontrivial splitting of the center into its simple components. Repeating this procedure inside each block eventually results in the complete decomposition of $Z_{\mathfrak t}(K)$ into its simple components. 

Notably, the complexity of this step relies on polynomial factorisation of the minimal polynomial, which is in general a hard problem. Nevertheless, this complexity heavily depends on the base field of the algebra.  For $\mathbb F_2$, it can be shown that for instance the deterministic version of Berlekamp's algorithm \cite{berlekamp1967factoring} is efficient and runs in time $\text{poly}(n)$.

The output of this algorithm is a set of minimal orthogonal \textit{central} idempotents $\{\pi_{S_j}\}_{j=1}^m$, which project onto orthogonal invariant subspaces of $K$. 

\end{proof}

Note however, that while the projectors found by \Cref{lemma:efficient_SBD} will induce a change of basis that block-diagonalises the symmetries to the degree that matters to us, this change of basis might not be given by a symplectic transformation. In order to make sure that there is a $C\in\text{Sp}(2n,\mathbb F_2)$ that performs the change of basis, it is enough to choose a set of projectors among the ones found, that are self-adjoint with respect to $\Omega$, i.e., defining the adjoint by
\begin{equation}
    \pi^\dagger\equiv\Omega\pi^T\Omega
\end{equation}
we need to impose
\begin{equation}
    \pi^\dagger=\pi.
\end{equation}
\begin{lemma}[{[Symplectic bases from self-adjoint projectors]}]
    \label{lemma:self_adjoint}
    Let $\{\tilde{\pi}_i\}_i$ be some set of orthogonal projectors that satisfy self-adjointness with respect to $\Omega$, and let $\{K_i\}_i$ be their corresponding invariant subspaces. Then there exists a symplectic operation $C\in\text{Sp}(2n,\mathbb F_2)$ that transforms into a basis of $\{K_i\}_i$ and hence block-diagonalises the algebra.
\end{lemma}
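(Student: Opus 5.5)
The plan is to show that $\Omega$-self-adjoint orthogonal projectors produce a symplectically orthogonal decomposition $K=\bigoplus_i K_i$. On such a decomposition the symplectic form restricts to a nondegenerate form on each piece, so a symplectic basis of each piece exists. Concatenating these bases gives the desired $C$.

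\textbf{Step 1: the subspaces are mutually $\Omega$-orthogonal.} Write $K_i=\mathrm{im}(\tilde\pi_i)$. Since the projectors are orthogonal idempotents, presumably summing to the identity (I will assume completeness; otherwise I add the complementary projector $1-\sum_i\tilde\pi_i$, which is again idempotent and self-adjoint), we have $K=\bigoplus_i K_i$. Self-adjointness $\Omega\tilde\pi_i^T\Omega=\tilde\pi_i$ with $\Omega^2=\mathbb I$ gives $\tilde\pi_i^T\Omega=\Omega\tilde\pi_i$. For $x\in K_i$ and $y\in K_j$ with $i\neq j$,
\begin{equation*}
x^T\Omega y=(\tilde\pi_i x)^T\Omega\,\tilde\pi_j y=x^T\Omega\,\tilde\pi_i\tilde\pi_j y=0 .
\end{equation*}
So the decomposition is $\Omega$-orthogonal.

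\textbf{Step 2: $\Omega$ is nondegenerate on each $K_i$.} Suppose $x\in K_i$ satisfies $x^T\Omega y=0$ for all $y\in K_i$. By Step 1 it also pairs trivially with every $K_j$, $j\neq i$, hence with all of $K$. Nondegeneracy of $\Omega$ then forces $x=0$. Each $(K_i,\Omega|_{K_i})$ is therefore a nondegenerate alternating space over $\mathbb F_2$.

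\textbf{Step 3: build $C$.} Nondegeneracy implies $\dim K_i$ is even. The classification of alternating forms already invoked in the paper \cite{grove2002classical} gives a symplectic basis $\{e^i_1,f^i_1,\dots\}$ of each $K_i$. This basis is computable in $\mathrm{poly}(n)$ time by symplectic Gram--Schmidt: pick $e$, find $f$ with $e^T\Omega f=1$, and project the remainder onto the $\Omega$-complement of $\mathrm{span}\{e,f\}$.

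Concatenate these bases as columns of $C$ in the order matching $\Omega=\bigoplus\begin{pmatrix}0&1\\1&0\end{pmatrix}$. By Step 1, $C^T\Omega C=\Omega$, so $C\in\mathrm{Sp}(2n,\mathbb F_2)$, and the corresponding conjugation is symplectic.

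Each $K_i$ is invariant under $\mathfrak t$, because $\tilde\pi_i$ lies in (the lift of) the centralizer $\mathfrak e$ and so commutes with every $a\in\mathfrak t$. Hence $C^{-1}aC$ is block-diagonal for all $a\in\mathfrak t$. \Cref{lemma:bilinear_to_linear} then transfers this block-diagonality to the forms $B_i$ via congruence.

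\textbf{Main obstacle.} The delicate point is Step 1's use of self-adjointness together with completeness of the projector family, in particular whether the complement of the chosen $K_i$ also inherits self-adjointness. This is resolved because $1$ is self-adjoint and self-adjointness is linear, so $1-\sum_i\tilde\pi_i$ is self-adjoint as well. Steps 2 and 3 are then routine.
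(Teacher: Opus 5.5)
Your proposal is correct and follows the paper's own route: Step 1 is the paper's computation, showing $u^T\Omega v=u^T\Omega\tilde\pi_i\tilde\pi_j v=0$, i.e.\ the images are mutually $\Omega$-orthogonal. Your Steps 2--3 add detail the paper compresses into ``and thus there is a symplectic transformation'': completing the family with the self-adjoint complement $1-\sum_i\tilde\pi_i$, deducing nondegeneracy of $\Omega|_{K_i}$, and concatenating symplectic bases. These steps make the argument rigorous without changing its approach.
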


\begin{proof}
    Let $\tilde{\pi}_i,\tilde{\pi}_j$ be two such self-adjoint projectors and $K_i=\tilde{\pi}_i(K)$ and $K_j=\tilde{\pi}_j(K)$. Take $u\in K_i$ and $v\in K_j$. Then
    \begin{equation}
       \tilde{\pi}_i(u)=u,\quad \tilde{\pi}_j(v)=v.
    \end{equation}
    Then we can show that $u,v$ commute:
    \begin{align}
    u^T\Omega v&= (\tilde{\pi}_iu)^T\Omega v\\
    &= u^T\tilde{\pi}_i^T\Omega v\notag\\
    &=u^T\Omega\tilde{\pi}_i v\notag\\
    &=u^T\Omega\tilde{\pi}_i\tilde{\pi}_j v\notag\\
    &=0.\notag
    \end{align}

    Since this holds for all projectors, it follows that all subspaces of the corresponding decomposition commute, and thus there is a symplectic transformation $C\in\text{Sp}(2n,\mathbb F_2)$ that maps each $K_i$ to a distinct physical subspace of corresponding dimension.
\end{proof}

Finally, finding such a set of self-adjoint projectors can be done easily thanks to the fact that the \textit{central} projectors $\{\pi_i\}_i$ from \Cref{lemma:efficient_SBD} are unique, since then the self-adjoint projectors $\{\tilde{\pi}_j\}_j$ can be at most sums of them. Indeed, there are only two cases: for some \textit{central} projector $\pi_i$, either it is already self-adjoint, and hence $\pi_i(K)$ admits a symplectic basis, or $\pi_i^\dagger\neq\pi_i$, and then $\tilde{\pi}_i=\pi_i+\pi_i^\dagger$ is the smallest self-adjoint projector containing $\pi_i.$

\begin{lemma}[{[Adjoint closure of the centralizer]}]
    \label{lemma:closed_centralizer}
    The centralizer $\mathfrak e$ is closed under the adjoint, and for any \textit{primitive central} idempotent $\pi_i$ such that $\pi_i^\dagger\neq\pi_i$, $\pi_i^\dagger=\pi_j$, for some other primitive projector $\pi_j$. 
\end{lemma}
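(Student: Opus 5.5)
The plan is to show first that $\mathfrak t$ is closed under the adjoint $X\mapsto X^\dagger:=\Omega X^T\Omega$, and then to transfer this to $\mathfrak e$ by general algebra. The generators are $T_i=\Omega B_i$ with $B_i=M_i+M_i^T$ symmetric. Using $\Omega^T=\Omega$ and $\Omega^2=\mathbb I$ over $\mathbb F_2$,
\begin{equation}
T_i^\dagger=\Omega(\Omega B_i)^T\Omega=\Omega B_i\Omega\Omega=\Omega B_i=T_i .
\end{equation}
So every generator is self-adjoint. The map $X\mapsto X^\dagger$ is an $\mathbb F_2$-linear anti-automorphism: $(XY)^\dagger=\Omega Y^T\Omega\Omega X^T\Omega=Y^\dagger X^\dagger$, and it is an involution. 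Hence the algebra $\mathfrak t$ generated by the $T_i$ is mapped into itself, since products of generators are sent to products in reversed order.

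Next I would take $\phi\in\mathfrak e$ and $a\in\mathfrak t$. From $\phi a^\dagger=a^\dagger\phi$, because $a^\dagger\in\mathfrak t$, applying $\dagger$ gives $a\phi^\dagger=\phi^\dagger a$. Hence $\phi^\dagger\in\mathfrak e$, so $\mathfrak e$ is closed under the adjoint. This is the first claim.

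For the second claim, note that an anti-automorphism of $\mathfrak e$ maps the center to the center, since $(z\gamma)^\dagger=\gamma^\dagger z^\dagger$ and $\gamma^\dagger$ ranges over all of $\mathfrak e$. It also maps the radical $R$ to itself, since nilpotent ideals go to nilpotent ideals. Therefore $\dagger$ descends to an involutive anti-automorphism of $\bar{\mathfrak e}$ that preserves $Z_{\mathfrak t}(K)$. On a commutative algebra an anti-automorphism is just an automorphism, so it permutes the primitive idempotents of $Z_{\mathfrak t}(K)$. Indeed, $\pi_i^\dagger$ is again a central idempotent, and it is primitive: a decomposition $\pi_i^\dagger=e+f$ into orthogonal central idempotents would pull back under $\dagger$ to a decomposition of $\pi_i$.

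By the uniqueness of the complete set of primitive central idempotents established in the proof of \Cref{lemma:simple_comp}, $\pi_i^\dagger$ is equal to some $\pi_j$. If $\pi_i^\dagger\neq\pi_i$ then $j\neq i$, and involutivity gives $\pi_j^\dagger=\pi_i$. Then $\pi_i+\pi_j$ is an orthogonal sum and is self-adjoint, as used afterwards.

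The main obstacle is lifting from $\bar{\mathfrak e}$ to $\mathfrak e$, that is, checking that $\dagger$ is compatible with the chosen lift of idempotents modulo $R$. I would handle this by working with primitive central idempotents of $Z(\mathfrak e)$ directly. Idempotents of a commutative finite algebra lift uniquely modulo its nilradical, and that nilradical is $\dagger$-stable. So the permutation statement holds at the level of $\mathfrak e$ itself, without any choice of lift.
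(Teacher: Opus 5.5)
Your proof is correct. For the first claim it is exactly the paper's argument: the generators $T_i=\Omega B_i$ are self-adjoint, $\mathfrak t$ is closed under the anti-involution $X\mapsto\Omega X^T\Omega$, and taking adjoints of $\phi a^\dagger=a^\dagger\phi$ shows that $\mathfrak e$ is closed.

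For the second claim you do more than the paper. Its proof only checks that $\pi^\dagger$ is again idempotent, and then invokes uniqueness to conclude that $\pi^\dagger$ is ``some other central idempotent, or a sum thereof''. That leaves two steps implicit. First, that $\pi^\dagger$ is central at all, which is needed before uniqueness of central idempotents can be applied. Second, that it is primitive, which is needed to get a single $\pi_j$ as the statement claims. Your observations close exactly these gaps: an anti-automorphism preserves the center and the radical, and an orthogonal splitting of $\pi_i^\dagger$ pulls back to one of $\pi_i$. Involutivity then gives the pairing $\pi_j^\dagger=\pi_i$ that makes $\pi_i+\pi_j$ self-adjoint.

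In your last paragraph, say explicitly that passing to the primitive idempotents of $Z(\mathfrak e)$ \emph{replaces} the family $\{\pi_i\}$ rather than lifting it. Unique lifting only identifies idempotents of $Z(\mathfrak e)$ with those of $Z(\mathfrak e)/(Z(\mathfrak e)\cap R)$. This algebra embeds in $Z_{\mathfrak t}(K)=Z(\mathfrak e/R)$ but can be a proper subalgebra of it. For example, the upper-triangular $2\times 2$ matrices have scalar center, while their semisimple quotient $\mathbb F_2\times\mathbb F_2$ has two primitive idempotents. So your family can be strictly coarser than the paper's.

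This is harmless here, and arguably the cleaner setting, because the cut projector $\pi_A^C=T$ used in \Cref{lemma:simple_comp} lies in $\mathfrak t\cap\mathfrak e\subseteq Z(\mathfrak e)$. Note also that the lifting detour is unnecessary. $\dagger$ restricts to an automorphism of the commutative algebra $Z(\mathfrak e)$, so it permutes its primitive idempotents directly.
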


\begin{proof}
    Note that the generators of $\mathfrak t$, i.e., $T_i=\Omega B_i$ are themselves self-adjoint.
    \begin{equation}
        T_i^\dagger=\Omega T_i^T\Omega=\Omega B_i^T\Omega\Omega=\Omega B_i=T_i,
    \end{equation}
    since the $B_i$ are symmetric. This then means that $\mathfrak t$ is closed under the adjoint. Take $T=\prod_{i=1}^sT_i$, then
    \begin{equation}
        T^\dagger=\prod_{i=s}^1T_i^\dagger=\prod_{i=s}^1T_i\in\mathfrak t,
    \end{equation}
    and the sum $T_i+T_j$ trivially respects the adjoint. Finally, let $X\in\mathfrak e$. We can show that $X^\dagger$ also commutes with the action of $\mathfrak t$. Let $T\in\mathfrak t$. Since $T^\dagger\in\mathfrak t$,
    \begin{equation}
        XT^\dagger=T^\dagger X.
    \end{equation}
    Taking the adjoints
    \begin{equation}
        TX^\dagger=X^\dagger T,\quad \forall T\in\mathfrak t
    \end{equation}
    and hence $X^\dagger\in\mathfrak e$. Finally, if $\pi\in\mathfrak e$ satisfies $\pi^2=\pi$, then
    \begin{equation}
        (\pi^\dagger)^2=(\Omega\pi^T\Omega)^2=\Omega(\pi^T)^2\Omega=\Omega(\pi^2)^T\Omega=\pi^\dagger,
    \end{equation}
    i.e., the adjoint is also a projector. From the uniqueness of the central idempotents, $\pi^\dagger$  must in fact be either some other central idempotent $\pi'$, or a sum thereof.
\end{proof}

After having performed this lift, we are left with a set of orthogonal self-adjoint projectors onto symplectic invariant subspaces. While these might not yield the finest symplectic block-diagonalisation, that is not important in our setting since, as we already showed, any further block won't follow from a product cut. We summarize the content of these lemmas in the following theorem.

\blockdiagonalisesym*

\begin{proof}
        The proof follows from the lemmas introduced in this subsection. The quadratic forms $\{M_i\}_{i=1}^{d_{\text{Ker}}}$ can be transformed to linear operators by \Cref{lemma:bilinear_to_linear} in order to exploit the structure of finite algebras. By \Cref{lemma:simple_comp} we can focus on finding blocks associated to the \textit{central} projectors, which are unique and can be efficiently computed for $2n\times 2n$ binary matrices by \Cref{lemma:efficient_SBD}. Finally, thanks to their uniqueness, lifting the decomposition to a symplectic one can be done by constructing self-adjoint projectors as shown in \Cref{lemma:self_adjoint} and \Cref{lemma:closed_centralizer}. 

        For a state of the form  $\ket{\psi}=U_C\bigotimes_{i=1}^m\ket{\phi_i}_{c_i}$, the procedure always yields some $\tilde C$, although this need not correspond to  $C$, since in particular we are always free to apply local Cliffords. Nevertheless, $U_{\tilde C}$ will be such that $U_{\tilde C}^\dagger\ket{\psi}$ is guaranteed to be a product across all $c_i$. This follows because the projectors onto the invariant subspaces are unique. Thus, the algorithm will always find the projectors onto the invariant subspaces defined by the $c_i$'s. After the symplectic change of basis under $\tilde C$, these projectors take the form $\mathbb I_{c_i}\oplus 0_{c_i^c}$, or equivalently, after applying $U_{\tilde C}^\dagger$ to the state, all symmetries $\Omega_+^{c_i}$ lie in its quadratic kernel, up to some potential affine shift. By \Cref{lemma:diagonal_stab} this affine shift does not change the purity, and hence the state is effectively disentangled.

\end{proof}

\subsection{Sample complexity analysis II: Polynomial runtime algorithm}

For the purpose of the exponential runtime algorithm, 
we already showed that $O(\text{poly}(n))$ samples are enough to guarantee that with high probability, only the correct transformed subsystem purities will appear in the kernel.

In this case however, unlike before, the algorithm does not just exhaustively search for those symmetries, but rather exploits the rigidity of the full kernel in order to do so. To that end, in order to make sure that the computationally efficient algorithm will find those right purities, we need to guarantee that with an efficient number of samples, all other symmetries in the kernel respect the cuts. We show this in the following Lemma.

\begin{lemma}[{[Finite-sample block diagonalisation]}]
    \label{lemma:block_sample}
    Given a state vector $\ket{\psi}=U_C\ket{\psi_A}\otimes\ket{\psi_B}$
    such that
   
    \begin{equation}
        \max_{S\in\mathcal P_n\backslash \mathbb I}|\langle S\rangle_{\psi}|\leq 1-\eta,
    \end{equation}
    for some $\eta>0.$
  Then, given $N=O\bigg(\frac{1}{\eta^2}\big(n^2+\log(\frac{1}{\delta})\big)\bigg)$ many samples, the observed kernel of $\ket{ \psi}$, $\text{Ker}_{2,N}(\psi)$, is symplectically block-diagonalisable 
  across $A|B$ with probability at least $1-\delta.$
    
\end{lemma}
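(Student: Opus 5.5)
Reduce to the unscrambled frame and then make the proof of \Cref{thm:product_block} quantitative. By \Cref{lemma:clifford_effect_preliminaries}, Bell sampling $\ket{\psi}$ is, up to the known-form affine map $q\mapsto Cq+w$, the same as Bell sampling $\ket{\psi_A}\otimes\ket{\psi_B}$. Quadratic symmetries are mapped bijectively by $M\mapsto C^{-T}(M+\lambda\lambda^T)C^{-1}$, and the diagonal shift does not affect block structure. So it suffices to show the following: with high probability, every $M$ in the observed kernel of the product state $\ket{\psi_A}\otimes\ket{\psi_B}$ has vanishing off-diagonal block $M_{A,B}$. The observed kernel is built from $N$ i.i.d.\ samples $q^{(k)}=(q_A^{(k)},q_B^{(k)})$ with $q_A^{(k)}\sim\mathsf b_{\psi_A}$ and $q_B^{(k)}\sim\mathsf b_{\psi_B}$ independent, since $\mathsf b_\psi$ factorizes.

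Next I would bound the probability that a fixed $M$ with $M_{A,B}\neq0$ survives in $\text{Ker}_{2,N}$. The exact argument used the four-point identity \Cref{eqn:cross_term}, which needs the swapped pairs $(q'_A,q_B)$ to be sampled. With finite samples I would instead work with a Fourier bias. Write $\mathbb E[(-1)^{Q_M(q)}]$ with $Q_M(q_A,q_B)=Q_{M_A}(q_A)+Q_{M_B}(q_B)+q_A^TM_{A,B}q_B$. Conditioning on $q_B$, the inner expectation over $q_A$ is $\mathbb E_{q_A}[(-1)^{Q_{M_A}(q_A)+q_A\cdot v}]$ with $v=M_{A,B}q_B$. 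The aim is to show $|\mathbb E_q[(-1)^{Q_M(q)}]|\leq 1-\Omega(\eta)$ whenever $M_{A,B}\ne 0$. Intuitively, if the bias were $\geq 1-c\eta$, then for two values $q_B,q'_B$ with $M_{A,B}(q_B+q'_B)=u\neq 0$ (and both having non-negligible weight), both twisted biases would be near $\pm1$. Multiplying them, the linear character $q_A\mapsto(-1)^{q_A\cdot u}$ would then have bias near $\pm1$ under $\mathsf b_{\psi_A}$. Via the stabilizer computation \Cref{eqn:stabilizer_symm_1}--\Cref{eqn:stabilizer_sym_5}, $\mathbb E_{q_A}[(-1)^{q_A\cdot u}]$ equals, up to sign, $\langle P_{\Omega u}\rangle_{\psi_A}^2$, i.e.\ the expectation of a Pauli on $\psi$. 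This yields the contradiction $|\langle S\rangle_\psi|>1-\eta$. If no such $q_B,q'_B$ exist on the bulk of the distribution, the symmetric argument with the roles of $A$ and $B$ exchanged produces a near-stabilizer on $B$, mirroring the two cases in the proof of \Cref{thm:product_block}.

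Finally, apply the same difference-distribution trick as in \Cref{lemma:sample_test_exp}: the survival probability for one $M$ is at most $(1-\Omega(\eta))^N$. Then union bound over the $2^{O(n^2)}$ candidate matrices. Since block-diagonality is a condition on the whole subspace, it suffices to exclude every matrix with a nonzero cross block; there are at most $2^{4n^2}$ such matrices. This already gives $N=O(\eta^{-1}(n^2+\log(1/\delta)))$ samples, well within the claimed $O(\eta^{-2}(n^2+\log(1/\delta)))$. The looser $\eta^{-2}$ leaves room for the square-root losses that appear when bias near 1 of a product of characters is converted into a single-character bias (e.g.\ $1-\epsilon$ squared via Cauchy--Schwarz). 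Once $\text{Ker}_{2,N}$ is block-diagonal in the unscrambled frame, conjugating back by $C$ gives symplectic block-diagonalisability across $A|B$, as in \Cref{cor:block-diagonalisable}.

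The main obstacle is the robust version of the dichotomy in the second step. One must show quantitatively that a quadratic form with a nonzero cross term cannot be nearly constant on a product distribution unless one marginal nearly satisfies a linear constraint. I would try first a second-moment/Gowers-type argument: square the bias and average over independent pairs $(q_B,q'_B)$, which replaces the exact four-point identity by $\mathbb E_{q_B,q'_B}\bigl|\mathbb E_{q_A}(-1)^{q_A\cdot M_{A,B}(q_B+q'_B)}\bigr|$. Then I would use that $M_{A,B}(q_B+q'_B)$ is nonzero with constant probability unless $\mathsf b_{\psi_B}$ is itself nearly confined to an affine subspace, which is again a near-stabilizer on $B$.
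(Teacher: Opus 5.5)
\textbf{There is a genuine gap in your key inequality.} You claim $|\mathbb E_q[(-1)^{Q_M(q)}]|\le 1-\Omega(\eta)$ for every $M$ with $M_{A,B}\neq 0$. Under the promise this is false. Consequently your $O(\eta^{-1}(n^2+\log(1/\delta)))$ bound fails, and the $\eta^{-2}$ in the statement is not slack from square-root losses.

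Here is a counterexample. Put every qubit in the real state $\cos\theta\ket0+\sin\theta\ket1$ with $\langle Z\rangle=1-\eta$; for small $\eta$, every non-identity Pauli expectation of the product is $\le 1-\eta$. Take $U_C=\mathbb I$, $a\in A$, $b\in B$, and the pure cross form $Q_M(q)=x_ax_b$. On one qubit, the $x$-bit of a Bell sample equals $1$ with probability $\mathsf b(X)+\mathsf b(Y)=\tfrac12\sin^2 2\theta\approx\eta$. Hence $\Pr[Q_M(q)=1]\approx\eta^2$, the bias is $1-\Theta(\eta^2)$, and $M$ survives $N=o(\eta^{-2})$ samples with high probability.

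Your dichotomy yields no contradiction here. The $q_B$ with $M_{A,B}q_B\neq 0$ carry total weight only $\approx\eta$. The character $(-1)^{x_a}$ they induce has bias $(1-\eta)^2$, i.e.\ $|\langle Z\rangle|=1-\eta$, which sits exactly at the promise boundary rather than beyond it. The promise lets each marginal violate a linear constraint with probability as small as $\Theta(\eta)$. Since the cross term fires only when the functionals on both sides fire, the two defects multiply. The same issue affects your fallback: $M_{A,B}(q_B+q'_B)\neq 0$ holds with probability $\Omega(\eta)$, not with constant probability.

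\textbf{The fix is the four-point identity you set aside.} It gives the correct product bound and does not require the swapped pairs to be observed. Because $\mathsf b_\psi$ factorises, $(q_A,q'_B)$ and $(q'_A,q_B)$ are again two independent draws from $\mathsf b_\psi$. So, with $s_M=\Pr[Q_M(q)=Q_M(q')]$ for independent $q,q'\sim\mathsf b_\psi$, the variable $Q_M(q_A,q'_B)+Q_M(q'_A,q_B)$ has the same law as $Q_M(q_A,q_B)+Q_M(q'_A,q'_B)$. A union bound then gives $\Pr[(q_A+q'_A)^TM_{A,B}(q_B+q'_B)=1]\le 2(1-s_M)$.

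Now let $\gamma_i$ be the minimum, over nonzero linear functionals $l$, of $\Pr[l(u)=1]$ on the difference distribution of $\mathsf b_{\psi_i}$. Conditioning on $v=q_B+q'_B$ with $M_{A,B}v\neq 0$ gives a nonzero functional of $u=q_A+q'_A$, which fires with probability at least $\gamma_A$. Any nonzero row of $M_{A,B}$ shows $\Pr[M_{A,B}v\neq 0]\ge\gamma_B$. Your own stabilizer computation on the difference distribution gives $\gamma_i=(1-\alpha_i^4)/2\ge\eta/2$, where $\alpha_i$ is the largest non-identity Pauli expectation on $\psi_i$.

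Hence $1-s_M\ge\gamma_A\gamma_B/2\ge\eta^2/8$. The per-sample rejection probability relative to the reference sample $q_0$ is then at least $\eta^2/16$. Your union bound over $2^{O(n^2)}$ forms gives precisely $N=O(\eta^{-2}(n^2+\log(1/\delta)))$. The rest of your plan matches the paper's proof: the reduction to the unscrambled frame, the reference-sample trick, and the union bound.
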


\begin{proof}
    Given a state of 
    the form 
    \begin{equation}
        \ket{\psi}=U_C\ket{\psi_A}\otimes\ket{\psi_B},
    \end{equation}
    such that 
    \begin{equation}
        \max_{S\in\mathcal P_n\backslash \mathbb I}|\langle S\rangle_{\psi}|\leq 1-\eta,
    \end{equation}
    we will show that the probability of observing a symmetry that is not block-diagonalisable across the cut $A|B$ is exponentially small in the number of samples. This then carries over to the general case where the product is across multiple cuts.

    Just like in the exact case we will first focus on the product state vector $\ket{\psi_A}\otimes\ket{\psi_B}$ itself, since showing that all symmetries of $\ket{\psi}$ are block-diagonalisable by $C$ is equivalent to showing that all symmetries of $\ket{\psi_A}\otimes\ket{\psi_B}$ are block-diagonal. Then we will show that for this state, the probability of observing a non-block-diagonal symmetry $M\in\text{Ker}_{2,N}(\psi_A\otimes\psi_B)$ is upper bounded by the probability of satisfying a linear symmetry. Indeed, let Bell samples be $(q_A,q_B)\in\mathbb F_2^{2|A|}\times \mathbb F_2^{2|B|}$, and take some symmetry 
    \begin{equation}
        M=\begin{pmatrix}
            M_A&M_{A, B}\\ 0& M_B
        \end{pmatrix}
    \end{equation}
    such that $Q_M(q_A,q_B)=(q_A,q_B)^TM(q_A,q_B)=q_A^TM_Aq_A+q_B^TM_Bq_B+q_A^TM_{A, B}q_B$. Moreover, let us define the difference distribution
    \begin{equation}
        \nu_A(u):=\mathsf p[q_A+q_A'=u],\qquad \nu_B(v):=\mathsf p[q_B+q_B'=v].
    \end{equation}
    The probability that $M\in\text{Ker}_{2,N}(\psi_A\otimes\psi_B)$ is then
    \begin{equation}
        s_M:=\mathsf p_{\substack{(q_A,q_B)\sim(\mathsf b_A,\mathsf b_B)\\(q_A',q_B')\sim(\mathsf b_A,\mathsf b_B)}}[Q_M(q_A,q_B)+Q_M(q_A',q_B')=0].
    \end{equation}
    We will also define $\delta_M:=1-s_M$ to be its defect. Again, since the Bell distribution factorises, given two samples $(q_A,q_B),(q_A',q_B')$ with some non-zero probability, the samples $(q_A',q_B),(q_A,q_B')$ also have non-zero probability, so
    \begin{equation}
        Q_M(q_A,q_B)+Q_M(q_A,q_B')+Q_M(q_A',q_B)+Q_M(q_A',q_B')=(q_A+q_A')^TM_{A, B}(q_B+q_B').
    \end{equation}
    Let us call
    \begin{align}
        D_1&=Q_M(q_A,q_B)+Q_M(q_A',q_B')\\
        D_2&=Q_M(q_A,q_B')+Q_M(q_A',q_B),\notag
    \end{align}
    
    which satisfy 
    \begin{equation}
        \mathsf p[D_1=1]=\mathsf p[D_2=1]=\delta_M.
    \end{equation}
    Let us now define
    \begin{equation}
        \eta_{M_{A, B}}:=\mathsf p_{\substack{u\sim\nu_A\\ v\sim\nu_B}}.[u^TM_{A, B}v=1]
    \end{equation}
    By \Cref{eqn:cross_term}, we have
    \begin{align}
        \eta_{M_{A, B}}&=\mathsf p[D_1+D_2=1]\\
        &\leq\mathsf p[D_1=1]+\mathsf p[D_2=1]\notag\\
        &=2\delta_M.\notag
    \end{align}
    Thus
    \begin{equation}
        \delta_M\geq\frac{\eta_{M_{A, B}}}{2}.
    \end{equation}
    We can now show that for product states $\eta_{M_{A, B}}$ is related to the probability of showcasing a linear symmetry. Indeed, let us define the linear symmetry margins on the difference distributions
    \begin{equation}
        \gamma_A:=\min_{l\neq0}\mathsf p_{u\sim \nu_A}[l(u)=1],\qquad \gamma_B:=\min_{l\neq0}\mathsf p_{v\sim \nu_B}[l(v)=1],
    \end{equation}
    such that, if $\gamma_i=0$, the respective distributions on $A$ or $B$ satisfy some linear symmetry, and hence showcase a stabilizer. On the other hand, for large values of $\gamma_i$, no such symmetry is satisfied, even approximately. We can then show that
    \begin{align}
        \eta_{M_{A, B}}&=\mathsf p_{\substack{u\sim \nu_A\\v\sim\nu_B}}[u^TM_{A, B}v=1] \\ &=\sum_{v:M_{A, B}v\neq 0}\mathsf p_{u\sim \nu_A}[u^TM_{A, B}v=1]\mathsf p_{v\sim\nu_B}[v]\notag\\ &\geq \sum_{v:M_{A, B}v\neq 0}\gamma_A\mathsf p_{v\sim\nu_B}[v]\notag\\ &=\gamma_A\mathsf p_{v\sim \nu_B}[M_{A, B}v\neq 0]\geq \gamma_A\gamma_B.\notag
    \end{align}
    Where in the last inequality we have used that
    \begin{align}
        \mathsf p_{v\sim\nu_B}[M_{A, B}v\neq0]&=\mathsf p_{v\sim\nu_B}[\exists a:a^TM_{A, B}v\neq 0]\\
        &=p_{v\sim\nu_B}\left[\bigcup_{a\in\mathbb F_2^{2{|A|}}} a^TM_{A, B}v\neq 0\right]\notag\\
        &\geq \max_{a\in\mathbb F_2^{2|A|}}p_{v\sim\nu_B}[a^TM_{A, B}v\neq 0]\geq\gamma_B.\notag
    \end{align}
    Thus, we have
    \begin{equation}
        \delta_M\geq\frac{\gamma_A\gamma_B}{2}
    \end{equation}
    or equivalently
    \begin{equation}
        \label{eqn:probability_bound}
        s_M\leq 1-\frac{\gamma_A\gamma_B}{2}.
    \end{equation}
  Let us now relate $\gamma_A$ and $\gamma_B$ to the actual Pauli expectation values. From \Cref{lemma:bell_to_Pauli} and \Cref{eqn:stabilizer_symm_1,eqn:stabilizer_symm_2,eqn:stabilizer_symm_3,eqn:stabilizer_symm_4,eqn:stabilizer_sym_5} it follows that
    \begin{equation}
        \langle S\rangle_{\psi_i}^2=\sum_{q_i\in\mathbb F_2^{2|i|}}(-1)^{l_s(q_i)}p_i(q_i),
    \end{equation}
    for $i=A,B$, and $l_s$ the linear Bell constraint defined by Pauli $S.$ For the difference distributions $\nu_i$ we then have
    \begin{equation}
        \sum_{u\in\mathbb F_2^{2|i|}}(-1)^{l_s(u)}\nu_i(u)=\bigg(\sum_{q_i\in\mathbb F_2^{2|i|}}(-1)^{l_s(q_i)}p_i(q_i)\bigg)^2=\langle S\rangle^4_{\psi_i}.
    \end{equation}
    Moreover, since
    \begin{equation}
        \sum_{u\in\mathbb F_2^{2|i|}}(-1)^{l_s(u)}\nu_i(u)=\mathsf p_{u\sim \nu_i}[l_s(u)=0]-\mathsf p_{u\sim \nu_i}[l_s(u)=1]
    \end{equation}
    we have
    \begin{equation}
        \mathsf p_{u\sim \nu_i}[l_s(u)=1]=\frac{1-\langle S\rangle_{\psi_i}^4}{2}.
    \end{equation}
    Then, defining
    \begin{equation}
        \alpha_A:=\max_{t\neq 0}|\langle S_t\rangle_{\psi_A}|,\qquad \alpha_B:=\max_{t\neq 0}|\langle S_t\rangle_{\psi_B}|,
    \end{equation}
    it holds that
    \begin{eqnarray}
        \gamma_A&=&\frac{1-\alpha_A^4}{2}\\
        \gamma_B&=&\frac{1-\alpha_B^4}{2}\notag
    \end{eqnarray}
    Plugging this in \Cref{eqn:probability_bound} we get
    \begin{equation}
        s_M\leq1-\frac{(1-\alpha_A)(1-\alpha_B)}{8}.
    \end{equation}
    Since both $\alpha_A,\alpha_B\leq 1-\eta$, it holds that
    \begin{equation}
        s_M\leq1-\frac{\eta^2}{8}.
    \end{equation}

    Now, let
    \begin{equation}
        p_c:=\mathsf p_{q\sim\mathsf b_\psi}[Q_M(q)=c],\quad c\in\mathbb F_2.
    \end{equation}
    Then, for two independent samples
    \begin{equation}
        s_M=p_0^2+p_1^2=1-2p_0p_1.
    \end{equation}
    Therefore
    \begin{equation}
        p_0p_1=\frac{1-s_M}{2}\geq\frac{\eta^2}{16}.
    \end{equation}
    For a fixed reference sample $q_0,$ let $c_0=Q_M(q_0).$ The probability that one fresh sample is accepted is then $p_{c_0}$. Since
    \begin{equation}
        1-p_{c_0}=p_{1-c_0}\geq p_0p_1,
    \end{equation}
    we have
    \begin{equation}
        p_{c_0}\leq 1-p_0p_1\leq1-\frac{\eta^2}{16}.
    \end{equation}
    Conditioned on $q_0$, after taking $N$ many samples
    \begin{align}
        \mathsf p[M\in\text{Ker}_{2,N}(\psi)|q_0]&=p_{c_0}^N\\
        &\leq\left(1-\frac{\eta^2}{16}\right)^N\notag\\
        &\leq e^{-N\eta^2/16}.
        \notag
    \end{align}
    In total, there are roughly $2^{O(n^2)}$ many wrong symmetries. Doing the corresponding union bound we have
    \begin{equation}
        \mathsf p_N(\text{wrong}\ M)\leq 2^{O(n^2)}\cdot e^{-N\eta^2/16}
    \end{equation}
    Setting a maximum error probability of $\delta$, means it is enough to use 
    \begin{align}
        \label{eqn:samples_block}
        N\geq O\bigg(\frac1{\eta^2}\big(n^2+\log(\frac{1}{\delta})\big)\bigg)
    \end{align}
    many samples to guarantee that all symmetries in the kernel will be block-diagonal across $A|B$, with probability $1-\delta$. Finally, by \cref{cor:block-diagonalisable} this carries over to state vectors of the form $\ket{\psi}=U_C\ket{\psi_A}\otimes\ket{\psi_B}$, and given $N$ many samples as in \Cref{eqn:samples_block} we can guarantee that all symmetries in $\text{Ker}_{2,N}(\psi)$ are block-diagonalisable across $A|B$ with probability $1-\delta$.
    \end{proof}

Let us stress that while we have proven \Cref{lemma:block_sample} for the case of one single partition $A|B$, it follows immediately for the generic case.
Then, given this guarantee on $\text{Ker}_{2,N}(\psi)$, we can show that \Cref{alg:hidden-clifford-test} will manage to solve both tasks in \Cref{def:testing_task,def:learning_task} efficiently.

\maintesting*

\begin{proof}
    Let us analyze first the situation where $\ket{\psi}$ satisfies Case A.
    \Cref{alg:hidden-clifford-test} proceeds by first performing stabilizer learning of the state vector $\ket{\psi}$, which requires $O(\frac{n+\log(1/\delta)}{\eta})$ many samples. If this returns some stabilizer, we can already guarantee that $\ket{\psi}$ is in Case A. Otherwise, for $N=O\big(\frac{1}{\eta^2}\big(n^2+\log(\frac{1}{\delta})\big)\big)$ many samples, $\text{Ker}_{2,N}(\psi)$ is going to be block-diagonalisable across $A|B$ with probability at least $1-\delta$. When this is satisfied, the SBD algorithm will detect it. In the last step, \Cref{alg:hidden-clifford-test} tests which of the cuts returned by SBD correspond to actual pure subsystems, by checking the respective subsystem purity symmetries. Since $\ket{\psi_A}$ and $\ket{\psi_B}$ are true pure subsystems the algorithm tests positive for these.

    If instead $\ket{\psi}$ belongs to Case B, the presence of approximate stabilizers is already ruled out. Then, $\text{Ker}_{2,N}(\psi)$ might still present some spurious structure such that the SBD subroutine returns some symplectic action and some potential cuts. Nevertheless, since the algorithm then ultimately tests for the purities, the analysis of \Cref{lemma:sample_test_exp} carries over. Namely, for $N'=O\big(\frac{1}{\epsilon^2}\big(n^2+\log(\frac{1}{\delta})\big)\big)$, no wrong symmetry $C^T(\Omega_+^S+d)C\in\text{Ker}_{2,N'}(\psi)$ with probability at least $1-\delta$. Thus, regardless of the outcome of the SBD the algorithm tests negative.

    Overall, setting $\lambda=\min(\eta,\epsilon)$ guarantees that with $N=O\big(\frac{1}{\lambda^2}\big(n^2+\log(\frac{1}{\delta})\big)\big)$, the algorithm succeeds in both cases with probability at least $1-\delta.$

\end{proof}

Analogously, we can prove the task in \Cref{def:learning_task} can also be solved efficiently.
\mainlearning*

\begin{proof}
    We will show that \Cref{alg:hidden-clifford-test} manages to do this. Again, the algorithm first learns the exact stabilizers of $\ket\psi$ using $O(\frac{n+\log(1/\delta)}{\eta})$ many samples. With this, we can efficiently find a Clifford $U_{C'}$ that maps a basis of the stabilizer subgroup to $\{Z_i\}_{i=1}^{d_{S}}$, such that $U_{C'}\ket{\psi}=\ket{\psi'}\otimes\ket{0}^{d_S}$, and where $\ket{\psi'}$ now satisfies
    \begin{equation}
        \max_{S\in\mathcal P_n}|\langle S\rangle_{\psi}|\leq 1-\eta.
    \end{equation}
    With this, we can guarantee that given $N=O\big(\frac{1}{\eta^2}(n^2+\log(\frac{1}{\delta}))\big)$ many samples, $\text{Ker}_{2,N}(\psi')$ is block-diagonalisable across the true cuts $c_i$, and moreover, the corresponding purity symmetries will be in $\text{Ker}_{2,N}(\psi')$, so the algorithm identifies them correctly. While the SBD subroutine might detect additional spurious structure in $\text{Ker}_{2,N}(\psi')$, for $N=O\big(\frac{n}{\epsilon^2}(n^2+\log(\frac{1}{\delta}))\big)$, with probability at least $1-\delta$, \Cref{alg:hidden-clifford-test} will only accept it as part of the learnt description if overall $||\ket{\psi}-U_{\tilde C}\bigotimes_{i=1}^{m'}\ket{\phi_i}_{\tilde c_i}||_2\leq \epsilon$. Otherwise, by \Cref{lemma:sample_learn_exp} we could show that with probability at least $1-\delta$ the corresponding subsystem purities would not be present in $\text{Ker}_{2,N}(\psi')$, and hence the algorithm would dismiss these additional effects. 
    
\end{proof}

\begin{remark}[{[Necessity of the stabilizer gap]}]
    \label{rem:stab_gap} The proofs of \Cref{lemma:block_sample,thm:testing_alg,thm:learning_alg} clarify why the stabilizer gap promise in \Cref{def:testing_task,def:learning_task} is necessary. We have shown that only stabilizer symmetries can obstruct the desired block forms. Structurally, stabilizers and product factors are equivalent up to Cliffords: any stabilizer can be mapped to some $Z_i$, thereby isolating a product qubit. Algorithmically, however, we handle each one in separate stages, so the state entering each stage must satisfy the promises in \Cref{def:testing_task,def:learning_task} exactly. Treating an approximate stabilizer $\tilde S$, with $|\langle\tilde S\rangle|_\psi>1-\eta$, as an exact one might introduce only a small error in the final description, but the residual state $\psi'$ after its removal need not be pure anymore. Since our algorithm is not robust, and purity is essential for the symmetry analysis, the gap is therefore required by our present algorithm. We leave the development of a robust version of our algorithm -- or one that handles stabilizers and products simultaneously -- to future work.
\end{remark}

To wrap up, we can finally show that together with Algorithm 4.19 in Ref.~\cite{bakshi2025learning} we can extend our results to a full state learning algorithm. While the complete analysis of the algorithm was only formulated for the product of single qubit states, this can be straightforwardly extended to the qudit case by accounting for dimensionality factors. Taking this into account we show the overall cost of this task.

\learningstates*

\begin{proof}
    First of all, run \Cref{alg:hidden-clifford-test} on $\ket{\psi}$ to precision $\epsilon_{\text{dis}}=\epsilon/2$, with failure probability $\delta_{\text{dis}}=\delta/2.$ By \Cref{thm:learning_alg}, this can be done with $N_{\text{dis}}=O\left(\frac{1}{\lambda^2}(n^2+\log(\frac{1}{\delta_{\text{dis}}}))\right)$, with $\lambda=\min(\eta,\frac{\epsilon}{2\sqrt{n}})$, in time $\text{poly}(n,\frac{1}{\lambda},\log(\frac{1}{\delta_{dis}}))$.

    With probability $1-\delta/2$ this then returns a Clifford $U_{\tilde C}$ and a partition $\{\tilde{c}_i\}$, such that there is a product state satisfying
    \begin{equation}
        \left|\left|U_{\tilde C}^\dagger\ket\psi-\bigotimes_{i=1}^{m'}\ket{\tilde{\psi}_i}_{\tilde c_i}\right|\right|_2\leq\frac{\epsilon}{2}.
    \end{equation}
    Equivalently, in terms of the fidelity 
     \begin{equation}\label{eq:fidelity-bound-1}
        \left|\langle\psi|U_{\tilde C}\bigotimes_{i=1}^{m'}\ket{\tilde{\psi}_i}_{\tilde c_i}\right|^2\geq \left(1-\frac{\epsilon^2}{8}\right)^2\geq 1-\frac{\epsilon^2}{4}.
    \end{equation}
    In order to run Algorithm 4.19, in the high-fidelity promise regime with precision parameter $\epsilon_{\text{prod}}$, we need to guarantee
    \begin{equation}\label{eq:fidelity-bound-2}
        \left|\langle\psi|U_{\tilde C}\bigotimes_{i=1}^{m'}\ket{\tilde{\psi}_i}_{\tilde c_i}\right|^2\geq \frac{5}{6} + \epsilon_{\text{prod}},
    \end{equation}
    which promises to return a product state vector $\ket{\phi}=\bigotimes_i\ket{\phi_i}_{\tilde c_i}$ such that
    \begin{align}
        |\langle\psi|U_{\tilde C}|\phi\rangle|^2 &\geq \left|\langle\psi|U_{\tilde C}\bigotimes_{i=1}^{m'}\ket{\tilde{\psi}_i}_{\tilde c_i}\right|^2 - \epsilon_{\text{prod}}\\
        &\geq 1-\frac{\epsilon^2}{4} - \epsilon_{\text{prod}}.
    \end{align}
    Then, by setting $\epsilon_{\text{prod}} = \frac{\epsilon^2}{4}$, we get
    \begin{equation}
        |\langle\psi|U_{\tilde C}|\phi\rangle|^2 \geq 1 - \frac{\epsilon^2}{2},
    \end{equation}
    and therefore
    \begin{equation}
        ||\ket{\psi}-U_{\tilde C}\ket{\phi}||_2\leq\sqrt{2-2|\langle\psi|\phi\rangle|^2}\leq\epsilon.
    \end{equation}
    Looking back at \Cref{eq:fidelity-bound-1,eq:fidelity-bound-2}, we have that $\epsilon$ must be in $(0,1/\sqrt{3})$ to satisfy the high-fidelity promise regime. Setting $\delta_{\text{prod}}=\delta/2$ then the total failure probability is smaller than $\delta.$ For $\mathcal E_{\text{dis}}$ and $\mathcal E_{\text{prod}}$ the success events of the disentangling and the state learning procedure respectively (and $\mathcal E_i^c$ their failure events), feeding fresh copies to Algorithm 4.19
    \begin{equation}
        \mathsf p(\mathcal E^c_{\text{prod}}|\mathcal E_{\text{dis}})\leq\frac{\delta}{2}.
    \end{equation}
    Therefore
    \begin{equation}
        \mathsf p(\mathcal E_{\text{dis}}\cap\mathcal E_{\text{prod}})\geq 1-\mathsf p(\mathcal E^c_{\text{dis}})- p(\mathcal E^c_{\text{prod}}|\mathcal E_{\text{dis}})\geq1-\frac{\delta}{2}-\frac{\delta}{2}=1-\delta.
    \end{equation}

    Given this choice of parameters, the learning algorithm in Ref.~\cite{bakshi2025learning} requires $N_{\text{prod}}=O\left(\frac{m'd}{\epsilon^4}\log\frac{1}{\delta}+m'd^2\log\frac{m'}{\delta}\right)$ and runs in time $\text{poly}(m',d,\frac{1}{\epsilon},\log\frac{1}{\delta})$, where the only difference with the single qubit complexity stated in the paper is the dimensional factor involved in the tomography steps.
    Thus, the overall sample complexity is given by
    \begin{equation}
        N=N_{\text{dis}}+N_{\text{prod}}=O\left(\frac{1}{\lambda^2}\left(n^2+\log\frac{1}{\delta}\right)+\frac{m'd}{\epsilon^4}\log\frac{1}{\delta}+m'd^2\log\frac{m'}{\delta}\right),
    \end{equation}
     and the total runtime is 
    \begin{equation}
        t=\text{poly}(n,d,\frac{1}{\lambda},\log\frac{1}{\delta}),
    \end{equation}
    where we have used $m'\leq n$.
\end{proof}

\section{Learning \textit{2n} $T$-doped Clifford unitaries}
\label{sec:Tdoped}
We show here a compelling application of our state learning algorithm. We will show that our method gives rise to a proper learning algorithm for learning T-doped Clifford unitaries with up to $2n$ $T$ when some condition is met. In fact we will show that for $\sim n$ many $T$ gates, a value that is significantly higher than the standard threshold of $O(\log(n))$, this condition is satisfied with high probability. In fact, we will also show that in some specific instances the $2n$ value can rise to even higher number of $T$'s. Before moving on however, note that in this section, since we won't need to use symplectic representation, we will refer to the Clifford unitaries by $C$ instead of $U_C$, in order to ease notation.

Like we already mentioned in the introduction, when applied to Choi states, our learning algorithm can be used to learn unitaries of the form
\begin{equation}
    U=C_1\big(\bigotimes_{i=1}^kU_i\big)C_2
\end{equation}

which have a Choi state of the form:
\begin{equation}
    \ket{\Phi_U}=\big(C_1\big(\bigotimes_{i=1}^kU_i\big)C_2\otimes\mathbb I\big)\ket{\Phi_0}=(C_1\otimes C_2^T)\cdot\bigotimes_{i=1}^k\ket{\Phi_{U_i}}.
\end{equation}
While unitaries of this form could already be learnt \cite{bakshi2025learning}, our algorithm requires only access to the Choi state, thus removing the oracle access requirement to $U$ and $U^\dagger$. More importantly, working at the level of the Choi state, allows us to use certain disentangling techniques from \cite{fux2025disentangling}, which can't be used when just working at the unitary level.

Indeed, it was shown in Ref.~\cite{fux2025disentangling}, that certain families of states can be disentangled and efficiently written as the action of a Clifford on an $MPS$, as long as some conditions are fulfilled. In particular, by mapping to an error correcting setting, they show that some T-doped states satisfy these conditions with high probability as long as the number of $T$ gates stays within a feasible regime. Here, we use their techniques to the specific setting of Choi states of T-doped Cliffords, and show both that their rough $n$ T-gate limit can be extended to $2n$ -- as one would naturally expect-- and that for this particular case the disentangling criterion can be phrased in much simpler algebraic terms, without the need of resorting to error correction. Moreover, we will show that this criterion is satisfied with high probability over all such circuits, and show in which instances this can be made even higher.

In the following lemma we prove our own version of the disentangling procedure in the Choi state setting. While some of this could be derived from \cite{fux2025disentangling} we prove it here from scratch for completeness.

\begin{theorem}[{[Disentangling algebraically independent Pauli rotations]}]
    \label{lemma:disentangling}
    Given a unitary of the form 
    \begin{equation}
        U=C \prod_{i=1}^Ke^{i\theta_iP_i}
    \end{equation}
     with the Paulis $\{P_i\}_{i=1}^K$ being algebraically independent, then its Choi state vector $\ket{\Psi_U}=(U\otimes\mathbb I)\ket{\Phi}$ can be written as
    \begin{equation}
        \ket{\Psi_U}=\tilde{C}\bigg(\bigotimes_{i=1}^K\ket{\phi_i}\bigg)\otimes\ket{0}^{\otimes 2n-K}.
    \end{equation}
\end{theorem}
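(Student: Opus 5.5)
The plan is to push everything onto the Choi state and use the transpose trick to turn the $2n$-qubit Choi system into a symplectic space in which the $K$ independent Paulis can be mapped to single-qubit operators. First I would write
\begin{equation}
    \ket{\Psi_U}=(C\otimes\mathbb I)\prod_{i=1}^K\big(e^{i\theta_iP_i}\otimes\mathbb I\big)\ket{\Phi},
\end{equation}
so that the outer Clifford $C\otimes\mathbb I$ can be absorbed into $\tilde C$. The whole task then reduces to the state $\prod_i e^{i\theta_i(P_i\otimes\mathbb I)}\ket{\Phi}$.

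I take ``algebraically independent'' to mean that the symplectic vectors $p_1,\dots,p_K\in\mathbb F_2^{2n}$ are linearly independent. This forces $K\le 2n$, which matches the statement.

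The key observation is that $\ket{\Phi}$ is a stabilizer state on $2n$ qubits, with stabilizer group $\{P\otimes P^T\}$ up to signs. I extend $\{p_i\}$ to a basis $p_1,\dots,p_{2n}$ of $\mathbb F_2^{2n}$, and consider the $2n$-qubit Paulis $\hat P_i:=P_i\otimes\mathbb I$.

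I would then look for a Clifford $V$ on $2n$ qubits satisfying two conditions: $V^\dagger\hat P_iV=\pm Z_i$ for $i\le K$, and $V^\dagger\ket{\Phi}=\ket{+}^{\otimes K}\otimes\ket{0}^{\otimes(2n-K)}$ up to signs and phases.

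To construct $V$, I would use the stabilizer $\ket{\Phi}$ to move each $\hat P_i$ to a partner. The stabilizers $S_i=P_i\otimes P_i^T$ restrict to $\hat P_i\cdot S_i=\pm\mathbb I\otimes P_i^T$ on $\ket\Phi$. Because $\{p_i\}$ is a basis, the generators $S_1,\dots,S_{2n}$ generate the full stabilizer group of $\ket\Phi$.

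The needed commutation structure comes from the commutation relations on the doubled space. The operators $\hat P_i$ for $i\le K$ pairwise inherit the commutation pattern of the $P_i$. However, each $\hat P_i$ commutes or anticommutes with $S_j$ according to $\langle p_i,p_j\rangle$. By the non-degeneracy of $\Omega$ and the independence of the $p_i$, I can choose dual vectors $r_j$ with $\langle p_i,r_j\rangle=\delta_{ij}$. The stabilizers $S_{r_j}=R_j\otimes R_j^T$ then form a new generating set for $\ket\Phi$ in which $S_{r_j}$ anticommutes exactly with $\hat P_j$ and commutes with the other $\hat P_i$.

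The difficulty is that the $\hat P_i$ themselves need not commute with each other. So the target should not be $Z_i$ but a symplectic-pair structure, and here I would adjust the argument. On $\ket\Phi$ we have $\hat P_i\ket\Phi=\pm(\mathbb I\otimes P_i^T)\ket\Phi$. Hence each rotation can be rewritten as acting on the second copy, $e^{i\theta_iP_i\otimes\mathbb I}\ket\Phi=e^{i\theta_i\,\mathbb I\otimes P_i^T}\ket\Phi$ up to sign conventions. I would exploit this to place rotations alternately on the two tensor factors. Combining the doubled operators $P_i\otimes\mathbb I$ with the stabilizers, I would find $K$ \emph{mutually commuting}, independent $2n$-qubit Paulis $Q_i$. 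Each $Q_i$ agrees with $\hat P_i$ on the relevant state after the earlier rotations, and the family $Q_1,\dots,Q_K$ together with $2n-K$ stabilizers $S'$ of $\ket\Phi$ commuting with all $Q_i$ forms a commuting independent set.

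A Clifford $V$ mapping $Q_i\mapsto Z_i$ and the remaining generators to $X$-type/$Z$-type operators on the last $2n-K$ qubits then turns each rotation into a single-qubit $e^{i\theta_i Z_i}$. It also turns $\ket\Phi$ into a product of single-qubit stabilizer states. This yields the product form, with $\ket{\phi_i}=e^{i\theta_iZ_i}\ket{\pm}$ or similar on the first $K$ qubits and the stabilizer-fixed $\ket 0^{\otimes 2n-K}$ on the rest.

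The main obstacle is this commuting-representative step. I must show that, after conjugating the earlier rotations through and using the freedom to move each $P_i$ to the transposed side or multiply by stabilizers, there is a choice making all $K$ representatives pairwise commuting while each still anticommutes with a distinct stabilizer. I would attempt it by an explicit Gram–Schmidt-style symplectic procedure on $\mathbb F_2^{4n}$, with linear independence guaranteeing the absence of degeneracies. If this fails in general, I would instead proceed inductively on $K$, peeling off one rotation at a time with the dual stabilizer $S_{r_K}$ and a two-qubit Clifford.
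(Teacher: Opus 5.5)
Everything before your ``main obstacle'' is correct setup: absorbing $C\otimes\mathbb I$, $K\le 2n$, the Lagrangian $L=\{(r,r)\}$ of $\ket\Phi$, and dual vectors from non-degeneracy. But the obstacle you flag is the whole content of the theorem, and the proposal does not resolve it. The Choi state is Clifford-equivalent to a product exactly when there are pairwise commuting Paulis $Q_1,\dots,Q_K$ with $\prod_j e^{i\theta_j\hat P_j}\ket\Phi=\prod_j e^{i\theta_jQ_j}\ket\Phi$. For this you only say you ``would attempt'' a Gram--Schmidt-type procedure, with a fallback in case it fails.

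The concrete mechanism you do suggest does not work. The transpose trick $(A\otimes\mathbb I)\ket\Phi=(\mathbb I\otimes A^T)\ket\Phi$ applies only to the operator acting directly on $\ket\Phi$. It therefore lets you move a \emph{suffix} of the product to the second factor, but not place non-commuting rotations ``alternately''. For example, with three rotations where $P_1,P_2$ and $P_2,P_3$ anticommute, no split gives commuting generators. You also need to be precise about which stabilizers may be absorbed. When the $j$-th rotation acts, the state is no longer $\ket\Phi$. A stabilizer of $\ket\Phi$ can be absorbed into $\hat P_j$ only if it commutes with $\hat P_j$ and with all previously applied modified generators $Q_k$, $k<j$.

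The missing step is a triangular linear system, solvable precisely because the $p_k$ are independent.
\begin{enumerate}
\item Label rotations in order of application: $\ket{\chi_0}=\ket\Phi$ and $\ket{\chi_j}=e^{i\theta_j\hat P_j}\ket{\chi_{j-1}}$.
\item Choose $r_j$ with $\langle p_k,r_j\rangle=0$ for $k\le j$ and $\langle p_k,r_j\rangle=\langle p_k,p_j\rangle$ for $k>j$. This is possible since $r\mapsto(\langle p_k,r\rangle)_{k\le K}$ is onto.
\item Put $Q_j=\hat P_j\,(R_j\otimes R_j^T)$, so that $q_j=(p_j+r_j,r_j)$.
\end{enumerate}
These $Q_j$ have the required properties:
\begin{itemize}
\item $Q_j$ is a Hermitian Pauli, since $\langle p_j,r_j\rangle=0$.
\item $R_j\otimes R_j^T$ commutes with every $Q_k$, $k<j$, so it stabilizes $\ket{\chi_{j-1}}$. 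Inductively this gives $e^{i\theta_j\hat P_j}\ket{\chi_{j-1}}=e^{i\theta_jQ_j}\ket{\chi_{j-1}}$.
\item For $k>j$, $\langle q_k,q_j\rangle=\langle p_k,p_j\rangle+\langle p_k,r_j\rangle+\langle r_k,p_j\rangle=0$.
\item Multiplying by elements of $L$ does not change commutation with $L$, so your dual stabilizers still work.
\item $\sum_{j\in T}q_j\in L$ forces $\sum_{j\in T}p_j=0$. Hence the span of the $q_j$ meets $L$ trivially, and the stabilizers commuting with all $Q_j$ form a subspace of dimension exactly $2n-K$.
\end{itemize}
Your Clifford $V$ then exists and the proof closes.

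Completed this way, your argument is a genuinely different route from the paper's. The paper conjugates by the Bell-preparation Clifford $C_\Phi$ and applies Fux et al.'s disentanglers $e^{i\pi/4(Z_j-\mathbb I)\otimes\tilde Q}$ one rotation at a time. Each one maps a rotation to a single-qubit $Y$-rotation on a fresh $\ket 0$ qubit, and independence is used to show such a qubit always exists. Your version produces one Clifford at once and isolates exactly where independence enters.
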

\begin{proof}
    The proof relies on efficiently finding disentanglers that map the multi-qubit Paulis $\{P_i\}_{i=1}^K$ to single qubit Paulis $\{Y_i\}_{i=1}^K$ \textit{wlog}. To that end we show that by exploiting the algebraically independent condition (which can hold up to $K=2N$) one can always find Clifford disentanglers of the form introduced in Ref.~\cite{fux2025disentangling}.

    Assume we have $K=2n$. Then 
    \begin{equation*}
        \langle\{P_i\}_{i=1}^{2n}\rangle=\langle\{X_i,Z_i\}_{i=1}^{n}\rangle, 
    \end{equation*}
    where $\langle\cdot\rangle$ means the algebraic span. On the other hand, the Choi state can be written as
    \begin{equation}
        \ket{\Psi_U}=(C\cdot\prod_{i=1}^{2n}e^{i\theta_iP_i}\otimes\mathbb I)\cdot C_{\Phi}\ket{0}^{\otimes 2n},
    \end{equation}
    where $C_{\Phi}$ is the Clifford preparing the Bell state vector $\ket{\Phi}.$ In the Pauli basis the action of $C_{\Phi}$ is given by
    \begin{align}
        X_i&\mapsto X_i\\
        Z_i&\mapsto Z_i\otimes X_{n+i}.\notag
    \end{align}
    Hence by conjugating the Pauli rotations through $C_{\Phi_0}$, $\ket{\Psi_U}$ becomes
    \begin{equation}
        \ket{\Psi_U}=C\cdot C_{\Phi}\cdot\prod_{i=1}^{2n}e^{i\theta_i\tilde{P_i}}\ket{0}^{\otimes 2n}
    \end{equation}
    where
    \begin{equation}
        \label{eqn:pauli_span}
        \langle\{\tilde{P}_i\}_{i=1}^{2n}\rangle=\langle\{X_i,Z_i\otimes X_{n+i}\}_{i=1}^{n}\rangle.
    \end{equation}
    With this, we now state how to iteratively perform the Clifford disentangling of the Paulis. Take Pauli $\tilde{P}_1$. Given \Cref{eqn:pauli_span}, we know that there is some register $j$ where $\tilde{P}_1^{(j)}=X_j$. If that would not be the case then $\tilde{P}_1=\bigotimes_{i\in S}Z_i\notin \langle\{X_i,Z_i\otimes X_{n+i}\}_{i=1}^{2n}\rangle$. Then $wlog$ we can assume $j=1$ and
    \begin{equation}
        \tilde P_1=X_1\otimes \tilde Q_1
    \end{equation}
   
    Now, it is easy to check that the following holds:
    \begin{equation}
        e^{i\theta_1 X_1\otimes\tilde{Q}_1}=e^{i\pi/4 Z_1\otimes\tilde{Q}_1}\cdot e^{i\theta_1 Y_1\otimes\mathbb I}\cdot e^{-i\pi/4 Z_1\otimes\tilde{Q}_1}
    \end{equation}
    and consequently
    \begin{align}
        \ket{\Psi_U}&=\dots  e^{i\theta_1X_1\otimes\tilde{Q}_1}\ket{0}^{\otimes 2n}\\
        &= \dots e^{i\pi/4 Z_1\otimes\tilde{Q}_1}\cdot e^{i\theta_1 Y_1\otimes\mathbb I}\cdot e^{-i\pi/4 Z_1\otimes\tilde{Q}_1}\ket{0}^{\otimes 2n}\notag\\
        &=\dots e^{i\pi/4 Z_1\otimes\tilde{Q}_1}\cdot e^{i\theta_1 Y_1\otimes\mathbb I}\cdot e^{-i\pi/4 \hspace{0.1cm}\mathbb I_1\otimes\tilde{Q}_1}\ket{0}^{\otimes 2n}\notag\\
        &= \dots e^{i\pi/4 (Z_1-\mathbb I)\otimes\tilde{Q}_1}\cdot e^{i\theta_1 Y_1\otimes\mathbb I}\ket{0}^{\otimes 2n}.\notag
    \end{align}
    Where in the third equality we have used that the disentangler acts on the $\ket{0}^{\otimes 2n}$ state, and $Z\ket{0}=\ket{0}$. In the last equality we then used that $[Z_1\otimes\tilde{Q}_1,\mathbb I_1\otimes\tilde{Q}_1]=0$.
    After these changes, the Choi state becomes
    \begin{equation}
        \ket{\Psi_U}=C\cdot C_{\Phi}\cdot\prod_{i=1}^{2n}e^{i\theta_i\tilde{P_i}}\ket{0}^{\otimes 2n}=C\cdot C_{\Phi}\cdot C^d_{1}\prod_{i=2}^{2n}e^{i\theta_i\tilde{\tilde{P}}_i}\ket{\phi_1}\ket{0}^{\otimes 2n-1}
    \end{equation}
    where $C^d_1=e^{i\pi/4 (Z_1-\mathbb I)\otimes\tilde{Q}_1}$ is another Clifford, and $\tilde{\tilde{P}}_i=C_1^{d\dagger}\tilde{P}_iC_1^d$. We stress that these steps follow the procedure in Ref.~\cite{fux2025disentangling}, but where the specific Clifford $C_\Phi$ has been appended at the end.
    
    Then, in order to see how to bring the Choi state into the desired form we are just left to prove that this procedure can be repeated iteratively. For this to be the case, we must be able to find, at each step, a register $j$ that is still in the $\ket{0}$ state, and where the Pauli to be disentangled does not act trivially with $Z_j$.
    
    In order to see that this is the case, imagine that we disentangled $\tilde{P_1}$, which had Pauli X in register $1.$ After conjugating $\{\tilde P_i\}_{i=2}^{2n}$ through $C_1^d$, they result in $\{\tilde{\tilde P}_i\}_{i=2}^{2n}$. Now, let the set $\{P'_i\}_{i=2}^{2n}$ correspond to the $\tilde{\tilde P}_i$'s, after removing qubit 1, i.e.,
    \begin{equation}
        P_i'=\tilde{\tilde P}_i^2\otimes\dots\otimes \tilde{\tilde P}_i^{2n}.
    \end{equation}
    Then we can show that $\{P'_i\}_{i=2}^{2n}$ are still algebraically independent, and their span is $\langle\{X_{n+1}\}\oplus\{Z_i\otimes X_{n+i},X_i\}_{i=2}^{n}\rangle$.
    Assume they are algebraically dependent, i.e.,
    \begin{equation}
        \prod_{i=2}^{2n}P_i'=\mathbb I.
    \end{equation}
    This could only be the case if one of the two holds
    \begin{align}
        \prod _{i=2}^{2n}\tilde{\tilde P}_i&=\mathbb I\\
        &\text{or}\notag\\
        \prod _{i=2}^{2n}\tilde{\tilde P}_i&=R_1
    \end{align}
    i.e., either their preimage (i.e., the Paulis before removing qubit 1) was already algebraically dependent, or if it was not, the product was $\mathbb I$ in all positions, but qubit 1. Since the first case can't happen by assumption, we are left to prove that
    \begin{equation}
        \label{eqn:prod_q1}\prod_{i=2}^{2n}\tilde{\tilde P}_i=R_1,
    \end{equation}
    also leads to contradiction. However \Cref{eqn:prod_q1} implies that
    \begin{equation}
        \prod_{i=2}^{2n}\tilde{P}_i=C_1^dR_1C_1^{d\dagger}=e^{i\pi/4 (Z_1-\mathbb I)\otimes\tilde{Q}_1}R_1e^{-i\pi/4 (Z_1-\mathbb I)\otimes\tilde{Q}_1}.
    \end{equation}
    
    From this we can show that any choice of $R_1$ leads to contradiction.
    \begin{enumerate}
        \item For $R_1=Z_1$ we would have
        \begin{equation}
        \prod_{i=2}^{2n}\tilde{P}_i=Z_1.
    \end{equation}
    which is not in the algebra spanned by $\{\tilde P_i\}_{i=1}^{2n}$.
    \item For $R_1=X_1$ 
    \begin{equation}
        \prod_{i=2}^{2n}\tilde{P}_i=Y_1\otimes\tilde Q_1.=\tilde P_1\cdot Z_1
    \end{equation}
    which is again a contradiction by the fact that $Z_1$ is not in the algebra spanned by the Paulis.
    \item For $R_1=Y_1$
    \begin{equation}
        \prod_{i=2}^{2n}\tilde{P}_i=X_1\otimes\tilde Q_1.=\tilde P_1
    \end{equation}
    which is a contradiction by the algebraic independence of the Paulis.
    \end{enumerate}
    Hence, we have shown that after conjugating the Paulis through the disentangler $C_1^d$, we can virtually remove qubit 1, and consider the Paulis to be algebraically independent on a $2n-1$ qubit register, where they span $\langle\{X_{n+1}\}\oplus\{Z_i\otimes X_{n+i},X_i\}_{i=2}^{n}\rangle$. Similarly, assume that after this we use qubit $n+1$ to disentangle $\tilde{\tilde P}_2$. Then after repeating the process and removing qubit $n+1$, by the same argument as before, the Paulis remain algebraically independent, and the algebra they span is $\langle\{Z_i\otimes X_{n+i},X_i\}_{i=2}^{n}\rangle$. Thus, since the state on qubits $[2n]\backslash \{1,n+1\}$ is still $\ket{0}^{\otimes 2n-2}$, and the algebra spanned by the remaining Paulis is the same as at the initial step, this procedure can be repeated until every Pauli is disentangled.

    The case where $K<2n$ immediately follows from this. In that case the algebra spanned by the Paulis satisfies
    \begin{equation}
        \langle \{ P_i\}_{i=1}^K\rangle \subset \langle \{ X_i,Z_i\}_{i=1}^n\rangle 
    \end{equation}
    and the state has some stabilizers which can directly be disentangled by mapping them to $Z_i$ on untouched qubits. The proof proceeds as before, removing those qubits and starting the iteration at a later step on the smaller subsystem.
 
\end{proof}

T-doped Clifford unitaries can always be written as $C\prod_{i=1}^Ke^{i\pi/8P_i}$ by moving the $T$ gates through the Cliffords. Then, it follows, that for T-doped Cliffords satisfying the condition in \Cref{lemma:disentangling}, our procedure manages to efficiently learn a full description of its Choi state . Before moving forward and showing how to lift this to a proper learning algorithm, it is useful to notice that in this setting algebraic dependencies in the Paulis limit our disentangling capabilities, and hence play the role of the logical operators as described in Ref.~\cite{fux2025disentangling}. Then, an algebraic dependency of the form
\begin{equation}
    P_{2n+1}=P_1\cdot\dots\cdot P_{2n}
\end{equation}
would be fatal for our algorithm, since this would kill all product structure across all cuts (even if the entanglement it creates is bounded). However, as long as the algebraic dependencies stay bounded in $O(\log(n))$-sized sectors, our algorithm still works efficiently even for $T$-counts higher than $2n$. Indeed, if 
\begin{equation}
    P_{2n+1}=\prod_{i=1}^{r}P_i
\end{equation}
with $r=O(\log(n))$ then
\begin{equation}
    \ket{\Phi_U}=\big(C\prod_{i=1}^{2n+1}e^{i\pi/8P_i}\otimes\mathbb I\big)\ket{\Phi_0}=\tilde C\ket{\phi_R}\bigotimes_{i=r+1}^n\ket{\phi_i}
\end{equation}
with $\ket{\phi_R}$ supported on $r=O(\log(n))$ qubits. Thus, for such an instance, our algorithm could still enable efficient learning of the state.

 Having said this, for the $2n$ algebraic independent setting, we will show how to lift the Choi state learning algorithm to a proper learning algorithm that recovers a description of $U$ in terms of an $n$-qubit Clifford, and $2n$ Pauli rotations. 

\begin{lemma}[{[Proper recovery from a learned Choi state]}]
\label{lemma:properlearning}    
 Given a description of the Choi state of a unitary $U=C\prod_{i=1}^Ke^{i\theta_iP_i}$ as in \Cref{lemma:disentangling}, i.e., in terms of some $2n$-qubit Clifford $\tilde{C}$ acting on some state vector $\ket{\psi}=\bigotimes_{i=1}^{2n}\ket{\phi_i},$ with $\ket{\phi_i}=e^{i\theta_i\tilde R_i}\ket{\sigma_i}$, whenever the $\theta_i$'s are such that the conditions in \Cref{def:learning_task} are satisfied, a proper description of $U$ in terms of $(C,\{\theta_i,P_i\}_{i=1}^K)$ (up to local gauge freedom) can be efficiently recovered.

\end{lemma}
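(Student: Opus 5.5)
The plan is to invert the disentangling construction of \Cref{lemma:disentangling}, using only the learned data: the Clifford $\tilde C$, the partition into single-qubit factors, and the factor states $\ket{\phi_i}$.

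\textbf{Step 1: read off the rotation axes and angles.} By \Cref{lemma:disentangling},
\begin{equation}
    \ket{\Psi_U}=\tilde C\Big(\bigotimes_{i=1}^K e^{i\theta_i Y_i}\ket{0}_i\Big)\otimes\ket{0}^{\otimes 2n-K}.
\end{equation}
So each learned single-qubit factor is a rotation of a stabilizer state about an axis $\tilde R_i$ orthogonal to it. From single-qubit tomography of $\ket{\phi_i}$ I would read off a pair $(\tilde R_i,\ket{\sigma_i})$ and an angle $\theta_i$.

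This pair is fixed only up to local single-qubit Clifford gauge: the signed choices of $\tilde R_i$, the reflections $\theta_i\mapsto\pm\theta_i$, and $\theta_i\mapsto\theta_i+\pi/2$. I absorb that gauge into a local Clifford $L_i$, replacing $\tilde C$ by $\tilde C\bigotimes_i L_i$. After this, every factor has the canonical form $e^{i\theta_iY_i}\ket{0}$.

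The promise in \Cref{def:learning_task} guarantees that no factor is itself a stabilizer state. In particular $\theta_i\notin\tfrac{\pi}{4}\mathbb Z$, so the axis is uniquely determined. In the $T$-doped case the angles lie in a finite, separated set, so rounding recovers $\theta_i$ exactly.

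\textbf{Step 2: rewrite the Choi state as Cliffords acting on rotations.} Commuting each $e^{i\theta_iY_i}$ through $\tilde C$ gives
\begin{equation}
    \ket{\Psi_U}=\prod_i e^{i\theta_i\hat P_i}\,\tilde C\ket{0}^{\otimes 2n},
    \qquad \hat P_i=\tilde C Y_i\tilde C^\dagger .
\end{equation}
Each $\hat P_i$ can be computed efficiently from the tableau of $\tilde C$.

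The state $\tilde C\ket{0}^{\otimes 2n}$ is a stabilizer state. I must check that it is the Choi state of an $n$-qubit Clifford, i.e.\ maximally entangled across the input/output cut. This follows by comparing with the true decomposition $\ket{\Psi_U}=(C\otimes\mathbb I)\prod_i e^{i\theta_i\tilde P_i}\ket{\Phi}$. The product factorisation and axes are unique up to the gauge in Step 1, so after fixing that gauge the two stabilizer states must agree up to the ordering of the rotations.

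\textbf{Step 3: move everything to the input side.} I would push the rotations to the input side using the transpose trick $(\mathbb I\otimes A)\ket{\Phi}=(A^T\otimes\mathbb I)\ket{\Phi}$. This requires each $\hat P_i$ to be expressible through the Bell state as an operator acting on the first register. I then read $C$ from the stabilizer tableau of $\tilde C\ket{0}^{\otimes 2n}$, and set $P_i=C^\dagger\bar P_i C$, where $\bar P_i$ is the first-register representative of $\hat P_i$.

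\textbf{Main obstacle.} The delicate point is the \emph{ordering and non-commutativity} of the rotations. The $\hat P_i$ obtained in Step 2 mutually commute, since they are conjugates of single-qubit $Y_i$ on distinct qubits. The original $P_i$ in general do not commute. The reconstruction must therefore exactly undo the sequential disentanglers $C_j^d$ of \Cref{lemma:disentangling}, rather than merely commute rotations past each other.

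My first attempt would be to track the iteration directly. At step $j$, the disentangler $C^d_j$ conjugates the later Paulis, so the original ordered family is obtained by recursively conjugating back:
\begin{equation}
    P_j=(C_{j-1}^d\cdots C_1^d)\,Y_j\,(C_{j-1}^d\cdots C_1^d)^\dagger .
\end{equation}
Each $C^d_j$ is defined solely from the disentangled Pauli. This should hold only modulo the residual gauge freedom.

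Finally, I would bound the error. If each $\theta_i$ is estimated to accuracy $\epsilon/K$, the diamond-norm error of the reconstructed unitary is at most $\epsilon$, by telescoping over the $K$ rotations.
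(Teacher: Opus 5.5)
\textbf{There is a genuine gap in Step 3 and in how you handle the ordering.} Replacing each commuting $\hat P_i$ by its representative $\bar P_i=\hat P_i S^{(i)}$ on qubits $1,\dots,n$ is only automatic in one case. Here $S^{(i)}$ is a stabilizer of $\ket{\Gamma}=\tilde C\ket{0}^{\otimes 2n}$, and the automatic case is a rotation acting directly on $\ket{\Gamma}$. For every later rotation, $S^{(i)}$ must still stabilize the partially rotated state, e.g.\ by commuting with every $\bar P_{i'}$ already applied. You do not show that an order with this property exists, and you give no way to find one.

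Your proposed remedy, $P_j=(C^d_{j-1}\cdots C^d_1)Y_j(C^d_{j-1}\cdots C^d_1)^\dagger$, is circular. The disentanglers of \Cref{lemma:disentangling} are built from the very Paulis you are trying to recover. The learner only holds the aggregate $\tilde C$, up to local Cliffords, together with an unordered set of single-qubit blocks. It has no record of which qubit was disentangled at which step.

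\textbf{The missing ingredient is the paper's structural identity.} In the learned frame,
$\tilde C^\dagger P_i\tilde C=\tilde R_i\cdot K_{i+1}^{q_{i+1}(i)}\otimes\cdots\otimes K_{2n}^{q_{2n}(i)}$, where $K_j$ stabilizes $\ket{\sigma_j}$. So each true Pauli differs from the learned axis $\tilde C\tilde R_i\tilde C^\dagger$ only by stabilizers $S_j=\tilde C K_j\tilde C^\dagger$ of later factors. The paper adds that no nontrivial product of the $S_j$ is supported on qubits $1,\dots,n$ alone, which follows from algebraic independence of $\{\tilde R_i,K_i\}$. Together these make $P_i$ the unique solution of a linear system, which is exactly what your ``first-register representative'' computes. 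The triangular pattern $j>i$ is what guarantees that the correcting $S_j$ commute with all earlier $P_{i'}$. That in turn makes the sequential replacement valid when rotations are applied in order of increasing $i$.

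\textbf{Your Step 2 justification is also wrong as stated.} The gauge fixing is not canonical: $e^{i\pi/8Y}\ket{0}=e^{-i\pi/8Y}\ket{-}$. Hence $\tilde C$ and $\tilde C G_i$, with $G=(Z-X)/\sqrt{2}$, yield identical canonical factors $e^{i\pi/8Y}\ket{0}$ but different stabilizer states $\tilde C\ket{0}^{\otimes 2n}$. These need not equal $(C\otimes\mathbb I)\ket{\Phi}$. For $K<2n$, the $\ket{0}$ factors moreover admit non-local Clifford freedom, e.g.\ a CNOT controlled on a $\ket{0}$ qubit.

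Maximal entanglement of $\tilde C\ket{0}^{\otimes 2n}$ does hold for every such choice, but by the paper's argument rather than by uniqueness. A stabilizer $\prod_{i\in T}S_i=V\otimes\mathbb I$ would have expectation $\prod_{i\in T}\bra{\phi_i}K_i\ket{\phi_i}\neq 0$ on $\ket{\Psi_U}$. That contradicts the maximally mixed marginals of a Choi state.
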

\begin{proof}
    Keep in mind that the $2n$-qubit Clifford that we would learn from the description in \Cref{lemma:disentangling} is of the form $\tilde{C}=\left(C\cdot C_{\Phi_0}\cdot\prod_{i=1}^{2n}C^d_i\right)\cdot\bigotimes_{j=1}^{2n}C_j^{loc}$, where the disentangling Cliffords are $C^d_i=e^{i\pi/4\cdot Z_i\otimes Q_i}\cdot e^{-i\pi/4\cdot\mathbb I_i\otimes Q_i}$, and where $\bigotimes_{j=1}^{2n}C_j^{loc}$ accounts for the local Clifford gauge freedom in our learning algorithm.

    By construction we have that
    \begin{equation}
        \bigl(\prod_{j=1}^{i}C^{d\dagger}_j\bigr)\cdot C_{\Phi_0}^\dagger\cdot C^\dagger P_i C\cdot C_{\Phi_0}\cdot\bigl(\prod_{j=1}^{i}C^d_j\bigr)=R_i,
    \end{equation}
    so that
    \begin{equation}
        \label{eqn:disentangler_Evol}
        \tilde{C}^\dagger P_i\tilde{C}=\left(\bigotimes_{j=1}^{2n}C_j^{loc}\right)^\dagger\bigl(\prod_{j=i+1}^{2n}C^{d\dagger}_j\bigr)R_i\bigl(\prod_{j=i+1}^{2n}C^d_j\bigr)\left(\bigotimes_{j=1}^{2n}C_j^{loc}\right)=\tilde R_i\cdot 
        K_{i+1}^{q_{i+1}(i)}\otimes\dots\otimes K_{2n}^{q_{2n}(i)}
    \end{equation}
    where the $K_i$'s are of the form ${C_i^{loc}}^\dagger Z_i C_i^{loc}$ and the $q_j(i)$'s are some convoluted functions of commutators between Paulis in different registers. Indeed, this follows since any Pauli $W$ with trivial support on qubit $j$ (which is the case in \Cref{eqn:disentangler_Evol}) satisfies $C_j^{d\dagger}WC_j^d=W$ if $[T,Q_j]=0$ and $C_j^{d\dagger}WC_j^d=W\cdot Z_j$ if $\{T,Q_j\}=0$, so that after going through all the disentanglers, $R_i$ will pick up some $Z$ terms according to the commutation relations between Paulis. These $Z$ terms are then mapped to some other Pauli on qubit $i$ by the local Cliffords.
    
    Hence, we have that
   
    \begin{equation}
        \label{eqn:single_qubit_preimage}
        \tilde C\tilde R_i\tilde C^\dagger = P_i\cdot \prod_{j=i+1}^{2n}\tilde{C}K_j^{q_j(i)}\tilde C^\dagger.
    \end{equation}
    We can now check that, while all $P_i$'s are supported on qubits $1\dots n$, $\tilde C K_j\tilde C^\dagger$ need to have non-trivial support on qubits $n+1\dots 2n$. Assume there is some $K_A:=\prod_{a\in A}K_a$ that is supported only on qubits $1\dots n$ after conjugating with $\tilde C$. Since $\{P_i\}_{i=1}^{2n}$ generates the full Pauli group on $n$-qubits we could write 
    \begin{equation}
        \tilde CK_A\tilde C^\dagger=\prod_{b\in B}P_b,
    \end{equation}
    for some $S\subseteq[2n]$. However, then
    \begin{equation}
        \label{eqn:K_algebraic_dependence}
        K_A= \prod_{b\in B}\tilde C^\dagger P_b\tilde C = \prod_{b\in B}\tilde R_b\cdot (K_{b+1}^{q_{b+1}(b)}\otimes\dots\otimes K_{2n}^{q_{2n}(b)})=\prod_{a\in A}K_a
    \end{equation}
    and thus
    \begin{equation}
        \prod_{b\in B}\tilde R_b=\prod_{a\in A}K_a\cdot\tilde K,
    \end{equation}
    where $\tilde K$ is the product of $K_j$ terms in \Cref{eqn:K_algebraic_dependence}. Nevertheless, this can't be the case, since the operators in the set $\{\tilde R_i,K_i\}_{i=1}^{2n}$ are algebraically independent.

    With this observation it is then clear how to retrieve the set $\{P_i\}_{i=1}^{2n}$ in this case. Namely, after running our learning algorithm the single qubit states are of the form $\ket{\phi_i}=e^{i\theta_i\tilde R_i}\ket{\sigma_i},$ with $\ket{\sigma_i}=K_i\ket{\sigma_i}.$ Performing single qubit state tomography can then recover $\{\tilde R_i,K_i\}_{i=1}^{2n}.$ Then, after computing the preimage of these single qubit Paulis under $\tilde C$, we can obtain the $P_i$'s from \Cref{eqn:single_qubit_preimage} by imposing that their support on qubits $n+1\dots 2n$ needs to be trivial, which can be done by solving a definite system of $2n$ linear equations.

    On the other hand, in order to retrieve the $n$-qubit Clifford we can do as follows. Having learnt the $2n-$qubit Clifford $\tilde C$ and $\ket{\sigma_i},$ with their corresponding $K_i$'s, we can construct the state vector $\ket{\Gamma}=\tilde C\bigotimes_{i=1}^{2n}\ket{\sigma_i}$, which is a stabilizer state with stabilizers of the form $S_i=\tilde CK_i\tilde C^{\dagger}.$ This state can now be shown to be the Choi state of an $n$-qubit Clifford. Indeed, assume $\ket{\Gamma}$ is not maximally entangled across the input-output cut. Since $\ket{\Gamma}$ is a stabilizer, there would be a nonidentity stabilizer supported only on the output registers:
    \begin{equation}
        \prod_{i\in T}S_i=V\otimes \mathbb I
    \end{equation}
    for some nonempty $T$ and $V\neq\mathbb I$. However, its expectation on the learnt state $\tilde C\ket{\psi}$ is
    \begin{align}
        \bra{\psi}\tilde C^\dagger V\otimes\mathbb I\tilde C\ket{\psi}&=\bra{\psi}\tilde C^\dagger \prod_{i\in T}S_i\tilde C\ket{\psi}\\
        &=\prod_{i\in T}\bra{\phi_i}K_i\ket{\phi_i}\notag\\
        &\neq0\notag.
    \end{align}
    But since $\tilde C\ket{\psi}$ is the Choi state of a unitary and it is fully entangled across the input-output cut, its marginals are maximally mixed. This implies
    \begin{equation}
        \bra{\psi}\tilde C^\dagger V\otimes\mathbb I\tilde C\ket{\psi}=0,
    \end{equation}
    which is a contradiction. Thus, $\ket{\Gamma}$ is maximally entangled across input-output qubits. Since it is also a stabilizer state,
    \begin{equation}
        \ket{\Gamma}=C'\otimes\mathbb I\ket{\Phi_0},
    \end{equation}
    where $C'$ is equivalent to $C$ up to local Clifford action, and hence we have recovered the proper description up to the gauge freedom. Again, while we have proved this for $K=2n$, for $K<2n$, the learnt Choi state will showcase some stabilizers which can be mapped to distinct registers, and just like before, the argument proceeds in the same way.

\end{proof}

Finally, we show that for $T$-doped Clifford circuits with $2n$ many $T$ gates, the condition that the corresponding Paulis are algebraically independent is satisfied with high probability, and hence our algorithm works for most instances in this regime.
\independentpauliprobabilities*

\begin{proof}
    The proof is straightforward. Since the Clifford gates are assumed independent, the corresponding Paulis are independent and uniformly distributed over $\mathbb F_2^{2n}\backslash\{0\}$. Suppose the first $i-1$ Paulis are independent. Their span contains $2^{i-1}$ vectors, of which $2^{i-1}-1$ are nonzero. Therefore, conditioned on the previous labels being independent, the probability that the $i$-th label is not in their span is
    \begin{equation}
        p_i=1-\frac{2^{i-1}-1}{4^n-1}=\frac{4^n-2^{i-1}}{4^n-1}.
    \end{equation}
    Then, for $k\leq 2n$, the probability that all $k$ Paulis are independent is
    \begin{equation}
        p_{n,k}=\prod_{j=0}^{k-1}\left(1-\frac{2^j-1}{4^n-1}\right).
    \end{equation}
\end{proof}

\probabilities*

\begin{proof}
    The proof follows directly by plugging the corresponding $T$-counts in \Cref{eqn:success_p}
\end{proof}

Note that our proofs worked not just for $T$ gate doping, but arbitrary Pauli rotations with some angle $\theta$ and hence the algorithm from \Cref{cor:state_learning} also works for learning efficiently more generic doped circuits, in particular when $\{\theta_i\}$ are such that the conditions in \Cref{def:learning_task} are satisfied.\\

{\it Acknowledgements.} We thank Jose Carrasco, Salvatore F.E. Oliviero, Louis Schatzki and Boren Gu for useful discussions. This work has been supported by the BMBF (PhoQuant, QPIC, Hybrid++, DAQC, QSolid), the BMWK (EniQmA), the DFG (CRC 183), the Quantum Flagship (PasQuans2, Millenion), the Munich Quantum Valley, Berlin Quantum, and the European Research Council.

\nocite{apsrev42Control}
\bibliographystyle{apsrev4-2}

\bibliography{biblio}

\end{document}